%% file: main.tex
\documentclass[11pt]{article}

\usepackage[utf8]{inputenc}
\usepackage{float}
\usepackage{algorithm}
\usepackage{algpseudocode}

\usepackage{amsmath,amsfonts,amsthm,amssymb,color}
\usepackage[dvipsnames,svgnames,table]{xcolor}
\definecolor{darkgreen}{rgb}{0.0,0,0.9}
\usepackage{mathtools}
\usepackage{authblk}
\usepackage{fullpage}
\usepackage{todonotes}
\usepackage{parskip}
\usepackage{comment}
\usepackage{tikz}
\usepackage{bbm}
\usepackage{dsfont}
\usepackage[sc]{mathpazo}
\usepackage[basic]{complexity}
\usepackage[colorlinks=true,citecolor=OliveGreen,linkcolor=BrickRed,urlcolor=BrickRed,pdfstartview=FitH]{hyperref}
\usepackage[capitalize,nameinlink]{cleveref}
\usepackage{tcolorbox}
\usepackage[short]{optidef}

\newtcolorbox{wbox}
{
	colback  = white,
}

\usepackage{amsthm}
\usepackage{aliascnt}
\usepackage[colorlinks=true,citecolor=OliveGreen,linkcolor=BrickRed,urlcolor=BrickRed,pdfstartview=FitH]{hyperref}
\usepackage[capitalize,nameinlink]{cleveref}

\theoremstyle{plain}
\newtheorem{theorem}{Theorem}[section]
\crefname{theorem}{theorem}{theorems}
\Crefname{theorem}{Theorem}{Theorems}

\newtheorem{lemma}{Lemma}[section]
\crefname{lemma}{lemma}{lemmas}
\Crefname{lemma}{Lemma}{Lemmas}

\newaliascnt{corollary}{lemma}
\newtheorem{corollary}[corollary]{Corollary}
\aliascntresetthe{corollary}
\crefname{corollary}{corollary}{corollaries}
\Crefname{corollary}{Corollary}{Corollaries}

\newaliascnt{proposition}{lemma}
\newtheorem{proposition}[proposition]{Proposition}
\aliascntresetthe{proposition}
\crefname{proposition}{proposition}{propositions}
\Crefname{proposition}{Proposition}{Propositions}

\newaliascnt{claim}{lemma}

\aliascntresetthe{claim}
\crefname{claim}{claim}{claims}
\Crefname{claim}{Claim}{Claims}

\newaliascnt{fact}{lemma}

\aliascntresetthe{fact}
\crefname{fact}{fact}{facts}
\Crefname{fact}{Fact}{Facts}

\theoremstyle{definition}

\newaliascnt{definition}{lemma}
\newtheorem{definition}[definition]{Definition}
\aliascntresetthe{definition}
\crefname{definition}{definition}{definitions}
\Crefname{definition}{Definition}{Definitions}

\newaliascnt{assumption}{lemma}

\aliascntresetthe{assumption}
\crefname{assumption}{assumption}{assumptions}
\Crefname{assumption}{Assumption}{Assumptions}

\newaliascnt{question}{lemma}

\aliascntresetthe{question}
\crefname{question}{question}{questions}
\Crefname{question}{Question}{Questions}

\newaliascnt{example}{lemma}

\aliascntresetthe{example}
\crefname{example}{example}{examples}
\Crefname{example}{Example}{Examples}

\theoremstyle{remark}

\newaliascnt{remark}{lemma}

\aliascntresetthe{remark}
\crefname{remark}{remark}{remarks}
\Crefname{remark}{Remark}{Remarks}

\newaliascnt{observation}{lemma}

\aliascntresetthe{observation}
\crefname{observation}{observation}{observations}
\Crefname{observation}{Observation}{Observations}

\newaliascnt{warning}{lemma}

\aliascntresetthe{warning}
\crefname{warning}{warning}{warnings}
\Crefname{warning}{Warning}{Warnings}
\newcommand{\ActiveSet}{\operatorname{ActiveSet}}
\newcommand{\Price}{\operatorname{Price}}
\newcommand{\UpdatePrice}{\operatorname{UpdatePrice}}

\title{A Strongly Polynomial Algorithm for Arctic Auctions} 

\author[1]{Jugal Garg}
\author[2]{Shayan Taherijam}
\author[2]{Vijay V. Vazirani}

\affil[1]{University of Illinois Urbana-Champaign}
\affil[2]{University of California, Irvine}

\date{}

\begin{document}
    \maketitle

\begin{abstract}
Our main contribution is a strongly polynomial algorithm for computing an equilibrium for the Arctic Auction, which is the quasi-linear extension of the linear Fisher market model. We build directly on Orlin's strongly polynomial algorithm for 
the linear Fisher market \cite{Orlin-Fisher}. The first combinatorial polynomial algorithm for the linear Fisher market was based on the primal-dual paradigm \cite{DPSV}. 
This was followed by Orlin's scaling-based algorithms. 

The Arctic Auction \cite{Klemperer2} was developed for the Government of Iceland to allow individuals to exchange blocked offshore assets. It is a variant of the product-mix auction \cite{Klemperer-2008, Klemperer1, Klemperer2} that was designed for, and used by, the Bank of England, to allocate liquidity efficiently across banks pledging heterogeneous collateral of varying quality. Our work was motivated by the fact that banks often need to run Arctic Auctions under many different settings of the parameters in order to home in on the right one,
making it essential to find a time-efficient algorithm for Arctic Auction. 
\end{abstract}

    \input{intro}
    \input{prelim}
    \input{AA_weakly}
    \input{AA_strongly}
 
    \bibliographystyle{alpha}
    \bibliography{refs}
    
    
\end{document}

%% file: intro.tex
\section{Introduction}
\label{sec:intro}

Our main contribution is a strongly polynomial algorithm for computing an equilibrium for the Arctic Auction, a quasi-linear extension of the linear Fisher market. The linear Fisher market is one of the most fundamental models in market equilibrium theory and algorithmic game theory: buyers with budgets compete for divisible goods, and equilibrium prices clear the market, with each buyer spending only on the good with the maximum bang-per-buck. It is well known that equilibrium allocations and prices for this model are captured by the optimal solution to the classical Eisenberg-Gale convex program \cite{eisenberg}. This program, and hence the linear Fisher model, admits rational solutions when all parameters are rational numbers \cite{Va.rational}. 

A central algorithmic milestone for this model was the primal--dual polynomial-time algorithm of Devanur, Papadimitriou, Saberi, and Vazirani~\cite{DPSV}; it extended the primal-dual paradigm from its original setting of LP-duality to convex programs and KKT conditions. A key new idea needed to render the algorithm polynomial time was the notion of balanced flows. This was followed by Orlin's scaling framework and strongly polynomial algorithm~\cite{Orlin-Fisher}. In this paper, we extend Orlin's approach to the Arctic Auction setting.

The Arctic Auction was developed by Klemperer~\cite{Klemperer2} in connection with Iceland's blocked offshore assets, and it is closely related to the product-mix auction~\cite{Klemperer-2008,Klemperer1,Klemperer2}, which was designed for and used by the Bank of England to allocate liquidity across banks pledging heterogeneous collateral. In the Arctic Auction, a bidder may keep part of her budget as money, and each refunded dollar contributes one unit of utility. Thus, unlike in the linear Fisher market, spending the entire budget is not mandatory at equilibrium: if market opportunities are insufficiently attractive, it may be optimal to retain money as a refund. Vazirani~\cite{vazirani2025arctic} recently clarified the structural relation between Arctic Auctions, linear Fisher markets, and rational convex programs. Our work is motivated by the need to compute Arctic Auction equilibria efficiently when the auction must be run repeatedly under different parameter settings.

From an algorithmic viewpoint, allowing refunds creates a significant obstacle. In Orlin's algorithm for the linear Fisher market, any surplus money must eventually be spent. In the Arctic Auction, this is false: once a buyer's maximum bang-per-buck falls to one, the buyer is indifferent between buying equal goods and keeping money. Therefore, the algorithm must distinguish between cash that still needs to be routed through the \emph{equality graph} and cash that should already be treated as a refund. This is exactly where the Fisher-market analysis breaks and where new ideas are needed.

Our first step is to reformulate the weakly polynomial scaling framework so that refunds are part of the state of the algorithm. We represent the state as a triple $(p,x,r)$, where $p$ denotes prices, $x$ the spending variables, and $r$ the refunds. Buyer slack is then measured in terms of \emph{effective cash}, defined by
\[
\bar c_i(x)=e_i-r_i-\sum_{j\in G} x_{ij},
\]
where $e_i$ is the budget of buyer $i$ and $x_{ij}$ is the money spent by buyer $i$ on good $j$. 
This allows us to retain the spending-variable language of Orlin while making the Arctic outside option explicit. With this reformulation, the equality graph, \emph{residual network}, and augmentation machinery are preserved in a natural way, but several definitions and invariants must be modified. In particular, the notion of feasibility and optimality must be stated for $(p,x,r)$ rather than just $(p,x)$, and buyer-terminal augmentation must explicitly transfer one unit of active cash into a refund so that effective cash is preserved correctly.

A second idea is to move the basic-solution viewpoint to the forefront. Under the standard perturbation assumption, the equilibrium support is acyclic. In the Fisher market, each connected support component is pinned down by a budget-equals-price relation. In the Arctic Auction, a component may instead be anchored by a buyer whose outside option is tight. This leads to a slightly different notion of a basic solution, but once the positive support edges of the equilibrium are known, the equilibrium is uniquely recovered. As in Orlin's framework, this makes it possible to identify the exact equilibrium after the scaling parameter becomes sufficiently small and the abundant edges stabilize.

Our main new idea beyond Orlin appears in the strongly polynomial part. We introduce \emph{committed refunds}: when a buyer becomes critical, meaning that her bang-per-buck reaches one, part of her remaining active cash may be committed irrevocably to refund and removed from the effective budget. This yields compressed states in which the algorithm tracks only the money that is still relevant for future price increases and augmentations. Within these compressed states, we define fertile components, modify the price-raising step to include a new stopping event at critical buyers, and adapt Orlin's restart mechanism to the Arctic setting. The restart routine either guarantees the appearance of a new abundant edge after a controlled number of ordinary scaling phases or jumps directly to a much smaller scale while preserving previously discovered abundant edges.

These ideas lead to both weakly and strongly polynomial algorithms. The weakly polynomial algorithm follows the familiar scaling pattern: starting from a coarse scale, it repeatedly restores $\Delta$-optimality, halves the scale, and eventually reconstructs the exact equilibrium from the abundant support. The strongly polynomial algorithm is more subtle. Its analysis shows that progress events are separated by only $O(\log n)$ ordinary phases and that only $O(n)$ such events can occur. This yields a running time of
\[
O\!\left(n^3(m+n\log n)\right),
\]
where $n=|B|+|G|$ and $m=|\{(i,j)\in B\times G : U_{ij}>0\}|$.


The rest of the paper is organized as follows. In Section~\ref{sec:prelim} we define the Arctic Auction with refunds, describe the perturbation assumption, and develop the basic-solution formalism. Section~2 presents the weakly polynomial scaling algorithm and the recovery of the equilibrium from abundant edges. Section~3 develops the strongly polynomial algorithm based on committed refunds and the restart subroutine.

%% file: prelim.tex
\section{Preliminaries}
\label{sec:prelim}

We follow the spending-variable presentation of the linear Fisher market used by Orlin~\cite{Orlin-Fisher}
and the formulation of Arctic Auctions given in~\cite{vazirani2025arctic}. 
Throughout, let $B$ be the set of buyers and $G$ the set of divisible goods.
Each buyer $i\in B$ has budget $e_i>0$, each good $j\in G$ has one unit of supply, and
$U_{ij}\ge 0$ denotes the utility derived by buyer $i$ from one unit of good $j$.
We assume that every buyer values at least one good and every good is valued by at least one buyer, i.e.,
\[
\forall i\in B\ \exists j\in G\text{ such that }U_{ij}>0,
\qquad
\forall j\in G\ \exists i\in B\text{ such that }U_{ij}>0.
\]

For a price vector $p=(p_j)_{j\in G}$ with $p_j>0$, define the \emph{bang-per-buck} of buyer $i$ for good $j$ by
\[
\frac{U_{ij}}{p_j},
\]
and the \emph{maximum bang-per-buck} of buyer $i$ by
\[
\alpha_i(p):=\max_{j\in G}\frac{U_{ij}}{p_j}.
\]
The corresponding \emph{equality graph} is
\[
E(p):=\left\{(i,j)\in B\times G:\frac{U_{ij}}{p_j}=\alpha_i(p)\right\}.
\]
Thus $(i,j)\in E(p)$ if and only if good $j$ is a maximum bang-per-buck good for buyer $i$.

\subsection{Linear Fisher market}

In the linear Fisher market, buyer $i$ spends her entire budget on goods.
Following Orlin~\cite{Orlin-Fisher}, we work with \emph{spending variables}:
for each $(i,j)\in B\times G$, let $x_{ij}\ge 0$ denote the amount of money spent by buyer $i$ on good $j$.
Given a price vector $p$, define the \emph{surplus cash} of buyer $i$ by
\[
c_i(x):=e_i-\sum_{j\in G}x_{ij},
\]
and the \emph{backorder} of good $j$ by
\[
b_j(p,x):=-p_j+\sum_{i\in B}x_{ij}.
\]
Under unit supply, the condition $b_j(p,x)=0$ is equivalent to market clearing for good $j$.

\begin{definition}[Linear Fisher equilibrium]
A pair $(p,x)$ is a \emph{linear Fisher equilibrium} if the following conditions hold:
\begin{enumerate}
    \item \textbf{Budget exhaustion:}
    \[
    c_i(x)=0
    \qquad \forall i\in B.
    \]
    \item \textbf{Market clearing:}
    \[
    b_j(p,x)=0
    \qquad \forall j\in G.
    \]
    \item \textbf{Maximum bang-per-buck support:} if $x_{ij}>0$, then
    \[
    (i,j)\in E(p).
    \]
    \item \textbf{Nonnegativity:}
    \[
    x_{ij}\ge 0 \quad \forall (i,j)\in B\times G,
    \qquad
    p_j\ge 0 \quad \forall j\in G.
    \]
\end{enumerate}
\end{definition}

Equivalently, if $y_{ij}$ denotes the quantity of good $j$ received by buyer $i$, then
$x_{ij}=p_jy_{ij}$ and the condition $b_j(p,x)=0$ is the same as
$\sum_{i\in B}y_{ij}=1$ for each good $j$.

\subsection{Arctic Auction with refunds}

The Arctic Auction is a variant of the linear Fisher market in which a buyer is allowed to keep part of her budget as refunded money~\cite{vazirani2025arctic}. 
If buyer $i$ receives quantity $y_{ij}\ge 0$ of each good $j$ and keeps $r_i\ge 0$ dollars as refund,
then her utility is
\[
\sum_{j\in G}U_{ij}y_{ij}+r_i,
\]
subject to the budget constraint
\[
\sum_{j\in G}p_jy_{ij}+r_i=e_i.
\]
Thus one dollar kept yields one unit of utility.

As above, it is convenient to use spending variables
\[
x_{ij}:=p_jy_{ij}.
\]
For any state $(x,r)$, define the \emph{effective budget} and \emph{effective cash} of buyer $i$ by
\[
\bar e_i:=e_i-r_i,
\qquad
\bar c_i(x):=\bar e_i-\sum_{j\in G}x_{ij}=e_i-r_i-\sum_{j\in G}x_{ij}.
\]
At an Arctic equilibrium we will have $\bar c_i(x)=0$ for every buyer, but later it will be convenient to work with intermediate states in which $\bar c_i(x)$ may be positive.

Using spending variables, buyer $i$'s optimization problem can be written as
\[
\max\left\{\sum_{j\in G}\frac{U_{ij}}{p_j}x_{ij}+r_i:
 x_{ij}\ge 0\ \forall j,\ r_i\ge 0,\ \sum_{j\in G}x_{ij}+r_i=e_i\right\}.
\]
Eliminating $r_i$ gives the equivalent objective
\[
e_i+\sum_{j\in G}\left(\frac{U_{ij}}{p_j}-1\right)x_{ij}.
\]
Therefore the buyer's optimal behavior has the following simple form:
\begin{enumerate}
    \item If $\alpha_i(p)<1$, then buyer $i$ strictly prefers keeping money, hence
    \[
    x_{ij}=0 \quad \forall j\in G,
    \qquad
    r_i=e_i.
    \]
    \item If $\alpha_i(p)>1$, then buyer $i$ spends her full budget and only on equality goods, i.e.,
    \[
    r_i=0,
    \qquad
    x_{ij}>0\implies (i,j)\in E(p).
    \]
    \item If $\alpha_i(p)=1$, then buyer $i$ is indifferent between keeping money and buying equality goods.
    In this case,
    \[
    x_{ij}>0\implies (i,j)\in E(p),
    \]
    and any split between refund and spending on equality goods is optimal.
\end{enumerate}

\begin{definition}[Arctic equilibrium]
A triple $(p,x,r)$ is an \emph{Arctic equilibrium} if the following conditions hold:
\begin{enumerate}
    \item \textbf{Budgets and refunds:}
    \[
    r_i\ge 0,
    \qquad
    \bar c_i(x)=e_i-r_i-\sum_{j\in G}x_{ij}=0
    \qquad \forall i\in B.
    \]
    \item \textbf{Market clearing:}
    \[
    b_j(p,x)=0
    \qquad \forall j\in G.
    \]
    \item \textbf{Maximum bang-per-buck support:} if $x_{ij}>0$, then
    \[
    (i,j)\in E(p).
    \]
    \item \textbf{Refund complementarity:}
    \[
    r_i>0\implies \alpha_i(p)\le 1,
    \qquad
    \alpha_i(p)>1\implies r_i=0.
    \]
\end{enumerate}
\end{definition}

The linear Fisher market is the special case of the Arctic Auction in which every buyer receives zero refund at equilibrium, i.e., $r_i=0$ for all $i\in B$.
\subsection{Generic perturbation}
\label{subsec:generic-perturbation}

Following the standard perturbation framework used in the Fisher-market
literature, we assume that the utilities have been perturbed symbolically so as
to remove degeneracies; see, e.g., \cite{Orlin-Fisher}. We do not need the
explicit form of the perturbation. Throughout the paper we only use the
following two generic properties of the perturbed instance.

For every price vector $p$, let
\[
E(p):=\left\{(i,j)\in B\times G:\frac{U_{ij}}{p_j}=\alpha_i(p)\right\}
\]
denote the equality graph, that is, the subgraph consisting of edges of
maximum bang-per-buck. We assume that the perturbation is chosen so that, for
every price vector $p$:

\begin{enumerate}
    \item the graph $E(p)$ is cycle-free;
    \item each connected component of $E(p)$ contains at most one critical
    buyer, that is, at most one buyer $i$ with $\alpha_i(p) = 1$.
\end{enumerate}

Throughout the paper we work with such a perturbed instance and suppress the
perturbation from the notation.
\subsection{Basic solutions and recovery from the optimal support}

We next define the basic solution associated with a cycle-free support.
Since buyers may receive refunds, a connected component of the support is either budget-balanced, or else it is anchored by a unique buyer whose outside option is tight.

\begin{definition}[Basic solution]
\label{def:basic-solution-arctic}
Let $H\subseteq B\times G$ be cycle-free.
A triple $(p,x,r)$ is called the \emph{basic solution} of $H$ if the following conditions hold:
\begin{enumerate}
    \item If $(i,j)\in H$ and $(i,k)\in H$, then
    \[
    \frac{U_{ij}}{p_j}=\frac{U_{ik}}{p_k}.
    \]
    \item For each connected component $C$ of $H$, exactly one of the following holds:
    \begin{enumerate}
        \item
        \[
        e_{C\cap B}=p_{C\cap G};
        \]
        \item there exists a unique buyer $i_C\in C\cap B$ such that
        \[
        \frac{U_{i_Cj}}{p_j}=1
        \qquad
        \text{for every }(i_C,j)\in H.
        \]
    \end{enumerate}
    \item If $(i,j)\notin H$, then $x_{ij}=0$.
    \item For every buyer $i\in B$,
    \[
    r_i=e_i-\sum_{j\in G}x_{ij}.
    \]
    \item If a buyer $i$ is not one of the anchor buyers $i_C$ from item~2(b), then
    \[
    r_i=0.
    \]
    \item For every good $j\in G$,
    \[
    \sum_{i\in B}x_{ij}=p_j.
    \]
\end{enumerate}
Whenever such a triple exists, we denote it by
\[
(p,x,r)=\operatorname{BasicSolution}(H).
\]
\end{definition}

The next statement shows that once the positive edges of the optimal solution are known, the optimal solution is recovered uniquely.

\begin{theorem}
\label{thm:basic-solution-optimal}
Let $(p^*,x^*,r^*)$ be an Arctic equilibrium, and let
\[
H^*:=\{(i,j)\in B\times G:x^*_{ij}>0\}.
\]
Then
\[
\operatorname{BasicSolution}(H^*)=(p^*,x^*,r^*).
\]
\end{theorem}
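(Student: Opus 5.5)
The proof has two parts: existence, meaning $(p^*,x^*,r^*)$ satisfies every condition of the definition of $\operatorname{BasicSolution}(H^*)$, and uniqueness, meaning any triple satisfying those conditions for $H^*$ equals $(p^*,x^*,r^*)$. First I record that $H^*\subseteq E(p^*)$ by the support condition, so $H^*$ is cycle-free by the generic perturbation assumption and the definition applies.

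\emph{Existence.} Items 1, 3, 4 and 6 follow immediately.
\begin{itemize}
\item Item 1 holds because every edge of $H^*$ lies in $E(p^*)$.
\item Item 3 holds by definition of $H^*$.
\item Items 4 and 6 are the budget identity $\bar c_i(x^*)=0$ and market clearing.
\end{itemize}
For items 2 and 5, take a component $C$ of $H^*$. Every good has $p^*_j>0$: some buyer values it, and a zero price would make that buyer's bang-per-buck infinite. Hence every good is in $H^*$ with positive spending. Every buyer in $C$ with an edge then satisfies $\alpha_i(p^*)=U_{ij}/p^*_j$ for her edges.

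If no buyer in $C$ has $r^*_i>0$, summing the budget identities over $C\cap B$ and market clearing over $C\cap G$ gives $e_{C\cap B}=p^*_{C\cap G}$. This is case (a). Note that spending on a good of $C$ comes only from buyers of $C$, since all its positive edges lie in $C$.

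If some buyer in $C$ has $r^*_i>0$, refund complementarity gives $\alpha_i\le 1$. A buyer with $\alpha_i<1$ spends nothing, so she is isolated; she is then her own component and is trivially anchored. For a buyer with an edge, $\alpha_i=1$, so she is critical. By the perturbation assumption (property 2), she is the unique critical buyer in $C$. So item 5 holds, and case (b) holds with $i_C=i$.

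One subtlety is that case (a) and case (b) must not both hold ("exactly one"). Here I would again invoke genericity: the perturbation should preclude a component whose anchor has refund exactly zero. Isolated buyers with $r_i=e_i$ and $\alpha_i<1$ do not satisfy (b) literally as stated, since they have no edges. I would treat them by the vacuous reading of (b), and flag this as a point to state carefully.

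\emph{Uniqueness.} This is the main step. Fix a component $C$, a tree with $k$ nodes and $k-1$ edges.

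In case (b), item 1 propagates prices along the tree from $i_C$. The anchor's edges fix $p_j=U_{i_Cj}$. Each buyer then has a determined ratio $\alpha_i$ from any already-priced neighbor, which prices her other goods. So $p$ on $C$ is unique.

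In case (a), item 1 determines all prices in $C$ up to one common scalar. The equation $e_{C\cap B}=p_{C\cap G}$ fixes that scalar.

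With prices known, the flows are determined by peeling leaves of the tree:
\begin{itemize}
\item A leaf good $j$ has its single edge carry $p_j$.
\item A leaf buyer other than the anchor has $r_i=0$, so her single edge carries $e_i$.
\item Remove the leaf, adjust the neighbor's residual budget or price, and repeat.
\end{itemize}
In case (b) the anchor is left for last, and its refund is read off from item 4. In case (a) the final node is consistent by the balance equation.

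Thus the linear system given by items 1–6 has a unique solution. Since $(p^*,x^*,r^*)$ satisfies it, the two coincide. I expect the main obstacle to be the careful handling of which components are anchored, including the isolated-buyer and exclusivity edge cases, rather than the tree-peeling computation.
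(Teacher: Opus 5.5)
\textbf{Overall.} Your existence half follows the paper's argument: check items 1--6 componentwise, with the refunded buyer as anchor. You are more careful than the paper about isolated buyers and about why a refunded buyer with an edge is critical.

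\textbf{The gap is in the uniqueness half.} Item~2 is a disjunction: each component satisfies (a), or (b) with \emph{some} anchor. So items 1--6 are not one linear system but a union of systems, one for each choice of branch on each component. Your propagation-and-peeling argument only shows that each \emph{fixed} branch has at most one solution. It never rules out a second triple satisfying the definition through a different branch, and such triples exist.

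Take one good $j$ and two buyers with $e_1=e_2=10$, $U_{1j}=15$, $U_{2j}=17$; both stated perturbation properties hold. The equilibrium is $p^*_j=15$, $x^*_{1j}=5$, $x^*_{2j}=10$, $r^*=(5,0)$, so $H^*=\{(1,j),(2,j)\}$. Two other triples also satisfy items 1--6, each with exactly one of (a), (b) holding and all coordinates nonnegative:
\begin{itemize}
\item $p_j=17$, $x_{1j}=10$, $x_{2j}=7$, $r=(0,3)$, which is case (b) with anchor $2$;
\item $p_j=20$, $x_{1j}=x_{2j}=10$, $r=(0,0)$, which is case (a).
\end{itemize}
So your claim that items 1--6 have a unique solution is false, and $\operatorname{BasicSolution}(H^*)$ is not pinned down by the definition as written. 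The paper's one-line appeal to \emph{exactly one scale-fixing condition} per component skips the same point.

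\textbf{What your argument does prove.} The branch \emph{induced by} $(p^*,x^*,r^*)$ gives a linear system whose unique solution is $(p^*,x^*,r^*)$. That branch is case (a) on components with no refunded buyer, and case (b) with the refunded buyer as anchor otherwise. This is all the Cramer's-rule argument of \Cref{lem:min-positive-flow} needs, so state the theorem in that form.

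\textbf{How to get uniqueness across branches.} Add the equilibrium-derived conditions $r\ge 0$ and $U_{ij}/p_j\ge 1$ for $(i,j)\in H$. Item~1 fixes the prices on a component $C$ up to a scalar, so write them as $\lambda\pi$, and let $\beta_i$ be buyer $i$'s ratio at $\pi$. The added conditions force $\lambda=\min\{\min_{i\in C\cap B}\beta_i,\ e_{C\cap B}/\pi_{C\cap G}\}$ in every admissible branch. This makes the triple unique and excludes both spurious triples above.

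\textbf{The \emph{exactly one} clause.} Your appeal to genericity here is right in spirit, but it needs a perturbation property beyond the two stated ones, because it involves the budgets. With $e_1=e_2=10$, $U_{1j}=20$, $U_{2j}=25$, the equilibrium has $p^*_j=20$ and $r^*=0$, and both (a) and (b) hold for $H^*$.
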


\begin{proof}
By the perturbation assumption, the support graph $H^*$ is cycle-free.
We show that $(p^*,x^*,r^*)$ satisfies the defining conditions of \Cref{def:basic-solution-arctic}.

If $(i,j)\in H^*$ and $(i,k)\in H^*$, then both edges carry positive spending, and therefore both are equality edges at $p^*$.
Hence
\[
\frac{U_{ij}}{p_j^*}=\frac{U_{ik}}{p_k^*}.
\]
This gives item~1.

Now consider a connected component $C$ of $H^*$.
If every buyer in $C\cap B$ has zero refund, then
\[
e_{C\cap B}
=
\sum_{i\in C\cap B}\sum_{j\in G}x^*_{ij}
=
\sum_{j\in C\cap G}\sum_{i\in B}x^*_{ij}
=
p^*_{C\cap G},
\]
because there is no spending on edges outside $H^*$ and every good clears.
Thus item~2(a) holds.

Otherwise, some buyer in $C\cap B$ has positive refund.
By the perturbation assumption there is exactly one such buyer; denote it by $i_C$.
By refund complementarity, $\alpha_{i_C}(p^*)\le 1$.
On the other hand, if $(i_C,j)\in H^*$ then $x^*_{i_Cj}>0$, so $(i_C,j)\in E(p^*)$ and therefore
\[
\frac{U_{i_Cj}}{p_j^*}=\alpha_{i_C}(p^*).
\]
Since $x^*_{i_Cj}>0$ and $r_{i_C}^*>0$ occur simultaneously, buyer $i_C$ must be indifferent between spending on equality goods and keeping money, which forces $\alpha_{i_C}(p^*)=1$.
Hence
\[
\frac{U_{i_Cj}}{p_j^*}=1
\qquad
\text{for every }(i_C,j)\in H^*.
\]
Thus item~2(b) holds.

Item~3 is immediate from the definition of $H^*$.
Item~4 is exactly the buyer budget identity in the Arctic equilibrium.
For item~5, if $i$ is not an anchor buyer then $r_i^*=0$ by construction of the anchor buyer in its component.
Item~6 is exactly market clearing.

Therefore $(p^*,x^*,r^*)$ satisfies the equations defining the basic solution of $H^*$.
Since each connected component of $H^*$ is a tree and each component has exactly one scale-fixing condition from item~2, these equations determine $(p,x,r)$ uniquely.
Hence
\[
\operatorname{BasicSolution}(H^*)=(p^*,x^*,r^*).
\]
\end{proof}

We next record the lower bound on positive spending that will later allow us to recognize the optimal support once $\Delta$ becomes sufficiently small.

\begin{lemma}
\label{lem:min-positive-flow}
If $x^*_{ij}>0$, then
\[
x^*_{ij}>\frac{1}{D},
\]
where $D:=n(U_{\max})^n$ and $n:=|B|+|G|$.
\end{lemma}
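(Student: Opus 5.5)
The plan is to use \Cref{thm:basic-solution-optimal}: the equilibrium is the basic solution of its support $H^*$, so every quantity is the solution of a small linear system with integer data. I will assume, as is standard in Orlin's setting, that the $U_{ij}$ and $e_i$ are positive integers with $U_{\max}$ bounding the utilities (and, after scaling, the budgets). I will then bound denominators by a product of utilities along a path.

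First I will work out the prices one component at a time. Fix a connected component $C$ of $H^*$, which is a tree. Item~1 of \Cref{def:basic-solution-arctic} says that along each path good–buyer–good, $p_k = p_j U_{ik}/U_{ij}$. So after fixing a root good $j_0\in C$, every price in $C$ has the form $p_j=\lambda_C\,\pi_j$. Here $\pi_j$ is a ratio of products of at most $n$ utilities, namely the numerators and denominators collected along the tree path from $j_0$ to $j$.

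Next I fix $\lambda_C$ using the scale condition in item~2. In case (b), $p_j=U_{i_Cj}$ for an edge $(i_C,j)\in H^*$, so every price in $C$ is $U_{i_Cj}$ times a path ratio. In case (a), $\lambda_C=e_{C\cap B}/\sum_{j}\pi_j$. To clear denominators, I will multiply all $\pi_j$ by the product of the utilities on the path-denominators. The cleanest way is to choose as common denominator the product $P$ of $U_{ij}$ over the "downward" edges of the rooted tree, which is at most $U_{\max}^{|C|-1}$. Then $P\pi_j$ is an integer for all $j$.

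Then I recover the spending on tree edges by peeling leaves. A leaf good $j$ with unique neighbour $i$ has $x_{ij}=p_j$. A leaf buyer $i$ that is not the anchor has $x_{ij}=e_i$. Working inward, each $x_{ij}$ is a signed sum of budgets and prices over the subtree cut off by the edge $(i,j)$. Orienting that cut so that it avoids the anchor (or uses the budget equation in case (a)), $x_{ij}$ equals $\pm(e_S - p_T)$ for vertex sets $S,T$ on one side. So $x_{ij}$ is a rational number whose denominator is $Q$, the denominator of the prices. In case (a), $Q$ divides $\sum_j P\pi_j\le n U_{\max}^{n}$. In case (b), $Q$ divides $P\le U_{\max}^{n}$.

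Finally, positivity finishes the argument. Since $x^*_{ij}>0$ and $x^*_{ij}$ has a denominator at most $nU_{\max}^n$, we get $x^*_{ij}\ge 1/(nU_{\max}^n)$. The strict inequality is then absorbed by the generic perturbation, which breaks ties, or by a slightly sharper count, since a tree on $|C|\le n$ vertices has at most $n-1$ edges.

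The main obstacle is case (a). There the normalisation divides by $\sum_j \pi_j$, and I must check that this sum has denominator $P$. This gives a numerator of size at most $|C|U_{\max}^{|C|-1}$ rather than something like $U_{\max}^{2n}$. I will handle it by writing all $\pi_j$ over the single common denominator $P$ chosen from the rooted tree. I will also need to check that the integrality assumptions on $e$ and $U$ match how the paper intends $D$.
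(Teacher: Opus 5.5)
Your proof is correct. It starts from the same place as the paper, \Cref{thm:basic-solution-optimal}, but obtains the denominator bound by a different technique.

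\textbf{The paper's route.} The paper says only that the basic-solution system has integer coefficients of absolute value at most $U_{\max}$, and that Cramer's rule then gives denominators below $D$. It does not derive the determinant bound.

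\textbf{Your route.} You solve that system explicitly on each tree component $C$:
\begin{itemize}
\item prices are $\lambda_C$ times path ratios with a common integer denominator $P$;
\item $\lambda_C$ is fixed either by an anchor edge, giving $p_j=U_{i_Cj}$, or by $e_{C\cap B}=p_{C\cap G}$;
\item each $x_{ij}$ is read off as $\pm(e_S-p_T)$ from the side of the cut that avoids the anchor.
\end{itemize}

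\textbf{What each approach buys.} The paper's argument is shorter and uniform. However, a bound of the form $nU_{\max}^n$ does not follow from the size of the entries alone: for a system with about $2|C|$ unknowns, generic determinant bounds are much larger. The bound needs exactly the tree structure you exploit. Your computation therefore supplies the missing justification. It also shows that the denominators depend only on the utilities, which is why $D$ involves no budgets.

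\textbf{Two small clean-ups.}
\begin{itemize}
\item You do not need budgets bounded by $U_{\max}$. You only use integrality of $e$, which the paper assumes at the start of \Cref{sec:modified-price-augment}.
\item For the strict inequality, drop the appeal to the perturbation and use the count you already have. In case (a), $Q\le |C\cap G|\,U_{\max}^{|C|-1}\le (n-1)U_{\max}^{n-1}<D$, because both sides of $C$ are nonempty. In case (b), $Q\le P\le U_{\max}^{n-1}<D$.
\end{itemize}
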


\begin{proof}
By \Cref{thm:basic-solution-optimal}, the triple $(p^*,x^*,r^*)$ is obtained from the linear system defining $\operatorname{BasicSolution}(H^*)$.
After clearing denominators, this system has integral coefficients, and every coefficient has absolute value at most $U_{\max}$.
By Cramer's rule, every coordinate of $(p^*,x^*,r^*)$ is a rational number whose denominator is less than $D=n(U_{\max})^n$.
Therefore every positive coordinate is at least $1/D$.
In particular, if $x^*_{ij}>0$, then $x^*_{ij}>1/D$.
\end{proof}

%% file: AA_weakly.tex
\section{A \texorpdfstring{$\Delta$}{Delta}-scaling algorithm for the Arctic Auction}
\label{sec:modified-price-augment}

We adapt the $\Delta$-scaling framework of Orlin~\cite{Orlin-Fisher} to the Arctic Auction.
Throughout this section, we assume that the perturbation from \Cref{sec:prelim} is in force.
We work with states $(p,x,r)$, where $p$ is a price vector, $x$ is a spending vector, and $r$ is a refund vector.
We also assume that the budgets and utilities are integral.
Let
\[
n:=|B|+|G|,
\qquad
U_{\max}:=\max\{U_{ij}:U_{ij}>0\},
\qquad
 e_{\max}:=\max_{i\in B}e_i,
\qquad
D:=n(U_{\max})^n.
\]
For sets $S\subseteq B$ and $T\subseteq G$, we write
\[
e_S:=\sum_{i\in S}e_i,
\qquad
p_T:=\sum_{j\in T}p_j.
\]
Recall that in a state $(p,x,r)$,
\[
\bar e_i:=e_i-r_i,
\qquad
\bar c_i(x):=\bar e_i-\sum_{j\in G}x_{ij}=e_i-r_i-\sum_{j\in G}x_{ij},
\]
\[
b_j(p,x):=-p_j+\sum_{i\in B}x_{ij},
\qquad
\alpha_i(p):=\max_{j\in G}\frac{U_{ij}}{p_j},
\]
and
\[
E(p):=\left\{(i,j)\in B\times G:\frac{U_{ij}}{p_j}=\alpha_i(p)\right\}.
\]
Since a buyer with $\alpha_i(p)\le 1$ weakly prefers money to any non-equality purchase, any remaining effective cash of such a buyer may be safely committed to refund.
For this reason we use the stronger stopping rule below and require small effective cash for \emph{all} buyers at the end of a scaling phase.

\begin{definition}[$\Delta$-feasible solution]
\label{def:delta-feasible-arctic}
Let $\Delta>0$.
A state $(p,x,r)$ is called \emph{$\Delta$-feasible} if the following conditions hold:
\begin{enumerate}
    \item $r_i\ge 0$ and $\bar c_i(x)\ge 0$ for every buyer $i\in B$;
    \item if $p_j>p_j^0$, then $0\le b_j(p,x)\le \Delta$ for every good $j\in G$;
    \item if $x_{ij}>0$, then $(i,j)\in E(p)$ and $x_{ij}$ is a multiple of $\Delta$;
    \item $x_{ij}\ge 0$ for all $(i,j)\in B\times G$ and $p_j\ge 0$ for all $j\in G$.
\end{enumerate}
\end{definition}

\begin{definition}[$\Delta$-optimal solution]
\label{def:delta-optimal-arctic}
A $\Delta$-feasible state $(p,x,r)$ is called \emph{$\Delta$-optimal} if
\[
\bar c_i(x)<\Delta
\qquad
\text{for every buyer }i\in B.
\]
\end{definition}

\subsection{Initialization and scaling invariants}

We now define the initial parameters of the scaling algorithm.
For each buyer $i\in B$, let
\[
U_{iG}:=\sum_{j\in G}U_{ij}.
\]
Set
\[
\Delta_0:=e_{\max},
\qquad
\rho_{ij}:=\frac{U_{ij}e_i}{nU_{iG}},
\qquad
p_j^0:=\max_{i\in B}\rho_{ij},
\qquad
x_{ij}^0:=0,
\qquad
r_i^0:=0.
\]

The role of the factor $1/n$ in the definition of $\rho_{ij}$ is to guarantee that the initial prices are sufficiently small, while the choice $\Delta_0=e_{\max}$ ensures that the initial effective cash is already controlled at the coarsest scale.

\begin{lemma}
\label{lem:initial-feasible}
The initial state $(p^0,x^0,r^0)$ is $\Delta_0$-feasible.
\end{lemma}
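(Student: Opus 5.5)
We verify the four conditions of \Cref{def:delta-feasible-arctic} for the initial state $(p^0,x^0,r^0)$ with $\Delta=\Delta_0=e_{\max}$, one at a time. Since $x^0\equiv 0$ and $r^0\equiv 0$, most conditions reduce to statements about the initial prices.

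\emph{Condition 1.} We have $r_i^0=0\ge 0$, and
\[
\bar c_i(x^0)=e_i-r_i^0-\sum_{j\in G}x^0_{ij}=e_i>0
\]
for every buyer $i\in B$.

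\emph{Condition 2.} This condition only constrains goods with $p_j>p_j^0$. At the initial state $p=p^0$, so no good satisfies the hypothesis and the condition holds vacuously. This is the one place where the precise wording of the definition matters. Without the guard $p_j>p_j^0$ we would need $b_j(p^0,x^0)=-p_j^0\ge 0$, which is false because $p_j^0>0$. That positivity follows from the standing assumption that every good is valued by some buyer: such a buyer has $U_{ij}>0$ and $U_{iG}>0$, hence $\rho_{ij}>0$.

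\emph{Condition 3.} No $x^0_{ij}$ is positive, so the condition holds vacuously.

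\emph{Condition 4.} We have $x^0\ge 0$, and $p_j^0\ge 0$ because it is a maximum of nonnegative numbers $\rho_{ij}$.

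There is no real obstacle here; the only step needing care is the vacuous reading of condition 2. It may also help to record two facts for later use. First, $p^0_j>0$, so $E(p^0)$ and the bang-per-buck values $\alpha_i(p^0)$ are well defined. Second, since $\rho_{ij}\le e_i/n$, we get $p_{G}^0\le\sum_j\max_i e_i/n\le e_{\max}$, so the initial prices are small relative to $\Delta_0$. The second fact is not needed for feasibility itself.
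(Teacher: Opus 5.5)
Your proof is correct and follows the paper's argument: you check the four conditions of \Cref{def:delta-feasible-arctic} directly, with conditions 2 and 3 holding vacuously because $p=p^0$ and $x^0=0$. Your side remarks are also correct ($p^0_j>0$ from the standing valuation assumption, and $p^0_G\le e_{\max}$ from $\rho_{ij}\le e_i/n$), but feasibility itself does not need them.
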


\begin{proof}
Since $x^0=0$ and $r^0=0$, we have
\[
\bar c_i(x^0)=e_i\ge 0
\qquad
\text{for all }i\in B.
\]
Because $p=p^0$, the condition ``if $p_j>p_j^0$'' in the definition of $\Delta$-feasibility is vacuous.
Moreover, there is no positive spending, so the support and integrality condition is also vacuous.
Finally, $x_{ij}^0\ge 0$, $r_i^0=0$, and $p_j^0\ge 0$.
Thus $(p^0,x^0,r^0)$ is $\Delta_0$-feasible.
\end{proof}

\begin{definition}[Residual network and active set]
\label{def:residual-network-arctic}
Let $(p,x,r)$ be a $\Delta$-feasible state.
The residual network $N(p,x,r)$ is the directed graph with node set $B\cup G$ defined as follows:
\begin{enumerate}
    \item for every equality edge $(i,j)\in E(p)$, the network contains the forward arc $(i,j)$;
    \item for every pair $(i,j)$ with $x_{ij}>0$, the network contains the backward arc $(j,i)$.
\end{enumerate}
If $\rho\in B$, then $\ActiveSet(p,x,r,\rho)$ is the set of nodes reachable from $\rho$ in $N(p,x,r)$.
\end{definition}

\begin{definition}[$\Delta$-augmentation]
\label{def:delta-augmentation-arctic}
Let $(p,x,r)$ be a $\Delta$-feasible state, and let
\[
P=(v_0,v_1,\dots,v_t)
\]
be a directed path in $N(p,x,r)$ whose first node $v_0$ is a buyer.
The $\Delta$-augmentation of $x$ along $P$ is the allocation $x'$ defined by
\[
x'_{ij}=
\begin{cases}
x_{ij}+\Delta, & \text{if }(i,j)\text{ appears on }P\text{ as a forward arc},\\[1mm]
x_{ij}-\Delta, & \text{if }(j,i)\text{ appears on }P\text{ as a backward arc},\\[1mm]
x_{ij}, & \text{otherwise}.
\end{cases}
\]
\end{definition}

\subsection{The modified price-and-augment step}

Fix a $\Delta$-feasible state $(p,x,r)$ and a root buyer $\rho\in B$.
Let
\[
A=\ActiveSet(p,x,r,\rho).
\]
For $q\ge 1$, define $\Price(p,x,r,\rho,q)$ to be the price vector $\widehat p$ given by
\[
\widehat p_j=
\begin{cases}
q\,p_j, & \text{if }j\in A\cap G,\\[1mm]
p_j, & \text{if }j\notin A\cap G.
\end{cases}
\]
We then define $\UpdatePrice^{\star}(p,x,r,\rho)$ to be the price vector obtained by choosing the maximum value of $q\ge 1$ such that $(\widehat p,x,r)$ remains $\Delta$-feasible and at least one of the following events occurs:
\begin{enumerate}
    \item a new equality edge appears;
    \item some active good $j\in A\cap G$ satisfies $b_j(\widehat p,x)\le 0$;
    \item some active buyer $i\in A\cap B$ satisfies $\alpha_i(\widehat p)=1$.
\end{enumerate}
The third event is the new feature in the Arctic setting.

\begin{lemma}
\label{lem:update-price-star-feasible}
Let $(p,x,r)$ be $\Delta$-feasible and let $\rho\in B$.
Then
\[
\bigl(\UpdatePrice^{\star}(p,x,r,\rho),x,r\bigr)
\]
is again $\Delta$-feasible.
\end{lemma}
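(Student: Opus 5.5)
The plan is to verify the four conditions of \Cref{def:delta-feasible-arctic} for $(\widehat p,x,r)$ with $\widehat p=\Price(p,x,r,\rho,q)$ and every $q$ in the range allowed by the stopping rule. The first step is to show that this range is an interval $[1,q^\ast]$ and that the maximum defining $\UpdatePrice^{\star}$ exists.

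The main structural fact comes from \Cref{def:residual-network-arctic} and concerns the active set $A$. If $i\in A\cap B$ and $(i,j)\in E(p)$, the forward arc gives $j\in A$. If $j\in A\cap G$ and $x_{ij}>0$, the backward arc gives $i\in A$. Hence every positive-spending edge $(i,j)$ has either both endpoints in $A$ or both outside $A$. Likewise every equality edge leaving an active buyer ends at an active good.

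Conditions 1 and 4 are immediate. The vectors $x$ and $r$ are unchanged, so $r_i\ge 0$ and $\bar c_i(x)\ge 0$ still hold. Also $\widehat p_j\in\{p_j,qp_j\}$ with $q\ge 1$, so $\widehat p_j\ge 0$.

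For condition 3, the multiples-of-$\Delta$ requirement is unaffected, so only support on equality edges needs checking. Take $x_{ij}>0$.
\begin{itemize}
\item If $i,j\notin A$: the price of $j$ is unchanged and every other good's price weakly increases. So $U_{ij}/\widehat p_j$ stays the maximum bang-per-buck of $i$, and $(i,j)\in E(\widehat p)$.
\item If $i,j\in A$: all of $i$'s equality goods lie in $A$ and are scaled by the same factor $q$, so they remain tied with one another. Goods outside $A$ keep their prices, so their bang-per-buck is unchanged. The tied value $\alpha_i(p)/q$ therefore remains the maximum as long as $q$ does not exceed the first value at which a non-active good joins the tie, which is event~1.
\end{itemize}
Since $q$ is stopped at the first event, the only possible change on $[1,q^\ast]$ is the addition of new equality edges, and no existing edge is lost. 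Thus condition 3 holds.

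For condition 2, goods outside $A$ are untouched. For $j\in A\cap G$, the backorder $b_j(\widehat p,x)=\sum_i x_{ij}-qp_j$ is continuous and decreasing in $q$. Consequently $b_j\le\Delta$ is preserved whenever it held before. The lower bound $b_j\ge 0$ can fail only after $b_j$ passes through $0$, and that is event~2, at which the procedure stops.

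The step I expect to be the main obstacle is the remaining case of condition 2: a good $j$ with $p_j=p^0_j$ and $b_j(p,x)>\Delta$, which is only constrained once its price rises. For such a good I would argue either that it cannot occur, using the invariants maintained by augmentations, or else fall back on the definition itself. That definition takes the maximum over those $q$ for which $(\widehat p,x,r)$ remains $\Delta$-feasible. Each feasibility constraint is a closed condition in $q$ and holds at $q=1$, so the admissible set is closed and contains $1$. Each event is detected at a finite $q$, since $\alpha_i$ and the $b_j$ are monotone and continuous in $q$. The maximizer therefore exists, and $\Delta$-feasibility of the output then holds by construction.
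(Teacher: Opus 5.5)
Your proposal is correct, and at the decisive point it matches the paper's proof, which consists only of the observation that $\UpdatePrice^{\star}$ is defined to select a multiplier at which $(\widehat p,x,r)$ is still $\Delta$-feasible, so feasibility of the output holds by construction. Your direct checks of conditions~1--4 are optional extras, and your last sentence overstates things: closedness together with ``each event occurs at a finite $q$'' does not produce a \emph{feasible} $q$ at which an event occurs (an active good with $p_j=p_j^0$ and $b_j(p,x)>\Delta$ can leave no such $q$). The paper also takes existence for granted, and in the calls made by \textsc{PriceAndAugmentWithCriticalBuyer} this case cannot arise, because every good keeps $b_j\le\Delta$ throughout the algorithm and the loop guard gives $b_j>0$ on active goods.
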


\begin{proof}
This follows directly from the definition of $\UpdatePrice^{\star}$.
\end{proof}

\begin{algorithm}[H]
\caption{\textsc{PriceAndAugmentWithCriticalBuyer}$(p,x,r,\Delta)$}
\label{alg:arctic-price-augment}
\begin{algorithmic}[1]
\Require A $\Delta$-feasible state $(p,x,r)$
\State Choose a buyer $\rho\in B$ such that $\bar c_\rho(x)\ge \Delta$ and $\alpha_\rho(p)>1$
\State $A\leftarrow \ActiveSet(p,x,r,\rho)$
\While{there is no active good $j\in A\cap G$ with $b_j(p,x)\le 0$ and no active buyer $i\in A\cap B$ with $\alpha_i(p)=1$}
    \State $p\leftarrow \UpdatePrice^{\star}(p,x,r,\rho)$
    \State Recompute $E(p)$, $N(p,x,r)$, $\alpha(p)$, $b(p,x)$, and $A\leftarrow \ActiveSet(p,x,r,\rho)$
\EndWhile
\If{there exists an active buyer $i\in A\cap B$ with $\alpha_i(p)=1$}
    \State Choose a directed path $P$ in $N(p,x,r)$ from $\rho$ to $i$
    \State Replace $x$ by the $\Delta$-augmentation of $x$ along $P$
    \State $r_i\leftarrow r_i+\Delta$
\Else
    \State Choose an active good $j\in A\cap G$ with $b_j(p,x)\le 0$
    \State Choose a directed path $P$ in $N(p,x,r)$ from $\rho$ to $j$
    \State Replace $x$ by the $\Delta$-augmentation of $x$ along $P$
\EndIf
\State Recompute $\bar c(x)$ and $b(p,x)$
\State \Return $(p,x,r)$
\end{algorithmic}
\end{algorithm}

Given a $\Delta$-optimal state $(p,x,r)$, we define its \emph{halving-and-repair step} as follows.
Set $\Delta':=\Delta/2$.
For each good $j\in G$ with $b_j(p,x)>\Delta'$, choose a buyer $i(j)$ with $x_{i(j)j}>0$ and replace
\[
x_{i(j)j}\leftarrow x_{i(j)j}-\Delta'.
\]
Let $\widetilde x$ denote the resulting allocation, and keep $p$ and $r$ unchanged.

\begin{algorithm}[H]
\caption{\textsc{ScalingAlgorithm}$(e,U)$}
\label{alg:scaling-arctic}
\begin{algorithmic}[1]
\State Compute $\Delta_0$, $p^0$, $x^0$, and $r^0$ as above
\State $\Delta\leftarrow \Delta_0$, $p\leftarrow p^0$, $x\leftarrow x^0$, $r\leftarrow r^0$
\While{true}
    \While{$(p,x,r)$ is not $\Delta$-optimal}
        \If{there exists a buyer $i\in B$ with $\alpha_i(p)\le 1$ and $\bar c_i(x)\ge \Delta$}
            \State $r_i\leftarrow r_i+\Delta$
        \Else
            \State $(p,x,r)\leftarrow \textsc{PriceAndAugmentWithCriticalBuyer}(p,x,r,\Delta)$
        \EndIf
    \EndWhile
    \If{$\Delta<1/(8nD)$}
        \State $E_\Delta\leftarrow \{(i,j)\in B\times G:x_{ij}>4n\Delta\}$
        \State \Return $\operatorname{BasicSolution}(E_\Delta)$
    \EndIf
    \State Replace $(p,x,r)$ by $(p,\widetilde x,r)$ obtained from the halving-and-repair step
    \State $\Delta\leftarrow \Delta/2$
\EndWhile
\end{algorithmic}
\end{algorithm}

The rest of this section proves the correctness of \Cref{alg:arctic-price-augment,alg:scaling-arctic}.

\begin{lemma}[Refund step]
\label{lem:refund-step}
Let $(p,x,r)$ be a $\Delta$-feasible state, and suppose that buyer $i\in B$ satisfies $\alpha_i(p)\le 1$ and $\bar c_i(x)\ge \Delta$.
Let $r'$ be obtained from $r$ by replacing $r_i$ with $r_i+\Delta$.
Then $(p,x,r')$ is again $\Delta$-feasible and
\[
\bar c_i'(x)=\bar c_i(x)-\Delta.
\]
All other buyers keep the same effective cash.
\end{lemma}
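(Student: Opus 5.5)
The plan is a direct check of the four conditions of \Cref{def:delta-feasible-arctic} for $(p,x,r')$. The key point is that the refund step changes only the single coordinate $r_i$. It does not touch $p$ or $x$, so every condition that involves only $p$ and $x$ carries over verbatim from the $\Delta$-feasibility of $(p,x,r)$.

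First, I would record the effect on effective cash. By definition,
\[
\bar c_i'(x)=e_i-r_i'-\sum_{j\in G}x_{ij}=e_i-r_i-\Delta-\sum_{j\in G}x_{ij}=\bar c_i(x)-\Delta .
\]
For every other buyer $k\ne i$ we have $r_k'=r_k$, so $\bar c_k'(x)=\bar c_k(x)$. This gives the two displayed claims of the lemma.

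Next, I would verify the feasibility conditions one at a time.
\begin{enumerate}
\item Condition~1. We have $r_i'=r_i+\Delta\ge 0$ because $r_i\ge 0$ and $\Delta>0$. Also $\bar c_i'(x)=\bar c_i(x)-\Delta\ge 0$, which uses the hypothesis $\bar c_i(x)\ge\Delta$. For $k\ne i$, both quantities are unchanged and so remain nonnegative.
\item Condition~2. The backorders $b_j(p,x)$ depend only on $p$ and $x$, which are unchanged, so this condition is inherited.
\item Condition~3. The support of $x$, the equality graph $E(p)$, and the multiples-of-$\Delta$ property are all unchanged, so this condition is inherited.
\item Condition~4. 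Nonnegativity of $x$ and $p$ is likewise inherited.
\end{enumerate}

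There is no real obstacle here. The only place a hypothesis is genuinely used is $\bar c_i(x)\ge\Delta$, which keeps effective cash nonnegative. The hypothesis $\alpha_i(p)\le 1$ is not needed for feasibility itself. I would remark that it is what makes the step economically sound, since buyer $i$ weakly prefers money to any purchase. It matters only later, for the optimality and refund-complementarity arguments, and not for this lemma.
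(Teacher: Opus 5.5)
Your proposal is correct and follows the paper's proof: the refund step changes only $r_i$, so $\bar c_i$ drops by exactly $\Delta$ (staying nonnegative because $\bar c_i(x)\ge\Delta$), while prices, allocations, and hence all backorders and the remaining feasibility conditions are untouched. Your remark that $\alpha_i(p)\le 1$ plays no role in feasibility is also accurate; the paper's proof does not use that hypothesis either.
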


\begin{proof}
Increasing $r_i$ by $\Delta$ decreases $\bar e_i$ and hence $\bar c_i(x)$ by exactly $\Delta$, while leaving prices, allocations, and therefore all backorders unchanged.
Since $\bar c_i(x)\ge \Delta$, the new effective cash remains nonnegative.
All other conditions in \Cref{def:delta-feasible-arctic} are unchanged.
\end{proof}

\begin{lemma}[Buyer-terminal augmentation]
\label{lem:buyer-terminal-augmentation}
Let $(p,x,r)$ be a $\Delta$-feasible state, let $\rho\in B$ satisfy $\bar c_\rho(x)\ge \Delta$, and let $i\in B$ be reachable from $\rho$ in the residual network $N(p,x,r)$.
Let $P$ be a directed path from $\rho$ to $i$, let $x'$ be obtained from $x$ by a $\Delta$-augmentation along $P$, and let $r'$ be obtained from $r$ by replacing $r_i$ with $r_i+\Delta$.
Then:
\begin{enumerate}
    \item $b_j(p,x')=b_j(p,x)$ for every good $j\in G$;
    \item $\bar c_\ell'(x')=\bar c_\ell(x)$ for every buyer $\ell\in B\setminus\{\rho\}$;
    \item
    \[
    \bar c_\rho'(x')=\bar c_\rho(x)-\Delta.
    \]
\end{enumerate}
Consequently, $(p,x',r')$ is again $\Delta$-feasible.
\end{lemma}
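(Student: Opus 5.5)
The argument is flow conservation along the path, plus a check of each condition in \Cref{def:delta-feasible-arctic}. Write $P=(v_0=\rho,v_1,\dots,v_t=i)$. Since the path starts and ends at buyers and the residual network is bipartite, its arcs alternate between forward arcs $(i',j)$ from a buyer to a good and backward arcs $(j,i')$ from a good to a buyer. We may take $P$ to be simple, so every node appears at most once.

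\textbf{Goods (item 1).} Each good $j$ on $P$ is an interior node. It is entered by a forward arc $(v_{k-1},j)$, which raises $x_{v_{k-1}j}$ by $\Delta$, and left by a backward arc $(j,v_{k+1})$, which lowers $x_{v_{k+1}j}$ by $\Delta$. Hence $\sum_{i'}x'_{i'j}=\sum_{i'}x_{i'j}$. Goods off $P$ are untouched, and $p$ does not change, so $b_j(p,x')=b_j(p,x)$ for every $j$.

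\textbf{Buyers (items 2 and 3).} For an interior buyer $\ell$ on $P$, it is entered by a backward arc, which decreases some $x_{\ell j}$ by $\Delta$, and left by a forward arc, which increases some $x_{\ell j'}$ by $\Delta$. So $\sum_j x'_{\ell j}=\sum_j x_{\ell j}$, and $r_\ell$ is unchanged. The root $\rho$ only emits a forward arc, so its total spending rises by $\Delta$ and $\bar c_\rho$ drops by $\Delta$. The terminal $i$ only receives a backward arc, so its spending falls by $\Delta$. This is exactly offset by $r_i'=r_i+\Delta$, so $\bar c_i'(x')=\bar c_i(x)$. Buyers off $P$ are unchanged. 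If $t=0$, i.e.\ $i=\rho$, the spending is unchanged and only $r_\rho$ rises by $\Delta$, so item 3 still holds. Here item 2 is vacuous for $\rho$.

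\textbf{Feasibility, condition 1.} Condition 1 needs $r'\ge0$, which is clear since refunds only grow. It also needs $\bar c'\ge0$. For $\rho$ this follows from $\bar c_\rho(x)\ge\Delta$; all other buyers keep their effective cash.

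\textbf{Feasibility, condition 2.} Backorders and prices are unchanged, so condition 2 is inherited.

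\textbf{Feasibility, conditions 3 and 4.} These are the main point.
\begin{itemize}
\item Every forward arc used is an equality edge by \Cref{def:residual-network-arctic}, so new positive spending appears only on edges of $E(p)$. Edges whose spending decreases were already in the support and hence already in $E(p)$.
\item A backward arc $(j,\ell)$ exists only when $x_{\ell j}>0$. By $\Delta$-feasibility, $x_{\ell j}$ is a positive multiple of $\Delta$, hence at least $\Delta$. So $x'_{\ell j}\ge0$, which gives nonnegativity.
\item All changes are by $\pm\Delta$, so every $x'_{\ell j}$ remains a multiple of $\Delta$.
\end{itemize}

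The only delicate step is the integrality argument in the last bullet, which guarantees that decrements along backward arcs never drive spending negative.
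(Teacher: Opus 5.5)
Your proof is correct and follows the paper's argument. Interior goods and buyers on the alternating path are balanced, the root loses $\Delta$ of effective cash, the terminal buyer's decrement is offset by the refund increase, and nonnegativity comes from backward arcs carrying positive multiples of $\Delta$. Your explicit handling of the trivial path $i=\rho$ and of conditions 1 and 2 of \Cref{def:delta-feasible-arctic} just spells out details the paper leaves implicit.
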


\begin{proof}
Along a directed buyer-to-buyer path in $N(p,x,r)$, every internal good has one incoming forward arc and one outgoing backward arc, so the total spending into each internal good is unchanged.
Similarly, every internal buyer has one incident variable increased by $\Delta$ and one incident variable decreased by $\Delta$, so its total spending is unchanged.
The root buyer $\rho$ has one additional outgoing increment of $\Delta$.
The terminal buyer $i$ has one incoming decrement of $\Delta$, but we also increase her refund by $\Delta$.
Therefore the effective cash of every buyer except $\rho$ is unchanged, while the effective cash of $\rho$ decreases by exactly $\Delta$.
This proves the three identities.

Since every backward arc corresponds to a positive allocation and every positive allocation is a multiple of $\Delta$, no variable becomes negative after the augmentation.
Moreover, every positive allocation remains supported on an equality edge and remains a multiple of $\Delta$.
Hence all conditions of \Cref{def:delta-feasible-arctic} are preserved.
\end{proof}

\begin{lemma}[Good-terminal augmentation]
\label{lem:good-terminal-augmentation}
Let $(p,x,r)$ be a $\Delta$-feasible state, let $\rho\in B$ satisfy $\bar c_\rho(x)\ge \Delta$, and let $j\in G$ be reachable from $\rho$ in $N(p,x,r)$ with $b_j(p,x)\le 0$.
Let $P$ be a directed path from $\rho$ to $j$, and let $x'$ be obtained from $x$ by a $\Delta$-augmentation along $P$.
Then:
\begin{enumerate}
    \item $b_h(p,x')=b_h(p,x)$ for every good $h\in G\setminus\{j\}$;
    \item $b_j(p,x')=b_j(p,x)+\Delta$;
    \item $\bar c_\rho(x')=\bar c_\rho(x)-\Delta$ and $\bar c_\ell(x')=\bar c_\ell(x)$ for every buyer $\ell\in B\setminus\{\rho\}$.
\end{enumerate}
Consequently, $(p,x',r)$ is again $\Delta$-feasible.
\end{lemma}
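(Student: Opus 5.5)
The proof mirrors that of \Cref{lem:buyer-terminal-augmentation}: first track how the $\Delta$-augmentation along $P=(\rho=v_0,v_1,\dots,v_t=j)$ changes the incident spending at each node, then check each condition of \Cref{def:delta-feasible-arctic}. Since $P$ starts at a buyer and ends at a good, it alternates buyer, good, buyer, \dots, good. The odd arcs $(v_{2k},v_{2k+1})$ are forward arcs from a buyer to a good. The even arcs $(v_{2k+1},v_{2k+2})$ are backward arcs from a good to a buyer.

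For the identities, take each node in turn.
\begin{itemize}
\item Each internal good $v_{2k+1}$ with $v_{2k+1}\neq j$ has one incoming forward arc, whose spending rises by $\Delta$, and one outgoing backward arc, whose spending falls by $\Delta$. Hence $\sum_i x_{i v_{2k+1}}$ and $b_{v_{2k+1}}$ are unchanged.
\item Each internal buyer has one incident variable decreased (incoming backward arc) and one increased (outgoing forward arc), so its total spending and effective cash are unchanged.
\item The root $\rho$ has only one outgoing forward arc, so $\bar c_\rho$ drops by $\Delta$.
\item The terminal good $j$ has only an incoming forward arc, so $b_j$ rises by $\Delta$.
\item Nodes off $P$ are untouched, and $p$ and $r$ do not change.
\end{itemize}
Since $P$ is a path, its nodes are distinct and no node is counted twice. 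This gives items~1--3.

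For feasibility, check each condition of \Cref{def:delta-feasible-arctic}.
\begin{itemize}
\item Condition~1. We need $\bar c_\rho(x')=\bar c_\rho(x)-\Delta\ge 0$, which follows from the hypothesis $\bar c_\rho(x)\ge\Delta$. All other effective cashes are unchanged.
\item Condition~4. Every backward arc $(j',i')$ exists only because $x_{i'j'}>0$. By $\Delta$-feasibility that value is a positive multiple of $\Delta$, hence at least $\Delta$, so subtracting $\Delta$ keeps it nonnegative.
\item Condition~3. Forward arcs lie in $E(p)$ and prices are unchanged, so every new positive entry is still an equality edge. All entries remain multiples of $\Delta$.
\item Condition~2. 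For goods $h\ne j$ the backorder is unchanged. For $j$ itself, $b_j(p,x')=b_j(p,x)+\Delta\le\Delta$, using $b_j(p,x)\le 0$.
\end{itemize}

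The only delicate step is the lower bound $b_j(p,x')\ge 0$ in condition~2 when $p_j>p^0_j$, since $b_j(p,x)$ may be negative before the augmentation. I would first try to derive it from the hypothesis together with feasibility. If $p_j>p_j^0$, then $\Delta$-feasibility of $(p,x,r)$ gives $b_j(p,x)\ge 0$, and combined with $b_j(p,x)\le 0$ this forces $b_j(p,x)=0$. Then $b_j(p,x')=\Delta\in[0,\Delta]$. If instead $p_j=p_j^0$, condition~2 imposes nothing on $j$. Either way the condition holds, so no extra invariant is needed and $(p,x',r)$ is $\Delta$-feasible.
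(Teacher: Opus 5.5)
Your proof is correct and follows the same route as the paper's: node-by-node accounting of spending along the alternating buyer--good path, followed by a check of each $\Delta$-feasibility condition. Your explicit check that $b_j(p,x')\ge 0$ fills a small detail the paper's proof leaves implicit, since it only checks the upper bound $b_j(p,x')\le\Delta$. The check works because feasibility forces $b_j(p,x)=0$ whenever $p_j>p_j^0$, and condition~2 imposes nothing otherwise.
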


\begin{proof}
Every internal good on the path has one incoming forward arc and one outgoing backward arc, so its total incoming spending does not change.
The terminal good $j$ has one incoming forward arc and no outgoing backward arc, and therefore its incoming spending increases by exactly $\Delta$.
This proves the first two claims.

Every internal buyer has one incident variable increased by $\Delta$ and one incident variable decreased by $\Delta$, so its total spending is unchanged.
The root buyer $\rho$ has one additional outgoing increment of $\Delta$, hence $\bar c_\rho(x')=\bar c_\rho(x)-\Delta$, and all other buyers keep the same effective cash.

For goods $h\neq j$, the backorder is unchanged.
For the terminal good $j$, we have
\[
b_j(p,x')=b_j(p,x)+\Delta\le \Delta,
\]
because $b_j(p,x)\le 0$.
All other conditions in \Cref{def:delta-feasible-arctic} are preserved exactly as in the previous lemma.
\end{proof}

To measure progress within a scaling phase, we use the potential
\[
\Phi_\Delta(p,x,r):=
\sum_{i\in B}\left\lfloor\frac{\bar c_i(x)}{\Delta}\right\rfloor.
\]

\begin{lemma}[Potential drop in the refund step]
\label{lem:potential-drop-refund}
Let $(p,x,r)$ be a $\Delta$-feasible state.
Suppose that buyer $i\in B$ satisfies $\alpha_i(p)\le 1$ and $\bar c_i(x)\ge \Delta$.
If $(p,x,r')$ is obtained by increasing $r_i$ by $\Delta$, then
\[
\Phi_\Delta(p,x,r')=\Phi_\Delta(p,x,r)-1.
\]
\end{lemma}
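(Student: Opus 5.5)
The plan is to reduce the claim to \Cref{lem:refund-step} together with the elementary identity $\lfloor (a-\Delta)/\Delta\rfloor=\lfloor a/\Delta\rfloor-1$, valid for every real $a$.

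First, apply \Cref{lem:refund-step}. Since $\alpha_i(p)\le 1$ and $\bar c_i(x)\ge\Delta$, the state $(p,x,r')$ is again $\Delta$-feasible. Moreover,
\[
\bar c_i'(x)=\bar c_i(x)-\Delta,
\]
and every buyer $\ell\neq i$ keeps the same effective cash $\bar c_\ell(x)$. Consequently every summand of $\Phi_\Delta$ indexed by $\ell\in B\setminus\{i\}$ is identical in $\Phi_\Delta(p,x,r)$ and $\Phi_\Delta(p,x,r')$. Note that prices and allocations are untouched, and $\Phi_\Delta$ depends only on the effective cash, so no other quantity enters.

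Second, compute the $i$-th summand:
\[
\left\lfloor\frac{\bar c_i(x)-\Delta}{\Delta}\right\rfloor
=\left\lfloor\frac{\bar c_i(x)}{\Delta}-1\right\rfloor
=\left\lfloor\frac{\bar c_i(x)}{\Delta}\right\rfloor-1,
\]
since subtracting an integer commutes with the floor function. Summing over all buyers then gives $\Phi_\Delta(p,x,r')=\Phi_\Delta(p,x,r)-1$.

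There is essentially no obstacle here. The only point to check is that the hypothesis $\bar c_i(x)\ge\Delta$ keeps the new effective cash nonnegative, so that the state stays $\Delta$-feasible and $\Phi_\Delta$ remains a sum of nonnegative integers. This is already guaranteed by \Cref{lem:refund-step}; the floor identity itself holds regardless.
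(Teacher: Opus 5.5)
Your proposal is correct and matches the paper's proof essentially step for step. Both invoke \Cref{lem:refund-step} to see that only buyer $i$'s effective cash changes, decreasing by exactly $\Delta$, and then use $\lfloor (\bar c_i(x)-\Delta)/\Delta\rfloor=\lfloor \bar c_i(x)/\Delta\rfloor-1$, leaving all other summands of $\Phi_\Delta$ unchanged.
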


\begin{proof}
By \Cref{lem:refund-step}, only buyer $i$ changes effective cash, and her effective cash decreases by exactly $\Delta$.
Hence
\[
\left\lfloor\frac{\bar c_i'(x)}{\Delta}\right\rfloor
=
\left\lfloor\frac{\bar c_i(x)-\Delta}{\Delta}\right\rfloor
=
\left\lfloor\frac{\bar c_i(x)}{\Delta}\right\rfloor-1,
\]
while all other terms remain unchanged.
\end{proof}

\begin{lemma}[Potential drop in the buyer-terminal case]
\label{lem:potential-drop-buyer}
Let $(p,x,r)$ be $\Delta$-feasible.
Suppose that $\rho\in B$ satisfies $\bar c_\rho(x)\ge \Delta$, and that
$i\in \ActiveSet(p,x,r,\rho)\cap B$ satisfies $\alpha_i(p)=1$.
If $(p,x',r')$ is obtained by augmenting along a directed path from $\rho$ to $i$ and then increasing $r_i$ by $\Delta$, then
\[
\Phi_\Delta(p,x',r')=\Phi_\Delta(p,x,r)-1.
\]
\end{lemma}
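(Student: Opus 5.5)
The potential drop follows directly from \Cref{lem:buyer-terminal-augmentation}, in the same way that \Cref{lem:potential-drop-refund} follows from \Cref{lem:refund-step}. First we check that the hypotheses match. The state $(p,x,r)$ is $\Delta$-feasible and $\bar c_\rho(x)\ge \Delta$. The buyer $i$ lies in $\ActiveSet(p,x,r,\rho)$, so by \Cref{def:residual-network-arctic} it is reachable from $\rho$ in $N(p,x,r)$, and a directed path $P$ from $\rho$ to $i$ exists. The hypothesis $\alpha_i(p)=1$ is not needed for the potential computation; it only justifies, economically, putting the released $\Delta$ into $r_i$. Hence \Cref{lem:buyer-terminal-augmentation} applies to $x'$ and $r'$.

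From that lemma we get $\bar c_\ell'(x')=\bar c_\ell(x)$ for every $\ell\in B\setminus\{\rho\}$ and $\bar c_\rho'(x')=\bar c_\rho(x)-\Delta$. One case needs a separate check: $i=\rho$, i.e., the trivial path. Then $x'=x$ and $r'_\rho=r_\rho+\Delta$, so effective cash again falls by exactly $\Delta$ at $\rho$ and nowhere else. This case is therefore covered by the same identities, or equivalently by \Cref{lem:refund-step}.

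Substituting into $\Phi_\Delta$, every term with $\ell\neq\rho$ is unchanged. For $\rho$,
\[
\left\lfloor\frac{\bar c_\rho(x)-\Delta}{\Delta}\right\rfloor=\left\lfloor\frac{\bar c_\rho(x)}{\Delta}\right\rfloor-1,
\]
because $\lfloor y-1\rfloor=\lfloor y\rfloor-1$ for every real $y$. Summing over all buyers gives $\Phi_\Delta(p,x',r')=\Phi_\Delta(p,x,r)-1$.

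The only point that needs care is the bookkeeping at the terminal buyer $i$. The augmentation decreases one of $i$'s spending variables by $\Delta$, which by itself would raise $\bar c_i$ by $\Delta$; the refund increase of $\Delta$ cancels this exactly. This cancellation is item~2 of \Cref{lem:buyer-terminal-augmentation}, so the argument can simply cite it.
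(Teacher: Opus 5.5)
Your proposal is correct and follows the paper's proof. Both cite \Cref{lem:buyer-terminal-augmentation} to conclude that only $\rho$'s effective cash changes, dropping by exactly $\Delta$, and then apply $\lfloor y-1\rfloor=\lfloor y\rfloor-1$ to that single term; your separate check of the trivial path $i=\rho$, which can occur because $\rho$ lies in its own active set, is extra care the paper leaves implicit.
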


\begin{proof}
By \Cref{lem:buyer-terminal-augmentation}, only the effective cash of $\rho$ changes, and it decreases by exactly $\Delta$.
Thus
\[
\left\lfloor\frac{\bar c_\rho'(x')}{\Delta}\right\rfloor
=
\left\lfloor\frac{\bar c_\rho(x)-\Delta}{\Delta}\right\rfloor
=
\left\lfloor\frac{\bar c_\rho(x)}{\Delta}\right\rfloor-1,
\]
while all other terms remain unchanged.
\end{proof}

\begin{lemma}[Potential drop in the good-terminal case]
\label{lem:potential-drop-good}
Let $(p,x,r)$ be $\Delta$-feasible.
Suppose that $\rho\in B$ satisfies $\bar c_\rho(x)\ge \Delta$, and that there is an active good
$j\in \ActiveSet(p,x,r,\rho)\cap G$ with $b_j(p,x)\le 0$.
If $(p,x',r)$ is obtained by augmenting along a directed path from $\rho$ to $j$, then
\[
\Phi_\Delta(p,x',r)=\Phi_\Delta(p,x,r)-1.
\]
\end{lemma}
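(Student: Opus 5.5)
The argument mirrors the proofs of \Cref{lem:potential-drop-refund} and \Cref{lem:potential-drop-buyer}, with \Cref{lem:good-terminal-augmentation} supplying the bookkeeping. Since the augmentation changes only $x$ and leaves $r$ untouched, each buyer's effective budget $\bar e_\ell=e_\ell-r_\ell$ stays the same. So the effective cash of a buyer changes only through the change in its total spending $\sum_{j}x_{\ell j}$.

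First I will invoke item~3 of \Cref{lem:good-terminal-augmentation}. It gives $\bar c_\rho(x')=\bar c_\rho(x)-\Delta$ and $\bar c_\ell(x')=\bar c_\ell(x)$ for every $\ell\in B\setminus\{\rho\}$. The hypotheses of that lemma are exactly those assumed here:
\begin{itemize}
\item $(p,x,r)$ is $\Delta$-feasible;
\item $\bar c_\rho(x)\ge\Delta$;
\item $j$ is reachable from $\rho$ in $N(p,x,r)$, which is what $j\in\ActiveSet(p,x,r,\rho)$ means;
\item $b_j(p,x)\le 0$.
\end{itemize}
A directed path from $\rho$ to $j$ therefore exists, and the augmentation is well defined.

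Next I will compute the floor term for $\rho$ using the identity $\lfloor (a-\Delta)/\Delta\rfloor=\lfloor a/\Delta\rfloor-1$. This gives
\[
\left\lfloor\frac{\bar c_\rho(x')}{\Delta}\right\rfloor=\left\lfloor\frac{\bar c_\rho(x)}{\Delta}\right\rfloor-1 .
\]
All other summands of $\Phi_\Delta$ are unchanged, so summing over $B$ yields $\Phi_\Delta(p,x',r)=\Phi_\Delta(p,x,r)-1$.

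There is no real obstacle. The only point needing care is that the potential is defined through $\bar c$, which depends on $r$, so I must note explicitly that $r$ is unchanged in this case. This contrasts with the buyer-terminal case, where the refund increment was needed to keep the terminal buyer's effective cash fixed.
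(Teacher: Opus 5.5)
Your proposal is correct and follows essentially the same route as the paper: both invoke \Cref{lem:good-terminal-augmentation} to conclude that only $\rho$'s effective cash changes, dropping by exactly $\Delta$, and then observe that $\rho$'s floor term drops by one while all other terms are unchanged. Your explicit remark that $r$ is untouched, so $\bar e$ is fixed, is a small clarification that the paper leaves implicit.
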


\begin{proof}
By \Cref{lem:good-terminal-augmentation}, the only buyer whose effective cash changes is the root buyer $\rho$, and
\[
\bar c_\rho(x')=\bar c_\rho(x)-\Delta.
\]
Hence the contribution of $\rho$ to the potential decreases by exactly one, and all other terms remain unchanged.
\end{proof}

\begin{proposition}
\label{prop:modified-phase}
Every iteration of the inner loop of \Cref{alg:scaling-arctic} transforms a $\Delta$-feasible state into another $\Delta$-feasible state and decreases the potential $\Phi_\Delta$ by exactly one.
Consequently, the inner loop terminates after finitely many iterations and returns a $\Delta$-optimal state.
\end{proposition}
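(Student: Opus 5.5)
The plan is to treat the two branches of the inner loop of \Cref{alg:scaling-arctic} separately, and then close with a potential argument.

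\textbf{Refund branch.} Some buyer $i$ has $\alpha_i(p)\le 1$ and $\bar c_i(x)\ge\Delta$. Here \Cref{lem:refund-step} gives $\Delta$-feasibility directly, and \Cref{lem:potential-drop-refund} gives a potential drop of exactly one.

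\textbf{Price-and-augment branch.} The state is not $\Delta$-optimal, so some buyer $\rho$ has $\bar c_\rho(x)\ge\Delta$. Since the refund branch was not taken, every such buyer has $\alpha_\rho(p)>1$. Hence the choice in line~1 of \Cref{alg:arctic-price-augment} is well defined.

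\textbf{Inside \Cref{alg:arctic-price-augment}.} I would proceed in three steps.
\begin{enumerate}
\item \emph{Feasibility and fixed data.} Each call to $\UpdatePrice^{\star}$ preserves $\Delta$-feasibility by \Cref{lem:update-price-star-feasible}. It changes neither $x$ nor $r$, so $\bar c_\rho(x)\ge\Delta$ persists throughout the while loop.
\item \emph{Termination of the while loop.}
\begin{itemize}
\item Each price update ends with one of the three events. Event~(2) or event~(3) stops the loop.
\item Event~(1) adds a new equality edge from an active buyer to an inactive good. This strictly enlarges the active set, because prices are only scaled up on active goods and forward arcs already present at active buyers persist.
\item So at most $|G|$ consecutive updates can be of type~(1) before the active set is all of $B\cup G$.
\item We must also argue that some finite $q$ triggers an event at all. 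As $q\to\infty$, $\alpha_\rho(\widehat p)\to 0$. Thus either a new equality edge appears, or, since $\alpha_\rho>1$ initially, some active buyer hits $\alpha=1$ at a finite $q$.
\item Feasibility is not violated before an event occurs. The backorders of active goods decrease as $q$ grows but stay $\le \Delta$. They remain $\ge 0$ until event~(2) occurs. The condition ``if $p_j>p_j^0$'' only makes the constraint bind for goods whose backorder starts nonnegative.
\end{itemize}
\item \emph{The augmentation.} When the loop exits, the chosen terminal is active, so a directed path $P$ from $\rho$ exists by definition of $\ActiveSet$. Then:
\begin{itemize}
\item For a buyer terminal with $\alpha_i=1$, \Cref{lem:buyer-terminal-augmentation} gives feasibility and \Cref{lem:potential-drop-buyer} gives the drop.
\item For a good terminal with $b_j\le 0$, \Cref{lem:good-terminal-augmentation} gives feasibility and \Cref{lem:potential-drop-good} gives the drop.
\item In both cases no other buyer's effective cash changes, so $\Phi_\Delta$ decreases by exactly one.
\end{itemize}
\end{enumerate}

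\textbf{Conclusion.} $\Phi_\Delta$ is a sum of nonnegative integers, since $\bar c_i\ge 0$ by feasibility. Its initial value is finite, at most $\sum_i e_i/\Delta$. Because each iteration lowers it by one, the loop stops after finitely many iterations. The loop only stops when the state is $\Delta$-optimal, and feasibility holds throughout, so the returned state is $\Delta$-optimal.

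\textbf{Main obstacle.} The step I expect to be hardest is termination of the while loop in \Cref{alg:arctic-price-augment}. Specifically, one must show that $\UpdatePrice^{\star}$ always finds a finite maximizing $q$ with an event occurring before feasibility is violated. I would handle this through the monotonicity of $\alpha_\rho(\widehat p)$ in $q$, as above, together with growth of the active set under event~(1).
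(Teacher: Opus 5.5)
Your proposal is correct and follows essentially the same route as the paper, whose proof simply cites \Cref{lem:refund-step,lem:potential-drop-refund} for the refund branch and \Cref{lem:update-price-star-feasible,lem:buyer-terminal-augmentation,lem:good-terminal-augmentation,lem:potential-drop-buyer,lem:potential-drop-good} for the price-and-augment branch, then concludes from the nonnegativity and integrality of $\Phi_\Delta$. Your extra points (that $\rho$ exists because the refund branch was skipped, and that the price-raising loop is finite) fill in details the paper takes for granted; one claim there is imprecise, since read literally event~(1) can also be a new equality edge at an inactive buyer whose equality goods were all active, which does not enlarge the active set, but each inactive buyer can trigger this at most once between active-set enlargements, so finiteness still holds.
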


\begin{proof}
If the iteration is a refund step, feasibility preservation is given by \Cref{lem:refund-step} and the potential drop is given by \Cref{lem:potential-drop-refund}.
If the iteration calls \Cref{alg:arctic-price-augment}, feasibility preservation under price updates is given by \Cref{lem:update-price-star-feasible}, feasibility preservation under augmentations is given by \Cref{lem:buyer-terminal-augmentation,lem:good-terminal-augmentation}, and the potential drop is given by \Cref{lem:potential-drop-buyer,lem:potential-drop-good}.
Since $\Phi_\Delta$ is a nonnegative integer, the inner loop must terminate.
At termination there is no buyer $i$ with $\bar c_i(x)\ge \Delta$, which is exactly the condition in \Cref{def:delta-optimal-arctic}.
\end{proof}

\subsection{Passing from one scale to the next}

Once a $\Delta$-optimal state has been reached, the algorithm either stops if the support has already been identified, or else passes to the next scale.

\begin{proposition}[Halving the scaling parameter]
\label{prop:halving}
Let $(p,x,r)$ be a $\Delta$-optimal state, and set $\Delta':=\Delta/2$.
For each good $j\in G$ with $b_j(p,x)>\Delta'$, choose a buyer $i(j)$ with $x_{i(j)j}>0$ and replace
\[
x_{i(j)j}\leftarrow x_{i(j)j}-\Delta'.
\]
Let $\widetilde x$ denote the resulting allocation.
Then $(p,\widetilde x,r)$ is $\Delta'$-feasible.
\end{proposition}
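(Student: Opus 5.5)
We verify the four conditions of \Cref{def:delta-feasible-arctic} at scale $\Delta'$ for $(p,\widetilde x,r)$, using that $(p,x,r)$ is $\Delta$-feasible and $\Delta$-optimal. The only modification touches, for each good $j$ with $b_j(p,x)>\Delta'$, a single edge $(i(j),j)$. The argument has three steps: well-definedness and nonnegativity, the buyer conditions, and the backorder bounds.

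\textbf{Step 1: well-definedness and nonnegativity.} If $b_j(p,x)>\Delta'>0$, then $\sum_i x_{ij}>p_j\ge 0$, so some $x_{ij}>0$ and $i(j)$ exists. By condition~3 at scale $\Delta$, this $x_{i(j)j}$ is a positive multiple of $\Delta$, hence at least $\Delta$. Therefore $\widetilde x_{i(j)j}=x_{i(j)j}-\Delta'\ge \Delta'>0$, which gives condition~4 and keeps the edge in the support. Every positive $\widetilde x_{ij}$ is either an unchanged multiple of $\Delta$ or a multiple of $\Delta$ minus $\Delta'$; both are multiples of $\Delta'$. Since $p$ is unchanged, $E(p)$ is unchanged and the support of $\widetilde x$ is contained in that of $x$, so condition~3 holds.

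\textbf{Step 2: the buyer conditions.} Decreasing spending can only increase effective cash: $\bar c_i(\widetilde x)\ge \bar c_i(x)\ge 0$. The refund $r$ is untouched, so $r_i\ge 0$ and condition~1 holds. We do not need $\Delta'$-optimality here, only feasibility.

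\textbf{Step 3: the backorder condition, the only delicate point.} Take $j$ with $p_j>p_j^0$. At scale $\Delta$ we have $0\le b_j(p,x)\le \Delta$. If $b_j(p,x)\le \Delta'$, then $j$ is untouched and $0\le b_j(p,\widetilde x)\le\Delta'$. Otherwise $\Delta'<b_j(p,x)\le \Delta$, and $j$ loses exactly $\Delta'$ of incoming spending. This is because each good is repaired once and the decrement affects only good $j$'s column, so $b_j(p,\widetilde x)=b_j(p,x)-\Delta'\in(0,\Delta']$.

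Goods with $p_j=p_j^0$ carry no backorder requirement, so reducing their spending is harmless. The one thing to check is that the rule ``choose $i(j)$ with $x_{i(j)j}>0$'' is applied only once per good, so that no good receives a double decrement. This follows directly from the construction, since each good $j$ triggers a single replacement. Combining the three steps gives $\Delta'$-feasibility.
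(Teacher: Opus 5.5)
Your proof is correct and follows essentially the same route as the paper's. You check the four conditions of $\Delta'$-feasibility directly, using that each repaired variable is a positive multiple of $\Delta$ (so it stays at least $\Delta'$) and that a repaired good's backorder moves from $(\Delta',\Delta]$ to $(0,\Delta']$; your explicit check that $i(j)$ exists, since $b_j(p,x)>\Delta'$ forces $\sum_i x_{ij}>p_j\ge 0$, is a small detail the paper leaves implicit.
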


\begin{proof}
We verify the four conditions of \Cref{def:delta-feasible-arctic}.

First, every refund remains nonnegative and every buyer's effective cash remains nonnegative, because the repair step only decreases some spending variables.
Second, if $p_j>p_j^0$, then initially $0\le b_j(p,x)\le \Delta$.
If $b_j(p,x)\le \Delta'$, the backorder is unchanged.
If $b_j(p,x)>\Delta'$, then after the repair step the new backorder is
\[
b_j(p,\widetilde x)=b_j(p,x)-\Delta'\in(0,\Delta'].
\]
Hence the backorder condition holds for every good.

Third, every positive allocation in $\widetilde x$ remains supported on an equality edge, since the repair step does not introduce any new positive variable.
Moreover, each allocation in $x$ is a multiple of $\Delta=2\Delta'$, and therefore also a multiple of $\Delta'$.
Subtracting $\Delta'$ preserves the property of being a multiple of $\Delta'$.

Finally, nonnegativity holds because every repaired variable satisfies
\[
x_{i(j)j}\ge \Delta=2\Delta'>\Delta',
\]
as $x_{i(j)j}$ is a positive multiple of $\Delta$.
Thus $(p,\widetilde x,r)$ is $\Delta'$-feasible.
\end{proof}

\begin{proposition}[Potential bound at the beginning of a phase]
\label{prop:phase-potential-bound}
At the beginning of every $\Delta$-scaling phase,
\[
\Phi_\Delta(p,x,r)\le n.
\]
Consequently, every $\Delta$-scaling phase performs at most $n$ iterations of the inner loop of \Cref{alg:scaling-arctic}.
\end{proposition}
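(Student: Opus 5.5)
The bound is proved by separating the first scaling phase from all later ones. In each case I bound each buyer's term $\lfloor \bar c_i(x)/\Delta\rfloor$ individually and then sum over $B$. The bound on the number of iterations then follows from \Cref{prop:modified-phase}. Each inner-loop iteration lowers $\Phi_\Delta$ by exactly one, and $\Phi_\Delta$ stays a nonnegative integer, since \Cref{def:delta-feasible-arctic} gives $\bar c_i\ge 0$. So the number of iterations in a phase is at most the initial value of $\Phi_\Delta$.

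\emph{First phase.} Here $\Delta=\Delta_0=e_{\max}$ and the state is $(p^0,x^0,r^0)$ with $x^0=0$ and $r^0=0$. Hence $\bar c_i(x^0)=e_i\le e_{\max}$, so $\lfloor \bar c_i/\Delta_0\rfloor\le 1$ for every buyer. Summing gives $\Phi_{\Delta_0}\le |B|\le n$.

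\emph{Later phases.} Suppose the phase with scale $\Delta'=\Delta/2$ starts from the state $(p,\widetilde x,r)$ produced by the halving-and-repair step. That step is applied to a $\Delta$-optimal state $(p,x,r)$, so $\bar c_i(x)<\Delta=2\Delta'$ for all $i$. The repair step changes neither $p$ nor $r$. It decreases exactly one spending variable $x_{i(j)j}$ by $\Delta'$ for each good $j$ with $b_j(p,x)>\Delta'$. Let $k_i$ be the number of goods $j$ with $i(j)=i$. Then
\[
\bar c_i(\widetilde x)=\bar c_i(x)+k_i\Delta'<(2+k_i)\Delta',
\]
and therefore $\lfloor \bar c_i(\widetilde x)/\Delta'\rfloor\le 1+k_i$. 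Each good is charged to at most one buyer, so $\sum_i k_i\le |G|$. Summing over $B$ gives
\[
\Phi_{\Delta'}(p,\widetilde x,r)\le |B|+|G|=n .
\]

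The only step that needs care is the later-phase accounting. The repair step can increase a single buyer's cash by several multiples of $\Delta'$, so the per-buyer bound of one no longer holds. Charging each repaired good to its buyer $i(j)$ fixes this: it absorbs the increase into the $|G|$ term. The strict inequality $\bar c_i(x)<2\Delta'$ coming from $\Delta$-optimality is what keeps each buyer's own contribution at most $1$ rather than $2$. Everything else is direct bookkeeping from the definitions.
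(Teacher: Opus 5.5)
Your proof is correct and is essentially the paper's argument. It uses the same split into the initial phase, where $\bar c_i(x^0)=e_i\le e_{\max}$ gives each buyer a term of at most $1$, and later phases, where $\Delta$-optimality before the repair step caps each buyer's own term at $1$ and each repaired good adds at most $1$; your per-buyer count $k_i$ is only a regrouping of the paper's per-good accounting, and the iteration bound is read off from \Cref{prop:modified-phase} exactly as in the paper.
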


\begin{proof}
For the initial phase, $\Delta=\Delta_0=e_{\max}$ and $(x,r)=(x^0,r^0)$, so
\[
\bar c_i(x^0)=e_i\le e_{\max}=\Delta_0
\qquad
\text{for every }i\in B.
\]
Hence
\[
\Phi_{\Delta_0}(p^0,x^0,r^0)\le |B|<n.
\]

Now consider any later $\Delta$-phase.
It starts from a $2\Delta$-optimal state, followed by the repair step of \Cref{prop:halving}.
Before the repair step, every buyer has effective cash strictly less than $2\Delta$, so the contribution of each buyer to $\Phi_\Delta$ is at most $1$.
Thus the potential is at most $|B|$ at that point.
Each repaired good increases the effective cash of at most one buyer by exactly $\Delta$, and therefore increases the potential by at most $1$.
Since there are at most $|G|$ repaired goods, the total increase is at most $|G|$.
Therefore
\[
\Phi_\Delta(p,x,r)\le |B|+|G|=n
\]
at the beginning of the new phase.
The second statement follows from \Cref{prop:modified-phase}.
\end{proof}

\subsection{Abundant edges and recovery of the equilibrium}

\begin{definition}[$\Delta$-abundant edge]
\label{def:abundant-edge}
Let $(p,x,r)$ be the state at the beginning of a $\Delta$-scaling phase.
An edge $(i,j)$ is called $\Delta$-abundant if
\[
x_{ij}\ge 3n\Delta.
\]
\end{definition}

The previous subsection implies that within a single phase, each spending variable can change only by a controlled amount.

\begin{lemma}
\label{lem:per-phase-change}
Let $(p,x,r)$ be the state at the beginning of a $\Delta$-scaling phase, and let $(p',x',r')$ be the state at the end of that phase, before the repair step of \Cref{prop:halving}.
Then for every edge $(i,j)$,
\[
|x'_{ij}-x_{ij}|\le n\Delta.
\]
\end{lemma}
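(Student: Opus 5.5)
The plan is to track how spending variables change during a single $\Delta$-scaling phase and to count the operations that can touch them. Inside a phase, $x$ is modified only in the following ways:
\begin{itemize}
\item the refund steps of \Cref{alg:scaling-arctic}, which change only $r$ and leave $x$ untouched;
\item the price updates $\UpdatePrice^{\star}$ inside \Cref{alg:arctic-price-augment}, which change only $p$;
\item the $\Delta$-augmentations along a path $P$, either buyer-terminal (\Cref{lem:buyer-terminal-augmentation}) or good-terminal (\Cref{lem:good-terminal-augmentation}).
\end{itemize}
The repair step of \Cref{prop:halving} is excluded by hypothesis, since $x'$ is taken before it. So it suffices to bound how much one augmentation can move a single variable, and how many augmentations a phase contains.

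\emph{Step 1 (one augmentation).} By \Cref{def:delta-augmentation-arctic}, an augmentation changes $x_{ij}$ by $+\Delta$ if $(i,j)$ is a forward arc of $P$, by $-\Delta$ if $(j,i)$ is a backward arc of $P$, and leaves it unchanged otherwise. We choose $P$ as a simple directed path in $N(p,x,r)$, which is possible because any reachable node is reachable by a simple path. Then each unordered pair $\{i,j\}$ appears on $P$ at most once. Hence every single augmentation changes each $x_{ij}$ by at most $\Delta$ in absolute value.

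\emph{Step 2 (number of augmentations).} By \Cref{prop:modified-phase}, every iteration of the inner loop decreases $\Phi_\Delta$ by exactly one. Each iteration is either a refund step, which performs no augmentation, or one call to \Cref{alg:arctic-price-augment}, which performs exactly one augmentation after its price loop. By \Cref{prop:phase-potential-bound}, $\Phi_\Delta\le n$ at the start of the phase and $\Phi_\Delta\ge 0$ throughout. So the phase has at most $n$ iterations, hence at most $n$ augmentations.

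\emph{Conclusion.} By the triangle inequality, summing the per-augmentation bound of Step 1 over the at most $n$ augmentations of Step 2 gives $|x'_{ij}-x_{ij}|\le n\Delta$ for every edge $(i,j)$.

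The main subtlety is Step 1. It relies on $P$ being simple, which the algorithm does not state explicitly, though its \emph{choose a directed path} can always be realized by a simple one. If non-simple paths were allowed, a variable could change by more than $\Delta$ in one augmentation. Even then the bound could be recovered, since $N(p,x,r)$ has at most one arc in each direction per pair. I would state the simplicity convention explicitly at this point.
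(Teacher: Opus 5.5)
Your proof is correct and is the paper's argument: by \Cref{prop:phase-potential-bound} a phase has at most $n$ inner-loop iterations, and each one changes any fixed $x_{ij}$ by at most $\Delta$. Making the simplicity of $P$ explicit is a useful clarification the paper leaves implicit. Your aside about non-simple walks should be dropped, or justified differently: the bound survives because the support lies in the cycle-free graph $E(p)$, so traversals of each edge cancel to a net of at most one, whereas having at most one arc per direction per pair is not enough on its own.
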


\begin{proof}
By \Cref{prop:phase-potential-bound}, the phase performs at most $n$ iterations.
Each iteration changes any fixed coordinate by at most $\Delta$ in absolute value.
Hence the total change of any coordinate during the phase is at most $n\Delta$.
\end{proof}

\begin{lemma}
\label{lem:abundant-persists}
If an edge is $\Delta$-abundant at the beginning of a $\Delta$-phase, then it is $(\Delta/2)$-abundant at the beginning of the next phase.
In particular, once an edge becomes abundant, it remains abundant in all later phases.
\end{lemma}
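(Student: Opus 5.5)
We track a fixed edge $(i,j)$ through the end of the $\Delta$-phase, through the halving-and-repair step, and into the start of the $\Delta/2$-phase.

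Let $x$ be the allocation at the start of the $\Delta$-phase, with $x_{ij}\ge 3n\Delta$. Let $x'$ be the allocation at the end of the phase, before repair.

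\textbf{Step 1.} By \Cref{lem:per-phase-change},
\[
x'_{ij}\ge x_{ij}-n\Delta\ge 2n\Delta .
\]

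\textbf{Step 2.} Bound the effect of the repair step of \Cref{prop:halving}. For each good $h$, the repair subtracts $\Delta'=\Delta/2$ from exactly one variable $x_{i(h)h}$. So the coordinate $(i,j)$ is decreased at most once, namely only when $h=j$ and $i(j)=i$. Hence
\[
\widetilde x_{ij}\ge x'_{ij}-\Delta/2\ge 2n\Delta-\Delta/2 .
\]

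\textbf{Step 3.} Compare with the abundance threshold at the next scale. At the start of the $\Delta/2$-phase the threshold is $3n(\Delta/2)=\tfrac32 n\Delta$. We need
\[
2n\Delta-\tfrac12\Delta\ge \tfrac32 n\Delta ,
\]
which is equivalent to $\tfrac12 n\Delta\ge\tfrac12\Delta$, i.e.\ $n\ge 1$. This holds trivially, so $(i,j)$ is $(\Delta/2)$-abundant. The second claim, that an abundant edge stays abundant in all later phases, follows by induction on the phases.

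\textbf{Main obstacle.} The only real concern is what \Cref{lem:per-phase-change} actually covers. Its bound of $\Delta$ per iteration must hold for every kind of step in the inner loop. Refund steps do not touch $x$ at all. Augmentations change each coordinate by at most $\Delta$, since a simple path uses each edge at most once. Price updates leave $x$ unchanged. So the per-phase bound applies as stated, and the argument above goes through.
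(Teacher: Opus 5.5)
Your proof is correct and follows the paper's argument: \Cref{lem:per-phase-change} gives at least $2n\Delta$ at the end of the phase, the repair step removes at most $\Delta/2$ from a fixed edge, $2n\Delta-\Delta/2\ge 3n(\Delta/2)$, and induction finishes. Your extra checks (the repair touches coordinate $(i,j)$ at most once, and each inner-loop iteration moves a coordinate by at most $\Delta$) spell out what the paper states in one line; your non-strict inequality also suffices, since abundance is defined with $\ge$, whereas the paper's strict version needs $n>1$.
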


\begin{proof}
Let $(i,j)$ be $\Delta$-abundant at the beginning of the phase, so $x_{ij}\ge 3n\Delta$.
By \Cref{lem:per-phase-change}, at the end of the phase, before repair, its value is at least
\[
3n\Delta-n\Delta=2n\Delta.
\]
The repair step from \Cref{prop:halving} decreases any fixed edge by at most $\Delta/2$.
Therefore at the beginning of the next phase the same edge carries at least
\[
2n\Delta-\frac{\Delta}{2}.
\]
Since
\[
2n\Delta-\frac{\Delta}{2}>\frac{3n\Delta}{2}=3n\left(\frac{\Delta}{2}\right),
\]
the edge is $(\Delta/2)$-abundant.
Iterating this argument proves the second statement.
\end{proof}

To identify the optimal support, we compare the current solution to the limiting equilibrium.
The next lemma gives a uniform bound on the total future change of a coordinate.

\begin{lemma}
\label{lem:future-change-bound}
Let $(p,x,r)$ be the state at the beginning of a $\Delta$-scaling phase, and let $(p^*,x^*,r^*)$ be the limiting equilibrium reached by the scaling algorithm.
Then for every edge $(i,j)$,
\[
|x^*_{ij}-x_{ij}|<4n\Delta.
\]
\end{lemma}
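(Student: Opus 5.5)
The bound comes from a telescoping sum over the remaining phases. I will bound the total change of the coordinate $x_{ij}$ from the beginning of the current $\Delta$-phase to the limit, using the per-phase bound of \Cref{lem:per-phase-change} together with the repair step. The scales then form the sequence $\Delta,\Delta/2,\Delta/4,\dots$. Throughout, the ``limiting equilibrium'' $(p^*,x^*,r^*)$ is read as the limit of the states produced at the start of successive phases if the algorithm is run indefinitely. In the actual algorithm it is the output $\operatorname{BasicSolution}(E_\Delta)$ at a sufficiently small scale. In either reading, the key fact is that the state at scale $\Delta_k$ lies within a summable distance of every later state.

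\textbf{Per-phase bound.} Let $\Delta_k:=\Delta/2^k$ and let $x^{(k)}$ be the allocation at the beginning of the $\Delta_k$-phase. By \Cref{lem:per-phase-change}, during the $\Delta_k$-phase each coordinate moves by at most $n\Delta_k$. The subsequent halving-and-repair step of \Cref{prop:halving} changes any fixed coordinate by at most $\Delta_{k+1}=\Delta_k/2$, since each good $j$ decrements at most one edge $(i(j),j)$. Hence
\[
|x^{(k+1)}_{ij}-x^{(k)}_{ij}|\le n\Delta_k+\tfrac12\Delta_k .
\]

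\textbf{Summation.} Summing this bound over all $k\ge 0$ gives
\[
|x^*_{ij}-x_{ij}|\le \sum_{k\ge0}\Bigl(n+\tfrac12\Bigr)\frac{\Delta}{2^k}=(2n+1)\Delta<4n\Delta,
\]
which holds because $n\ge 2$ (both $B$ and $G$ are nonempty). The same geometric tail shows that the sequence $x^{(k)}$ is Cauchy, so the limit exists. Passing to the limit in the $\Delta_k$-feasibility conditions, the backorders tend to $0$, the effective cash tends to $0$ by $\Delta_k$-optimality, and support stays on equality edges. By closedness of these conditions, the limit is an Arctic equilibrium.

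\textbf{Main obstacle.} The delicate point is identifying $x^*$ with this limit, rather than with some other equilibrium, when the algorithm terminates early and returns $\operatorname{BasicSolution}(E_\Delta)$. To handle this, I would first define $x^*$ as the limit above and prove the inequality for it. Uniqueness of the equilibrium under the perturbation, via \Cref{thm:basic-solution-optimal} and acyclicity, then identifies it with the returned solution. A second subtlety is that price limits must stay bounded and positive. This holds since prices only increase and are bounded above by the total budget once goods clear, but it only matters for the feasibility limit and not for the inequality itself.
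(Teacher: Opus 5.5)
Your argument is correct and is essentially the paper's proof: you bound the drift of $x_{ij}$ by $n\Delta_k$ per phase via \Cref{lem:per-phase-change}, add $\Delta_k/2$ for the halving-and-repair step, and sum the geometric series to get $(2n+1)\Delta<4n\Delta$. Your extra discussion of why the limit is an Arctic equilibrium is not needed, since the lemma takes the limiting equilibrium as given. It is also not fully justified as written: $\Delta$-feasibility only controls the backorders of goods with $p_j>p_j^0$, so the claim that all backorders tend to $0$ would require a separate argument.
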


\begin{proof}
During the current phase, the change in $x_{ij}$ is at most $n\Delta$ by \Cref{lem:per-phase-change}.
The subsequent repair step changes the same coordinate by at most $\Delta/2$.
At the next phase, whose parameter is $\Delta/2$, the same argument gives an additional change of at most
\[
n\frac{\Delta}{2}+\frac{\Delta}{4},
\]
and so on.
Summing the resulting geometric series yields
\[
|x^*_{ij}-x_{ij}|
\le
\sum_{t=0}^{\infty}\left(n\frac{\Delta}{2^t}+\frac{\Delta}{2^{t+1}}\right)
<
2n\Delta+\Delta
<
4n\Delta.
\]
\end{proof}

We can now state the termination criterion.

\begin{theorem}
\label{thm:recover-support}
Let $(p,x,r)$ be the state at the beginning of a $\Delta$-scaling phase.
If
\[
\Delta<\frac{1}{8nD},
\]
and
\[
E_\Delta:=\{(i,j)\in B\times G:x_{ij}>4n\Delta\},
\]
then
\[
E_\Delta=H^*,
\]
where $H^*=\{(i,j):x^*_{ij}>0\}$ is the support of the equilibrium.
Consequently,
\[
\operatorname{BasicSolution}(E_\Delta)=(p^*,x^*,r^*).
\]
\end{theorem}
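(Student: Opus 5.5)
The plan is to prove the two inclusions $E_\Delta\subseteq H^*$ and $H^*\subseteq E_\Delta$ separately. Both rest on the uniform closeness estimate of \Cref{lem:future-change-bound}, which gives $|x^*_{ij}-x_{ij}|<4n\Delta$ for every edge, together with the gap bound of \Cref{lem:min-positive-flow}, which gives $x^*_{ij}>1/D$ whenever $x^*_{ij}>0$. Once the two supports are shown to coincide, the final claim $\operatorname{BasicSolution}(E_\Delta)=(p^*,x^*,r^*)$ follows by substituting $E_\Delta=H^*$ into \Cref{thm:basic-solution-optimal}.

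For $E_\Delta\subseteq H^*$, take $(i,j)$ with $x_{ij}>4n\Delta$. By \Cref{lem:future-change-bound},
\[
x^*_{ij}>x_{ij}-4n\Delta>0,
\]
so $(i,j)\in H^*$. This direction needs no assumption on the size of $\Delta$.

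For $H^*\subseteq E_\Delta$, take $(i,j)$ with $x^*_{ij}>0$. By \Cref{lem:min-positive-flow}, $x^*_{ij}>1/D$. The hypothesis $\Delta<1/(8nD)$ gives $1/D>8n\Delta$. Then \Cref{lem:future-change-bound} gives
\[
x_{ij}>x^*_{ij}-4n\Delta>8n\Delta-4n\Delta=4n\Delta,
\]
so $(i,j)\in E_\Delta$. The threshold $4n\Delta$ is chosen exactly so that it separates the two regimes.

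The main obstacle is justifying the use of \Cref{lem:future-change-bound}, which presupposes a limiting equilibrium $(p^*,x^*,r^*)$ that the iterates actually converge to. I would take that lemma as given. If an additional justification is needed, I would argue as follows.
\begin{itemize}
\item The bound of \Cref{lem:future-change-bound} shows that the allocations form a Cauchy sequence, so they converge.
\item Prices are monotone nondecreasing and bounded, so they converge as well.
\item Refunds are nondecreasing and bounded by the budgets, so they converge too.
\item $\Delta$-feasibility holds with $\Delta\to 0$, so in the limit the effective cash is $0$ and the backorders are $0$ (goods whose price stayed at $p^0_j$ need a separate argument).
\item Equality-edge support passes to the limit by closedness of $E(p)$.
\item Refund complementarity follows because refunds are only added at buyers with $\alpha_i\le 1$, and prices only rise.
\end{itemize}
With this in hand, the limit is an Arctic equilibrium, and its support $H^*$ is the one recovered above.
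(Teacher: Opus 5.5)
Your proof is correct and follows the paper's argument: both inclusions come from \Cref{lem:future-change-bound} together with the gap $x^*_{ij}>1/D>8n\Delta$ from \Cref{lem:min-positive-flow}, and the final claim from \Cref{thm:basic-solution-optimal}; your direct proof of $H^*\subseteq E_\Delta$ is the contrapositive of the paper's proof by contradiction. The paper also takes the limiting equilibrium in \Cref{lem:future-change-bound} as given, so your convergence sketch is optional and not needed for the statement as posed.
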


\begin{proof}
First suppose that $(i,j)\in E_\Delta$.
Then $x_{ij}>4n\Delta$, and by \Cref{lem:future-change-bound},
\[
x^*_{ij}>x_{ij}-4n\Delta>0.
\]
Thus $(i,j)\in H^*$.

Conversely, suppose that $(i,j)\notin E_\Delta$.
Then $x_{ij}\le 4n\Delta$.
If $(i,j)\in H^*$, then \Cref{lem:min-positive-flow} gives
\[
x^*_{ij}>\frac{1}{D}.
\]
On the other hand, \Cref{lem:future-change-bound} gives
\[
x^*_{ij}<x_{ij}+4n\Delta\le 8n\Delta<\frac{1}{D},
\]
which is a contradiction.
Therefore $(i,j)\notin H^*$.

We have shown that $E_\Delta=H^*$.
The final claim now follows from \Cref{thm:basic-solution-optimal}.
\end{proof}

Therefore the algorithm finds the equilibrium by running scaling phases until the parameter becomes smaller than $1/(8nD)$.
At that point, the set of edges carrying more than $4n\Delta$ units of spending is exactly the support of the equilibrium, and the equilibrium is recovered by computing the corresponding basic solution.

%% file: AA_strongly.tex
\section{A strongly polynomial algorithm}
\label{sec:strongly-poly}

In this section we develop a strongly polynomial algorithm for the Arctic Auction with linear utilities.
The overall architecture follows the scaling-and-restart framework introduced in~\cite{Orlin-Fisher} for the linear Fisher market: ordinary scaling phases alternate with a restart subroutine that, whenever the current scale fails to force a new support edge, jumps directly to a much smaller scaling parameter.
The Arctic setting, however, contains buyers whose bang-per-buck is exactly one; such buyers may keep part of their budget as refund and therefore break the Fisher identity ``all remaining money must eventually be spent.''
The modification introduced in this section handles this obstruction by \emph{committing part of a critical buyer's remaining money to refund} as soon as she becomes critical; the committed amount is removed from the active budget and restored only at the end of the algorithm.
Combined with a suitably modified price-raising subroutine, this yields a strongly polynomial algorithm that runs in
\[
O\!\left((n^2\log n)(m+n\log n)\right)
\]
time, where $n=|B|+|G|$ and $m=|\{(i,j)\in B\times G:U_{ij}>0\}|$.

Throughout this section we retain the perturbation from \Cref{sec:prelim}, so that the equality graph of the equilibrium is a forest and the equilibrium is unique.
We also retain the $\Delta$-scaling machinery of \Cref{sec:modified-price-augment}: the procedures \textsc{PriceAndAugmentWithCriticalBuyer} and the halving step of \Cref{prop:halving}, together with the bound of \Cref{prop:phase-potential-bound} that every $\Delta$-phase performs at most $n$ augmentations.

\subsection{Committed refunds and effective budgets}
\label{subsec:committed-refunds}

The algorithm operates on a compressed representation of the instance, which keeps track of money that has already been irrevocably declared to be refunded to its owner.

\begin{definition}[Compressed Arctic state]
\label{def:committed-refunds}
A \emph{compressed Arctic state} is a triple $(p,x,r)$, where $p$ is a price vector, $x$ is a spending vector, and $r=(r_i)_{i\in B}\in\mathbb R_{\ge 0}^B$ is a vector of \emph{committed refunds}.
The corresponding \emph{effective budget} and \emph{effective cash} of buyer $i$ are
\[
\bar e_i:=e_i-r_i,
\qquad
\bar c_i(x):=\bar e_i-\sum_{j\in G}x_{ij}=c_i(x)-r_i.
\]
For a set of nodes $H\subseteq B\cup G$, define the \emph{effective component surplus}
\[
s(p,r,H):=\bar e_{H\cap B}-p_{H\cap G}.
\]
\end{definition}

A compressed Arctic state $(p,x,r)$ is called \emph{$\Delta$-feasible}
(respectively \emph{$\Delta$-optimal}) if the pair $(p,x)$ satisfies the
conditions of \Cref{def:delta-feasible-arctic}
(respectively \Cref{def:delta-optimal-arctic}) with $e_i$ replaced
everywhere by $\bar e_i$.
When $r\equiv 0$, all notions reduce to those of
\Cref{sec:modified-price-augment}.

For a cycle-free set $H\subseteq B\times G$ and an effective-budget vector
$\widehat e\in \mathbb R_{>0}^B$, let
\[
\operatorname{BasicSolution}_{\widehat e}(H)
\]
denote the triple obtained from \Cref{def:basic-solution-arctic} after replacing
every occurrence of $e_i$ by $\widehat e_i$.
In a compressed state $(p,x,r)$ we will abbreviate this as
\[
\operatorname{BasicSolution}_{e-r}(H).
\]

\begin{lemma}[Monotonicity of bang-per-buck]
\label{lem:alpha-monotone-strong}
If $p'\ge p$ coordinatewise, then $\alpha_i(p')\le \alpha_i(p)$ for every buyer $i$.
In particular, once a buyer satisfies $\alpha_i(p)=1$, she satisfies $\alpha_i(p')\le 1$ at every later state of the algorithm.
\end{lemma}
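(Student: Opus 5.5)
The first claim is a direct termwise comparison. Fix a buyer $i$ and a good $j$ with $U_{ij}>0$. Since $p'_j\ge p_j>0$ (prices stay positive; they start at $p^0_j>0$ and only get multiplied by factors $q\ge 1$), we have $U_{ij}/p'_j\le U_{ij}/p_j\le \alpha_i(p)$. Goods with $U_{ij}=0$ contribute $0$ to both maxima. Taking the maximum over $j\in G$ gives $\alpha_i(p')\le\alpha_i(p)$. No perturbation or feasibility property is needed here, only positivity of prices.

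For the ``in particular'' part, it suffices to check that prices never decrease along the run. Then the state at the moment the buyer becomes critical has price vector $p$, every later state has some $p'\ge p$, and the first claim gives $\alpha_i(p')\le\alpha_i(p)=1$. I would verify monotonicity of $p$ against each operation that can modify the state:
\begin{itemize}
\item $\UpdatePrice^{\star}$ multiplies the prices of active goods by $q\ge 1$ and leaves all other prices unchanged.
\item Augmentations, both buyer-terminal and good-terminal, change only $x$ and possibly $r$.
\item The refund step and the commitment of refunds change only $r$.
\item The halving-and-repair step of \Cref{prop:halving} keeps $p$ unchanged.
\end{itemize}
So $p$ is coordinatewise nondecreasing over time, and induction over the sequence of operations finishes the argument.

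The only point that needs care is the restart subroutine of the strongly polynomial algorithm, which has not been stated yet. The lemma is meant to hold at every later state, so I would phrase the second claim as conditional on every subroutine only raising prices or leaving them unchanged. Then I would check, when the restart is introduced, that it too only raises prices or leaves them fixed, as in Orlin's framework, where the restart adjusts $\Delta$ and $x$ but not $p$ downward. I do not expect any other step to be an obstacle.
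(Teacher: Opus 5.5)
Your proof is correct and matches the paper's, whose proof is exactly the termwise comparison $U_{ij}/p'_j\le U_{ij}/p_j$ followed by a maximum over $j$. The paper does not argue the ``in particular'' clause and relies on prices never decreasing, which you check operation by operation; the restart you flag is also fine, since \textsc{GetPricesAA} sets $p'_j=\max_H p^H_j$ with each $p^H$ either $p$ or a \textsc{SpecialPriceAA} output (prices only multiplied by $q\ge 1$), so $p'\ge p$.
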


\begin{proof}
For every good $j$ we have $U_{ij}/p'_j\le U_{ij}/p_j$, and taking the maximum over $j$ yields the claim.
\end{proof}

\begin{lemma}[Safe refund commitment]
\label{lem:safe-refund-commitment}
Let $(p,x,r)$ be a compressed Arctic state, and suppose that buyer $i\in B$
satisfies
\[
\alpha_i(p)=1,
\qquad
0\le \delta\le \bar c_i(x).
\]
Let $r'$ be obtained from $r$ by replacing $r_i$ with $r_i+\delta$.
Then the following hold.
\begin{enumerate}
    \item $(p,x,r')$ is again a compressed Arctic state, and
    \[
    \bar c_i'(x)=\bar c_i(x)-\delta\ge 0.
    \]
    \item The equality graph $E(p)$, the set of positive-spending edges, and the set
    of abundant edges are all unchanged.
    \item Let $q\ge p$ coordinatewise, and let $(q,z,\rho)$ be an Arctic equilibrium
    of the compressed instance with budget vector $e-r'$.
    Define
    \[
    \widehat \rho:=\rho+r'.
    \]
    Then $(q,z,\widehat \rho)$ is an Arctic equilibrium of the original instance with
    budget vector $e$.
\end{enumerate}
\end{lemma}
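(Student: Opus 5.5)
The plan is to prove the three items in order. Items 1 and 2 are direct bookkeeping. Item 3 is the only substantive one, and there the refund complementarity condition is the point to watch.

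For item~1, note that replacing $r_i$ by $r_i+\delta$ lowers $\bar e_i$, and hence $\bar c_i(x)$, by exactly $\delta$. So $\bar c_i'(x)=\bar c_i(x)-\delta\ge 0$ by the hypothesis $\delta\le\bar c_i(x)$. Also $r_i'\ge r_i\ge 0$, so $r'\in\mathbb R^B_{\ge 0}$ and $(p,x,r')$ is a compressed Arctic state in the sense of \Cref{def:committed-refunds}. No other buyer's effective cash changes.

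For item~2, observe that $p$ and $x$ are untouched. The equality graph $E(p)$ depends only on $p$ and $U$, and the positive-spending edges depend only on $x$. Abundance in \Cref{def:abundant-edge} is the threshold condition $x_{ij}\ge 3n\Delta$, which also depends only on $x$. All three sets are therefore unchanged.

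For item~3, I will check the four conditions of the Arctic equilibrium definition for $(q,z,\widehat\rho)$ with budget vector $e$.
\begin{enumerate}
\item \textbf{Nonnegativity and budget identity.} We have $\widehat\rho=\rho+r'\ge 0$. Moreover
\[
e_\ell-\widehat\rho_\ell-\sum_j z_{\ell j}=(e_\ell-r'_\ell)-\rho_\ell-\sum_j z_{\ell j}=0
\]
for every $\ell$, which is exactly the budget identity of the compressed equilibrium.
\item \textbf{Market clearing.} This depends only on $(q,z)$, so it transfers directly.
\item \textbf{Maximum bang-per-buck support.} This also depends only on $(q,z)$, so it transfers directly.
\item \textbf{Refund complementarity.} Suppose $\widehat\rho_\ell>0$. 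Either $\rho_\ell>0$, and then $\alpha_\ell(q)\le1$ by complementarity in the compressed instance. Or $r'_\ell>0$. In that case we need $\alpha_\ell(q)\le 1$; the second clause, $\alpha_\ell(q)>1\Rightarrow\widehat\rho_\ell=0$, is the contrapositive of the same statement.
\end{enumerate}

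The main obstacle is the case $r'_\ell>0$ in the complementarity condition. For the newly committed buyer $\ell=i$, we have $\alpha_i(p)=1$ and $q\ge p$, so \Cref{lem:alpha-monotone-strong} gives $\alpha_i(q)\le\alpha_i(p)=1$. For buyers $\ell\neq i$ with $r'_\ell=r_\ell>0$, the hypotheses as stated do not control $\alpha_\ell(p)$. I will therefore read item~3 in the algorithmic context: every previously committed refund was also made at a critical moment, at a price vector $p^{(\ell)}\le p\le q$ with $\alpha_\ell(p^{(\ell)})=1$. Monotonicity again gives $\alpha_\ell(q)\le 1$.

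I would make this reading explicit, either as an induction over the sequence of commitments or by noting the invariant that $r_\ell>0$ implies $\alpha_\ell(p)\le 1$. The invariant is itself preserved because prices only increase. If the literal statement is meant with arbitrary $r$, this invariant must be added as a hypothesis; otherwise item~3 can fail for those other buyers.
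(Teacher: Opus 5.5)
Your proposal is correct and follows the paper's proof essentially step for step. Items~1 and~2 are the same bookkeeping. Item~3 uses the same four-condition check, and it handles the case $r'_\ell>0$ by the same monotonicity argument based on \Cref{lem:alpha-monotone-strong}.

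You are right that, for buyers $\ell\neq i$, the literal hypotheses do not suffice. The paper's proof silently relies on the same history assumption: it argues that refund ``had already been committed to buyer $k$ at an earlier state where $\alpha_k=1$.'' Your explicit invariant ($r_\ell>0\Rightarrow\alpha_\ell(p)\le 1$, preserved because prices only increase) is therefore a faithful sharpening rather than a deviation.
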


\begin{proof}
Part~(1) is immediate: only buyer $i$'s effective budget changes, and it changes by
exactly $\delta$, so
\[
\bar c_i'(x)=e_i-(r_i+\delta)-\sum_{j\in G}x_{ij}=\bar c_i(x)-\delta\ge 0.
\]
Part~(2) holds because neither the price vector nor the spending vector changes.

For part~(3), the compressed equilibrium identities give, for every buyer $k\in B$,
\[
\sum_{j\in G} z_{kj}+\rho_k = e_k-r_k'.
\]
Hence
\[
\sum_{j\in G} z_{kj}+\widehat\rho_k
=
\sum_{j\in G} z_{kj}+\rho_k+r_k'
=
e_k,
\]
so the buyer-budget identities of the original instance hold.
Market clearing and the support condition are unchanged.

It remains to verify refund complementarity for $\widehat\rho$.
If $\widehat\rho_k>0$, then either $\rho_k>0$, in which case
$\alpha_k(q)\le 1$ by refund complementarity in the compressed equilibrium, or
$r_k'>0$.
In the latter case, some positive amount of refund had already been committed to
buyer $k$ at an earlier state where $\alpha_k=1$.
Since prices only increase afterwards and $q\ge p$, \Cref{lem:alpha-monotone-strong}
implies $\alpha_k(q)\le 1$.
Conversely, if $\alpha_k(q)>1$, then $\rho_k=0$ by refund complementarity in the
compressed equilibrium, and also $r_k'=0$ by the same monotonicity argument.
Thus $\widehat\rho_k=0$.

Therefore $(q,z,\widehat\rho)$ is an Arctic equilibrium of the original instance.
\end{proof}

\begin{lemma}[Effect of a commitment on component surplus]
\label{lem:refund-changes-surplus}
If a safe commitment of size $\delta$ is made to buyer $i$ in a compressed state $(p,x,r)$, then $s(p,r',H)=s(p,r,H)-\delta$ if $i\in H\cap B$, and $s(p,r',H)=s(p,r,H)$ otherwise.
\end{lemma}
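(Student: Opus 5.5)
The statement is a direct bookkeeping consequence of \Cref{def:committed-refunds}, so the plan is to unfold the definition of the effective component surplus and track the single coordinate that changes. Concretely, by \Cref{lem:safe-refund-commitment} the commitment replaces $r$ by $r'$ where $r'_i=r_i+\delta$ and $r'_k=r_k$ for every $k\neq i$, while the price vector $p$ (and the spending vector $x$) are untouched. I will first record the consequence for effective budgets: $\bar e'_i=e_i-r'_i=\bar e_i-\delta$, and $\bar e'_k=\bar e_k$ for all $k\in B\setminus\{i\}$.

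Next I will expand
\[
s(p,r',H)=\sum_{k\in H\cap B}\bar e'_k-p_{H\cap G},
\]
and compare it termwise with $s(p,r,H)=\sum_{k\in H\cap B}\bar e_k-p_{H\cap G}$. Since $p$ is unchanged, the price term $p_{H\cap G}$ is identical in both expressions, so the difference $s(p,r',H)-s(p,r,H)$ equals $\sum_{k\in H\cap B}(\bar e'_k-\bar e_k)$. I then split into the two cases of the statement: if $i\in H\cap B$, exactly one summand (the one for $k=i$) is nonzero and equals $-\delta$, giving $s(p,r',H)=s(p,r,H)-\delta$; if $i\notin H\cap B$, every summand vanishes, giving $s(p,r',H)=s(p,r,H)$.

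There is no real obstacle here; the only point worth stating explicitly is that a safe commitment changes neither $p$ nor any other buyer's refund, which is exactly part~(1) and part~(2) of \Cref{lem:safe-refund-commitment}, so the argument needs nothing beyond linearity of the sum defining $s$.
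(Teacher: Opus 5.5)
Your proposal is correct and follows the paper's own argument: $p$ is unchanged and only $\bar e_i$ decreases, by exactly $\delta$. Hence $s(p,r,H)=\bar e_{H\cap B}-p_{H\cap G}$ drops by $\delta$ precisely when $i\in H\cap B$ and is otherwise unchanged; the paper's one-line proof is just a condensed version of your termwise comparison.
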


\begin{proof}
Only $\bar e_i$ changes, and it decreases by exactly $\delta$.
\end{proof}

\subsection{Abundant edges and fertile components}
\label{subsec:abundant-fertile}

The abundance notion is unchanged from the weakly polynomial analysis.
Fertility is measured with effective budgets so that it reflects the compressed state.

\begin{definition}[$\Delta$-components]
\label{def:delta-components-strong}
Let $(p,x,r)$ be a compressed Arctic state at the beginning of a $\Delta$-scaling phase.
Recall that an edge $(i,j)\in B\times G$ is $\Delta$-abundant if
\[
x_{ij}\ge 3n\Delta.
\]
Write
\[
\mathcal A(\Delta):=\{(i,j)\in B\times G : x_{ij}\ge 3n\Delta\}.
\]
The connected components of the bipartite graph $(B\cup G,\mathcal A(\Delta))$ are called the
\emph{$\Delta$-components}, and their collection is denoted by $\mathcal C(\Delta)$.
\end{definition}

\begin{definition}[$\Delta$-residual network]
\label{def:delta-residual-strong}
The \emph{$\Delta$-residual network} $N_\Delta(p)$ is the directed graph on $B\cup G$ with a forward arc for every equality edge $(i,j)\in E(p)$ and a backward arc for every abundant edge $(i,j)\in\mathcal A(\Delta)$.
For a set $H\subseteq B\cup G$ we write
\[
\operatorname{ActiveSet}(p,\Delta,H):=\bigl\{k\in B\cup G:\exists v\in H\text{ with a directed path from } v \text{ to } k \text{ in } N_\Delta(p)\bigr\}.
\]
\end{definition}

\begin{definition}[$\Delta$-fertile component]
\label{def:delta-fertile-strong}
Let $(p,x,r)$ be a compressed Arctic state at the beginning of a $\Delta$-scaling phase and let $H\in\mathcal C(\Delta)$.
The component $H$ is \emph{$\Delta$-fertile} if one of the following holds.
\begin{enumerate}
    \item $H=\{i\}$ for some buyer $i\in B$ with $\alpha_i(p)>1$ and $\bar c_i(x)>\Delta/(3n^2)$.
    \item $s(p,r,H)\le -\Delta/(3n^2)$.
\end{enumerate}
The compressed state $(p,x,r)$ is \emph{$\Delta$-fertile} if some $\Delta$-component is $\Delta$-fertile.
\end{definition}

The Arctic singleton-buyer case differs from the linear Fisher one only in the extra assumption $\alpha_i(p)>1$: a buyer with $\alpha_i(p)\le 1$ does not force future support edges, because her optimal behavior is to keep her remaining cash as refund.

We record two invariants that the algorithm will preserve.

\begin{lemma}[Persistence of abundance]
\label{lem:abundance-persists}
If an edge is $\Delta$-abundant at the beginning of a $\Delta$-scaling phase, then it is $(\Delta/2)$-abundant at the beginning of the next phase.
In particular, once an edge becomes abundant, it remains abundant in all subsequent phases.
\end{lemma}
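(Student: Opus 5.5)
This is essentially \Cref{lem:abundant-persists} transplanted to compressed states, so I will reduce to that argument and check that nothing in the strongly polynomial machinery breaks its two ingredients. Those ingredients are a per-phase bound $|x'_{ij}-x_{ij}|\le n\Delta$ on the change of any spending variable during one $\Delta$-phase, and a bound of $\Delta/2$ on the change caused by the halving-and-repair step.

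First I would re-establish the per-phase bound for compressed states. The operations the algorithm performs inside a phase are of four kinds. Price updates via $\UpdatePrice^{\star}$ leave $x$ untouched. Refund steps and safe commitments also leave $x$ untouched, by \Cref{lem:safe-refund-commitment}(2). Augmentations change each coordinate by at most $\Delta$. The number of augmentations in a phase is at most $n$ by \Cref{prop:phase-potential-bound}, applied with $e$ replaced by $\bar e=e-r$. The point to verify is that the potential argument still holds in the compressed setting. A commitment only decreases $\bar c_i$, so it can only lower $\Phi_\Delta$. Each augmentation lowers $\Phi_\Delta$ by exactly one, as in \Cref{lem:potential-drop-buyer,lem:potential-drop-good}. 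At the start of a phase $\Phi_\Delta\le n$ by the same count as before: each buyer contributes at most $1$ after a $2\Delta$-optimal state, and repair adds at most $|G|$. Hence at most $n$ augmentations occur, and each $x_{ij}$ moves by at most $n\Delta$ during the phase.

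Next, the repair step of \Cref{prop:halving} subtracts $\Delta/2$ from at most one variable per good, so any fixed coordinate drops by at most $\Delta/2$. Combining the two bounds, an edge with $x_{ij}\ge 3n\Delta$ at the start of the phase has at least $2n\Delta-\Delta/2>3n(\Delta/2)$ at the start of the next phase, which is exactly $(\Delta/2)$-abundance. Induction on phases then gives the "in particular" statement.

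The main obstacle is the restart subroutine, which may jump to a much smaller scale rather than halving. I would handle this by treating the lemma as a statement about consecutive ordinary phases only, which is what the text says. If persistence across restarts is also intended, I would check that the restart is defined to preserve previously discovered abundant edges, as the introduction claims, and invoke that property rather than the calculation above.
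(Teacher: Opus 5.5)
Your proposal is correct and follows the paper's proof. Like the paper, you combine the per-phase bound of \Cref{lem:per-phase-change} with the $\Delta/2$ bound for the repair step of \Cref{prop:halving} to get $x_{ij}\ge 3n\Delta-n\Delta-\Delta/2>3n(\Delta/2)$, then induct; you additionally re-verify the per-phase bound for compressed states, noting that commitments never touch $x$. Restricting the lemma to consecutive ordinary phases and leaving persistence across restarts to the restart's own guarantee (\Cref{lem:getallocations-output}(4)) also matches how the paper organizes it.
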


\begin{proof}
By \Cref{lem:per-phase-change}, a fixed spending variable changes during the phase by at most $n\Delta$, and the halving repair of \Cref{prop:halving} changes it by at most $\Delta/2$.
Commitments do not modify the spending vector.
Hence $x_{ij}\ge 3n\Delta$ at the beginning of the phase implies
\[
x_{ij}\ge 3n\Delta-n\Delta-\frac{\Delta}{2}>\frac{3n\Delta}{2}=3n\!\left(\frac{\Delta}{2}\right)
\]
at the beginning of the next phase.
The second claim follows by induction.
\end{proof}

\begin{lemma}[Fertile component yields progress]
\label{lem:fertile-gives-abundant}
Let $(p,x,r)$ be a compressed Arctic state at the beginning of a $\Delta$-scaling phase, and suppose some component $H\in\mathcal C(\Delta)$ is $\Delta$-fertile.
Within at most $5\log n+5$ subsequent ordinary scaling phases, at least one of the following occurs.
\begin{enumerate}
    \item A new abundant edge appears in the spending graph.
    \item $H=\{i\}$ is a singleton buyer component and $\alpha_i(\tilde p)\le 1$ at some intermediate price vector $\tilde p$; in particular, $H$ permanently loses fertility condition~(1) of \Cref{def:delta-fertile-strong}.
\end{enumerate}
\end{lemma}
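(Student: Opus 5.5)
We follow Orlin's argument for the linear Fisher market: a fertile component carries an imbalance of order $\Delta/(3n^2)$, and within $O(\log n)$ halvings this imbalance is large compared with the current scale, so some edge leaving $H$ must carry a lot of spending. Let $\Delta'=\Delta/2^{k}$ with $k=\lceil 5\log n\rceil+5$, so that $\Delta'\le \Delta/(32n^5)$. Suppose, for contradiction, that no new abundant edge appears during these $k$ phases and that (in the singleton case) $\alpha_i$ stays above $1$ throughout. By \Cref{lem:abundance-persists} the abundant edges present at scale $\Delta$ persist. Under our assumption the abundant edge set is exactly $\mathcal A(\Delta)$ at every intermediate scale, so $H$ is still a component at scale $\Delta'$. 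Every edge with exactly one endpoint in $H$ is then non-abundant at the start of each intermediate phase, so it carries less than $3n\Delta''$ at scale $\Delta''$. In particular at scale $\Delta'$ it carries less than $3n\Delta'$.

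We split into the two fertility cases. In case~(1), $H=\{i\}$ with $\alpha_i>1$ throughout. Because prices only rise (\Cref{lem:alpha-monotone-strong}), $i$ never became critical, so no refund step or commitment was applied to her and $\bar e_i$ is unchanged. Hence $\bar e_i=\bar c_i(x)+\sum_j x_{ij}$ with $\bar c_i(x)>\Delta/(3n^2)$ at the start. At the end of the $\Delta'$-phase, $\Delta'$-optimality gives $\bar c_i<\Delta'$, and each of her at most $n$ edges carries less than $3n\Delta'$ plus the per-phase change $n\Delta'$ of \Cref{lem:per-phase-change}. So $\bar e_i<\Delta'+4n^2\Delta'$. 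At the start, $\bar e_i>\Delta/(3n^2)$. The inequality $\Delta/(3n^2)\le 5n^2\Delta'$ contradicts $\Delta'\le\Delta/(32n^5)$ for $n\ge 2$. Hence either a new edge became abundant, or $\alpha_i$ dropped to at most $1$ at some intermediate price, which is outcome~2. Outcome~2 is permanent by \Cref{lem:alpha-monotone-strong}.

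In case~(2), $s(p,r,H)\le-\Delta/(3n^2)$. We write
\[
s(p,r,H)=\sum_{i\in H\cap B}\bar c_i(x)-\sum_{j\in H\cap G}b_j(p,x)+(\text{spending from }H\cap B\text{ to }G\setminus H)-(\text{spending from }B\setminus H\text{ into }H\cap G).
\]
Prices of goods in $H$ only rise, and commitments only decrease $\bar e$ (\Cref{lem:refund-changes-surplus}), so $s(\cdot,\cdot,H)$ is nonincreasing and stays at most $-\Delta/(3n^2)$. At the end of the $\Delta'$-phase the cash terms are nonnegative, and each backorder is at most $\Delta'$ for goods with $p_j>p^0_j$. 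The goods still at their initial price need a separate argument, which we give below. It follows that the spending entering $H\cap G$ from outside exceeds $\Delta/(3n^2)-n\Delta'$. This incoming spending is spread over fewer than $n^2$ crossing edges, each carrying less than $4n\Delta'$, so the total is below $4n^3\Delta'$. Since $4n^3\Delta'+n\Delta'<\Delta/(3n^2)$ for our choice of $k$, we get a contradiction, and some crossing edge must become abundant.

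The main obstacle is the bookkeeping in case~(2): goods at their initial prices have no backorder bound, and intermediate commitments could in principle interfere. For the first issue we would argue as Orlin does. A good with $p_j=p_j^0$ and negative backorder only makes $-\sum b_j$ more negative, which works against us, so we must show such goods cannot lie in a component with negative surplus. Initial prices are at most $e_i/n$, and any good in $H$ is adjacent through abundant spending, hence its price has already risen; the one exception is a singleton good $H=\{j\}$, which needs a separate check. We also need to ensure commitments never touch buyers in case~(1), which follows because commitments occur only at critical buyers.
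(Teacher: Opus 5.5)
\emph{Gap in the negative-surplus case.} Your singleton-buyer case is the paper's argument in contrapositive form and is fine. In the negative-surplus case your counting differs from the paper's but is valid. You use fewer than $n^2$ crossing edges, each non-abundant at the start of the $\Delta'$-phase and so carrying less than $4n\Delta'$ at its end. The paper instead uses cycle-freeness of the support to get at most $n-1$ positive cut edges. Your constants close, since $\Delta\ge 32n^5\Delta'>3n^2(4n^3+n)\Delta'$. Two things in that case are wrong, however.

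First, your displayed identity has the wrong sign on the backorders. From $\bar e_i=\bar c_i(x)+\sum_j x_{ij}$ and $p_j=\sum_i x_{ij}-b_j(p,x)$ one gets
\[
s(p,r,H)=\sum_{i\in H\cap B}\bar c_i(x)+\sum_{j\in H\cap G}b_j(p,x)+(\text{outflow})-(\text{inflow}).
\]
So the bound you need is $b_j\ge 0$, not $b_j\le\Delta'$. With it, at the end of the $\Delta'$-phase the inflow is at least $-s\ge\Delta/(3n^2)$, with no $n\Delta'$ loss. Your last paragraph in fact reasons with this correct sign, which contradicts your display.

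Second, and this is the genuine gap, your treatment of goods still at $p_j^0$ does not work. Positive spending on a good, even abundant spending, does not imply that its price has risen. At $p_j=p_j^0$ the backorder $-p_j^0+\sum_i x_{ij}$ is typically negative, so \textsc{PriceAndAugmentWithCriticalBuyer} augments into $j$ without raising any price. Since $p_j^0$ is fixed while $\Delta\to 0$, a good can carry $\ge 3n\Delta$ while still at its initial price with $b_j\ll 0$.

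Nor can the singleton-good case be checked. Take one buyer $i$ with $e_i=100$ and one good $j$ with $U_{ij}=1$:
\begin{itemize}
\item Then $n=2$, $p_j^0=50$ and $\alpha_i(p^0)=1/50$.
\item The first phase refunds all of $i$'s money, and $x\equiv 0$ forever after.
\item $H=\{j\}$ has $s(p,r,H)=-50\le-\Delta/(3n^2)$ at every scale $\Delta\le\Delta_0$, so it is fertile throughout.
\item Yet no abundant edge ever appears, and outcome~2 does not apply because $H$ is not a buyer.
\end{itemize}
So no argument of the kind you sketch can close this case. The statement needs an extra hypothesis, for instance that the goods of $H$ satisfy $b_j\ge 0$.

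The paper's proof gets past this point only by asserting that $\Delta'$-feasibility gives $b_j\in[0,\Delta']$ for every good. \Cref{def:delta-feasible-arctic} guarantees this only when $p_j>p_j^0$. If you adopt the same assumption, your argument with the sign corrected is complete and matches the paper's.
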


\begin{proof}
Let $\Delta'$ denote the scaling parameter after $5\log n+5$ further phases, so that $\Delta'<\Delta/(16n^5)$, and let $p'$ denote the price vector at that moment.

\emph{Non-singleton case.}
Assume $s(p,r,H)\le -\Delta/(3n^2)$.
Prices are nondecreasing throughout ordinary phases, and effective budgets are nonincreasing (committed refunds can only grow), so
\[
s(p',r',H)\le s(p,r,H)\le -\frac{\Delta}{3n^2}<-3n^2\Delta',
\]
using $\Delta'<\Delta/(16n^5)<\Delta/(9n^4)$.
The $\Delta'$-feasibility conditions give $b_j(p',x')\in [0,\Delta']$ for every good $j$, and $\bar c_i(x')\ge 0$ for every buyer.
Combining the buyer-budget identity $\sum_{j\in G}x'_{ij}=\bar e_i-\bar c_i(x')$ with market clearing $\sum_{i\in B}x'_{ij}=p'_j+b_j(p',x')$ for $j\in H\cap G$ and subtracting gives the identity
\[
\underbrace{\sum_{i\notin H\cap B}\sum_{j\in H\cap G}x'_{ij}}_{\text{inflow}}-\underbrace{\sum_{i\in H\cap B}\sum_{j\notin H\cap G}x'_{ij}}_{\text{outflow}}
=-s(p',r',H)+\sum_{j\in H\cap G}b_j(p',x')+\sum_{i\in H\cap B}\bar c_i(x').
\]
Since the right-hand side is at least $-s(p',r',H)>3n^2\Delta'$, and outflow $\ge 0$, the inflow is also at least $3n^2\Delta'$.

The perturbation from \Cref{sec:prelim} ensures that the equality graph $E(p')$ is cycle-free, and hence the support of $x'$, being contained in $E(p')$, is cycle-free as well.
Therefore the number of positive-spending cut edges from $B\setminus H$ to $H\cap G$ is at most $n-1$.
Hence some single cut edge $(i,j)$ with $i\notin H\cap B$, $j\in H\cap G$ carries more than
\[
\frac{3n^2\Delta'}{n-1}\;=\;3n\Delta'+\frac{3n\Delta'}{n-1}\;>\;3n\Delta'
\]
units of spending, producing a new $\Delta'$-abundant edge.
Outcome~(1) holds.

\emph{Singleton case.}
Now assume $H=\{i\}$ with $\alpha_i(p)>1$ and $\bar c_i(x)>\Delta/(3n^2)$.
By \Cref{lem:alpha-monotone-strong}, $\alpha_i$ is nonincreasing along the trajectory of price vectors.
If $\alpha_i(\tilde p)\le 1$ at some intermediate state $\tilde p$ within the $5\log n+5$ phases, outcome~(2) holds directly (and by \Cref{lem:alpha-monotone-strong} the drop is permanent).

Otherwise $\alpha_i(p^{(t)})>1$ at every phase $t\in[k,k+5\log n+5]$, where $k$ denotes the current phase.
Commitments occur only at buyers with bang-per-buck exactly one, so buyer $i$'s committed refund $r_i$ is unchanged throughout these phases; in particular $\bar e_i$ is constant and equal to its value at phase~$k$.
At the end of the final phase, $i\in B_{>1}(p')$, so by $\Delta'$-optimality (\Cref{def:delta-optimal-arctic}) we have $\bar c_i(x')<\Delta'$.
The total spending of buyer $i$ therefore satisfies
\[
\sum_{j\in G}x'_{ij}=\bar e_i-\bar c_i(x')>\bar e_i-\Delta'\ge \bar c_i(x)-\Delta'>\frac{\Delta}{3n^2}-\Delta'>3n^2\Delta',
\]
where the inequality $\bar e_i\ge \bar c_i(x)$ is immediate from $\bar e_i=\bar c_i(x)+\sum_j x_{ij}$.
Since this total is spread over at most $|G|<n$ goods, at least one good $j$ satisfies $x'_{ij}>3n\Delta'$, i.e.\ a new $\Delta'$-abundant edge incident to $i$ appears.
Outcome~(1) holds.
\end{proof}

\subsection{The algorithm and the main theorem}
\label{subsec:algorithm-main}

With the abundance and fertility notions in place, we can state the strongly polynomial scaling algorithm at a high level.
It alternates ordinary $\Delta$-scaling phases (implemented by \textsc{PriceAndAugmentWithCriticalBuyer}) with invocations of a restart subroutine \textsc{MakeFertileAA}, whose role is to jump directly to a much smaller scaling parameter whenever an ordinary phase fails to make progress.

\begin{algorithm}[H]
\caption{\textsc{StrongScalingAA}$(e,U)$}
\label{alg:strong-scaling-aa}
\begin{algorithmic}[1]
\State Compute the initial parameter $\Delta_0$, prices $p^0$, and allocation $x^0$ as in \Cref{sec:modified-price-augment}; set $r^0\leftarrow 0$
\State $\Delta\leftarrow \Delta_0$, $(p,x,r)\leftarrow (p^0,x^0,r^0)$, $\mathrm{Threshold}\leftarrow \Delta$
\While{$\operatorname{BasicSolution}_{e-r}(\mathcal A(\Delta))$ is not an Arctic equilibrium of the compressed instance with budget vector $e-r$}
    \While{$(p,x,r)$ is not $\Delta$-optimal}
        \If{there exists a buyer $i\in B$ with $\alpha_i(p)\le 1$ and $\bar c_i(x)\ge \Delta$}
            \State $r_i\leftarrow r_i+\Delta$
        \Else
            \State $(p,x,r)\leftarrow \textsc{PriceAndAugmentWithCriticalBuyer}(p,x,r,\Delta)$
        \EndIf
    \EndWhile
    \If{$(p,x,r)$ is not $\Delta$-fertile \textbf{and} $\Delta\le \mathrm{Threshold}$}
        \State $(\Delta,p,x,r,\mathrm{Threshold})\leftarrow \textsc{MakeFertileAA}(p,x,r,\Delta)$
    \Else
        \State Replace $(p,x,r)$ by $(p,\widetilde x,r)$ from \Cref{prop:halving}; $\Delta\leftarrow \Delta/2$
    \EndIf
\EndWhile
\State $(p^\star,x^\star,\rho^\star)\leftarrow \operatorname{BasicSolution}_{e-r}(\mathcal A(\Delta))$
\State \Return $(p^\star,x^\star,\rho^\star+r)$
\end{algorithmic}
\end{algorithm}
The key to strong polynomiality is the subroutine \textsc{MakeFertileAA}, which is called whenever the state is not fertile and ordinary halving alone would not guarantee that a new abundant edge appears soon.
Specifically, \textsc{MakeFertileAA} takes a $\Delta$-optimal non-fertile compressed state and returns either (a) a ``delayed-discovery'' signal that simply lowers the threshold and promises a new abundant edge within $O(\log n)$ ordinary phases, or (b) a compressed restart to a much smaller parameter $\Delta'\le \Delta/n^2$ with the same promise.
We defer the construction of \textsc{MakeFertileAA} to \Cref{subsec:makefertile-overview} and the subsequent subsections; the precise guarantee is the following.

\begin{theorem}[Restart lemma]
\label{thm:restart}
There is a procedure $\textsc{MakeFertileAA}(p,x,r,\Delta)$ that, when applied to a
$\Delta$-optimal non-fertile compressed Arctic state, returns a tuple
\[
(\Delta',p',x',r',\mathrm{Threshold}')
\]
with one of the following outcomes.
\begin{itemize}
    \item \emph{Delayed discovery.}
    Here $\Delta'=\Delta$ and $(p',x',r')=(p,x,r)$.
    The procedure sets
    \[
    \mathrm{Threshold}'=\Delta/n^5
    \]
    and guarantees that a new abundant edge appears within $O(\log n)$ ordinary
    scaling phases.

    \item \emph{Compressed restart.}
    Here
    \[
    0<\Delta'\le \Delta/n^2.
    \]
    The procedure returns a new compressed state $(p',x',r')$, sets
    \[
    \mathrm{Threshold}'=\Delta'/n^5,
    \]
    preserves all previously abundant edges, ensures that $(p',x',r')$ becomes
    $\Delta'$-feasible within $O(\log n)$ ordinary phases, and guarantees that
    within a further $O(\log n)$ ordinary phases a new abundant edge appears.
\end{itemize}
Each call of \textsc{MakeFertileAA} runs in $O\bigl(n^2(m+n\log n)\bigr)$ time.
\end{theorem}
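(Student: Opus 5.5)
We construct \textsc{MakeFertileAA} by adapting Orlin's restart procedure to compressed Arctic states. The starting point is a structural description of a $\Delta$-optimal non-fertile state. In such a state every $\Delta$-component $H\in\mathcal C(\Delta)$ has $s(p,r,H)>-\Delta/(3n^2)$. Every singleton buyer component either has $\alpha_i(p)\le 1$ or has $\bar c_i(x)\le\Delta/(3n^2)$.

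We first commit refunds. For every buyer with $\alpha_i(p)=1$, move all remaining effective cash into $r_i$, which is safe by \Cref{lem:safe-refund-commitment}. By \Cref{lem:refund-changes-surplus} this lowers surplus only at critical buyers. By the perturbation assumption each component contains at most one critical buyer, so the change is confined to the component containing that buyer.

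Next we bound the surplus from above. We use the flow identity from the proof of \Cref{lem:fertile-gives-abundant}. Non-abundant cut edges carry less than $3n\Delta$ each, and there are at most $n-1$ of them since the support is a forest. Hence each component satisfies $|s(p,r,H)|=O(n^2\Delta)$.

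We then split into cases on the quantity
\[
\sigma:=\max_{H\in\mathcal C(\Delta)}\max\bigl(|s(p,r,H)|,\ \max_{i\in H\cap B,\ \alpha_i(p)>1}\bar c_i(x)\bigr).
\]

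\textbf{Delayed discovery: $\sigma\ge\Delta/n^5$.} The components keep their surplus while $\Delta$ shrinks. After $O(\log n)$ halvings, either some component's surplus reaches $-\Delta''/(3n^2)$, or the positive-surplus money of a component with $\alpha>1$ buyers must be spent. The latter happens because every buyer with $\alpha_i>1$ ends each phase with effective cash below the current scale. Either way, the counting argument of \Cref{lem:fertile-gives-abundant} yields a new abundant edge. We return $(\Delta,p,x,r,\Delta/n^5)$ unchanged.

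\textbf{Compressed restart: $\sigma<\Delta/n^5$.} Contract each $\Delta$-component. Within each component, compute prices proportional to the tree structure of abundant edges, as in $\operatorname{BasicSolution}_{e-r}$ restricted to that component, so that the component exactly balances. Absorb the tiny residual surplus by a commitment at the component's critical buyer if it has one, or otherwise by a uniform rescaling of prices. Choose $\Delta'$ of order $\max(\sigma n^2,\ \text{smallest relevant gap})$, which satisfies $\Delta'\le\Delta/n^2$. Round the spending on abundant edges to multiples of $\Delta'$, and set non-abundant spending to zero.

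The perturbation-induced bang-per-buck gaps are not controlled, so the new state may violate $\Delta'$-feasibility on a few goods. The errors are $O(n\Delta')$, and a constant-factor potential argument shows they are fixed within $O(\log n)$ ordinary phases. Abundant edges are preserved because each was at least $3n\Delta\gg 3n\Delta'$, and the adjustments change each edge by $O(n^2\sigma)$.

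Finally, with $\Delta'$ calibrated so that some component's imbalance or some $\alpha>1$ buyer's cash is at least $\Delta'/n^3$, the argument of \Cref{lem:fertile-gives-abundant} again gives a new abundant edge within $O(\log n)$ phases.

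\textbf{Running time.} One call performs a component contraction, $O(n)$ tree solves, and at most $O(n)$ Dijkstra-style price-and-augment passes, each costing $O(n(m+n\log n))$. This gives $O(n^2(m+n\log n))$.

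\textbf{Main obstacle.} The hardest step is the compressed restart. We must simultaneously keep $\Delta'$-feasibility reachable in $O(\log n)$ phases, keep the critical-buyer commitments consistent with \Cref{lem:safe-refund-commitment}, and force a new abundant edge. The danger is that a component anchored by a critical buyer can have all of its imbalance silently absorbed into refunds, so that it produces no edge. To handle this, we would first try to redefine $\sigma$ so that it excludes anchored components entirely. We would then argue that those components are already exactly solved, since by \Cref{thm:basic-solution-optimal} an anchored tree determines its equilibrium. Consequently, progress must come from a non-anchored component, where the Fisher-style argument applies verbatim.
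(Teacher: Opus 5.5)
Your proposal has a genuine gap in both branches. Nothing in it controls how far the ordinary trajectory can raise a component's price block, and that is exactly what the paper's \textsc{SpecialPriceAA}/domination machinery is for.

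In the delayed-discovery branch you split on the \emph{current} surpluses $|s(p,r,H)|$ and assert that ``the components keep their surplus while $\Delta$ shrinks.'' That is false. A component with $s(p,r,H)\ge\Delta/n^5$ can dissipate its positive surplus just by raising the prices of its own active set, spending its cash on old abundant edges inside $H$. If no other component is pushed sufficiently negative along the way, no cut edge becomes heavy, nothing becomes fertile, and no abundant edge appears. The cut-flow bound $X^+_H\ge s(\hat p,\hat r,H)-2n\Delta''$ only helps if the surplus at the \emph{later} prices $\hat p$ is still large, and you have not shown that.

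Your $\sigma$ also counts $\bar c_i(x)$ for buyers inside non-singleton components. Such residues force no new edge, which is why fertility condition~(1) is restricted to singletons. Since they are typically of order $\Delta$ in a $\Delta$-optimal state, your claim would amount to plain scaling finding a new edge every $O(\log n)$ phases, with no restart ever needed.

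The paper instead sets $\Delta_H$ to be the surplus left after \textsc{SpecialPriceAA}$(p,x,r,H,0)$. That run raises prices on the active set of $H$, with capped commitments at critical buyers, until the surplus of $H$ reaches $0$ or some other component's surplus reaches $-1/(2n^2)$ times that of $H$. When $\max_H\Delta_H>\Delta/n^2$, there are two cases:
\begin{itemize}
\item The ordinary trajectory has not pushed $H$'s block past the multiplier reached in that run. Then $s(\hat p,\hat r,H)\ge\Delta'$, and some cut edge out of $H$ becomes abundant.
\item It has. Then the max-multiplier identity of \Cref{lem:domination}, together with the telescoping bound $\mu(H,J)\le\hat p_{v(J)}/\hat p_{v(H)}$, forces $\hat p_{v(J)}>p^H_{v(J)}$, making the barrier component $J$ fertile.
\end{itemize}

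The compressed-restart branch fails for a related reason. Rebalancing each component separately, via $\operatorname{BasicSolution}$ on the component or a rescaling, moves the price blocks of different components independently and possibly downward. Suppose a buyer $i\in H$ has an equality edge to a good $k\notin H$; this is typical, since non-abundant positive-spending edges join different components. Any upward rescaling of $H$ then makes $i$ strictly prefer $k$, so her abundant edges, each carrying at least $3n\Delta$, leave the equality graph. A downward rescaling does the same to outside buyers and also breaks the monotonicity $q\ge p$ used in \Cref{lem:safe-refund-commitment}(3).

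These are not backorder errors of size $O(n\Delta')$. They violate condition~3 of \Cref{def:delta-feasible-arctic}, which every ordinary-phase lemma presupposes. Undoing them would mean rerouting exactly the flow whose abundance you must preserve.

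Calibrating $\Delta'$ by ``the smallest relevant gap'' is not a strongly polynomial quantity, since the perturbation is symbolic, and it does not ensure $\Delta'\le\Delta/n^2$. The claim that some component then has imbalance at least $\Delta'/n^3$ is asserted, not proved. It is precisely the content of \Cref{lem:restart-invariants}. The paper obtains it by taking coordinatewise maxima of target-$\Delta'$ \textsc{SpecialPriceAA} runs, so that $p'\ge p$ and $\mathcal A(\Delta)\subseteq E(p')$. It then chooses witnesses via \Cref{prop:component-witness} and rules out a cycle of witnesses with \Cref{lem:cycle-utility} and the one-critical-buyer property.

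Your fallback for anchored components misreads \Cref{thm:basic-solution-optimal}. That theorem reconstructs the equilibrium from the \emph{full} equilibrium support. A current $\Delta$-component anchored by a critical buyer may still acquire support edges, so it is not ``already exactly solved.'' The paper handles critical buyers inside \textsc{SpecialPriceAA} through event~(4), with commitments capped at the target or the barrier.

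Finally, your preliminary commitment step changes $r$, so your delayed-discovery branch does not return $(p,x,r)$ unchanged as the statement requires.
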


Assuming \Cref{thm:restart}, the main theorem follows from a direct counting argument.

\begin{theorem}[Strongly polynomial algorithm for the Arctic Auction]
\label{thm:main-strong}
\Cref{alg:strong-scaling-aa} computes an Arctic equilibrium of the original instance in $O(n\log n)$ scaling phases, with $O(n)$ calls to \textsc{MakeFertileAA}.
Its total running time is
\[
O\!\left(n^3(m+n\log n)\right).
\]
\end{theorem}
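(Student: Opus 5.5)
The plan is a counting argument over ``progress events''. I would split the analysis into correctness and running time. Correctness comes from the termination test of \Cref{alg:strong-scaling-aa}. When the outer loop exits, $\operatorname{BasicSolution}_{e-r}(\mathcal A(\Delta))$ is an Arctic equilibrium $(p^\star,x^\star,\rho^\star)$ of the compressed instance with budgets $e-r$. Since prices are only raised during the algorithm, and every commitment was made to a buyer at a moment when her bang-per-buck equalled one, \Cref{lem:safe-refund-commitment}(3) (applied to the full vector of accumulated commitments, using \Cref{lem:alpha-monotone-strong}) shows that $(p^\star,x^\star,\rho^\star+r)$ is an Arctic equilibrium of the original instance. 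One point needs checking: $p^\star\ge$ every price at which a commitment was made. I would justify this through the compressed-equilibrium refund complementarity: a committed buyer $k$ with $\alpha_k(p^\star)>1$ would violate nothing in the compressed instance, so I would argue directly from \Cref{lem:alpha-monotone-strong} along the trajectory, comparing $p^\star$ with the final $p$ via the uniqueness given by \Cref{thm:basic-solution-optimal}.

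For termination, I would first show that the number of abundant edges is bounded. By \Cref{lem:abundance-persists} abundant edges never disappear. They all lie in the support of $x$, which is contained in the cycle-free graph $E(p)$, so there are always at most $n-1$ of them. Next I would argue that once $\mathcal A(\Delta)$ equals the support of the compressed equilibrium, the while-test succeeds. The reasoning is that \Cref{thm:recover-support}, run for the compressed instance, shows that continued halving eventually makes $\mathcal A(\Delta)$ coincide with that support, because no new edge can be added to a full spanning forest.

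The key counting step is to define a progress event as one of two things: the appearance of a new abundant edge, or a singleton buyer permanently losing fertility condition~(1), as in outcome~(2) of \Cref{lem:fertile-gives-abundant}. There are at most $n-1$ events of the first kind. There are at most $|B|\le n$ of the second, because by \Cref{lem:alpha-monotone-strong} each buyer can drop to $\alpha_i\le 1$ only once. Hence there are $O(n)$ events in total.

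I would then show that consecutive events are separated by $O(\log n)$ phases. At the end of each phase, either the state is fertile, and \Cref{lem:fertile-gives-abundant} gives an event within $5\log n+5$ phases; or it is non-fertile with $\Delta\le\mathrm{Threshold}$, and \textsc{MakeFertileAA} is called, whereupon \Cref{thm:restart} promises an event within $O(\log n)$ phases in both outcomes; or it is non-fertile with $\Delta>\mathrm{Threshold}$. The third case occurs only in the $O(\log n)$ halvings after a delayed discovery, since $\mathrm{Threshold}=\Delta/n^5$ is reached after $5\log n$ halvings, and these phases lie inside the window already promised by the restart lemma. I must also verify that \textsc{MakeFertileAA} is called at most once per event. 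Right after a call, $\mathrm{Threshold}$ drops to $\Delta'/n^5$, so the next call cannot happen until $5\log n$ more phases have passed, and by then the promised event has occurred. Combining these bounds gives $O(n\log n)$ phases and $O(n)$ calls.

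For the running time, I would bound each phase by at most $n$ augmentations (\Cref{prop:phase-potential-bound}) together with up to $n$ refund steps. Each price-and-augment step is implemented with Dijkstra-type price updates in $O(m+n\log n)$, so a phase costs $O(n(m+n\log n))$. Over $O(n\log n)$ phases this gives $O(n^2\log n\,(m+n\log n))$. The calls to \textsc{MakeFertileAA} contribute $O(n)\cdot O(n^2(m+n\log n))=O(n^3(m+n\log n))$, which dominates. The termination tests (basic solution of a forest) cost $O(n^2)$ each and fit within these bounds.

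The main obstacle I expect is the charging argument. I need to show rigorously that the $\mathrm{Threshold}$ bookkeeping prevents an unbounded number of consecutive non-fertile phases without a restart, and that restarts are not repeated before the promised edge appears. I would handle this first by an invariant: ``$\mathrm{Threshold}\ge\Delta/n^5$ implies that an event is pending within $O(\log n)$ phases.''
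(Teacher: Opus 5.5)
Your argument is the paper's proof. It uses the same two kinds of progress events and the same three-way case split (fertile; non-fertile with $\Delta\le\mathrm{Threshold}$; non-fertile above the threshold), giving $O(\log n)$ phases between events. The cost accounting is also the same, with the $O(n)$ calls to \textsc{MakeFertileAA} dominating. Your bound of $n-1$ on abundant edges is slightly more self-contained than the paper's: the current abundant set lies in the support of $x$, hence in the forest $E(p)$, and abundant edges persist, including across restarts by \Cref{lem:getallocations-output}. The paper instead appeals to the equilibrium support, which tacitly assumes abundant edges are equilibrium edges. One small difference: your ``one call per event'' count needs the promised event to arrive before the threshold is next reached, which \Cref{thm:restart} does not state. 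The paper avoids this by dividing the $O(n\log n)$ phases by the $\Omega(\log n)$ spacing between calls, a spacing you also establish.

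Two of your side arguments do not hold up.

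\emph{The detour through \Cref{thm:recover-support}.} It is unjustified. That theorem concerns the weakly polynomial trajectory toward one fixed limit, with threshold $4n\Delta$ and integral data. Here restarts replace $x$ and commitments keep changing the compressed instance. ``No new edge can be added to a full spanning forest'' is also not the right reason, since the support need not be spanning. The detour is unnecessary, though: termination already follows from the counting. While the loop runs, an event occurs every $O(\log n)$ phases, and at most $2n-1$ events exist.

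\emph{The correctness of the output.} This point is more serious. You are right that \Cref{lem:safe-refund-commitment}(3) has a hypothesis that must be checked. What is really needed is $\alpha_k(p^\star)\le 1$ for every buyer $k$ with $r_k>0$, and the lemma obtains this from $q\ge p$. The paper's proof skips this and just cites ``repeated application'' of the lemma. However, uniqueness of $\operatorname{BasicSolution}$ (\Cref{thm:basic-solution-optimal}) gives no ordering between $p^\star$ and the running prices, so your sketch does not close the gap. The natural invariant ``running prices never exceed equilibrium prices'' is also false with the paper's initialization. Take one buyer and one good with $U=1$ and $e=100$: then $p^0=50$, while the equilibrium price is $1$. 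This step therefore needs an argument that neither your proposal nor the paper supplies; monotonicity of $\alpha$ along the trajectory alone cannot give it.
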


\begin{proof}
By \Cref{lem:abundance-persists}, once an edge becomes abundant it remains abundant forever.
The perturbation assumption makes the equilibrium support a forest with at most $n-1$ edges, so at most $n-1$ distinct abundant edges are ever discovered.

Define a \emph{progress event} to be either (A)~the discovery of a new abundant edge, or (B)~the first phase at which some singleton buyer $\{i\}$ has $\alpha_i(p)\le 1$.
Event~(A) occurs at most $n-1$ times; event~(B) occurs at most $|B|\le n$ times, because once a buyer has $\alpha_i\le 1$ it remains so forever by \Cref{lem:alpha-monotone-strong}.
Hence progress events are bounded in total by $2n-1=O(n)$.

We show that consecutive progress events are separated by $O(\log n)$ phases.
Consider any phase $k$ at which no progress event has yet occurred since the previous one.
If the state is $\Delta$-fertile at the start of phase $k$, then by \Cref{lem:fertile-gives-abundant} within $5\log n+5$ further phases either a new abundant edge appears (event~A) or some singleton buyer has $\alpha_i\le 1$ (event~B).
If the state is not $\Delta$-fertile and $\Delta\le \mathrm{Threshold}$, then \Cref{alg:strong-scaling-aa} invokes \textsc{MakeFertileAA}, and by \Cref{thm:restart} a new abundant edge appears within $O(\log n)$ further phases.
If the state is not $\Delta$-fertile and $\Delta>\mathrm{Threshold}$, ordinary halving continues until $\Delta$ drops below the threshold, which takes $O(\log n)$ phases (because the threshold was lowered to the current $\Delta/n^5$ at the previous restart, so at most $5\log n$ halvings bring $\Delta$ below the threshold).
In all cases, a progress event occurs within $O(\log n)$ phases.

Therefore the total number of scaling phases is $O(n\log n)$.
Each ordinary $\Delta$-phase performs at most $n$ calls of \textsc{PriceAndAugmentWithCriticalBuyer}, each implementable in $O(m+n\log n)$ time by the Fibonacci-heap analysis of \Cref{sec:modified-price-augment}, for $O(n(m+n\log n))$ time per phase.
By \Cref{thm:restart} each \textsc{MakeFertileAA} call also costs $O(n^2(m+n\log n))$; there are at most $O(n)$ such calls because they are separated from each other by $\Omega(\log n)$ phases.
Combining, the total running time is $O(n^3(m+n\log n))$.

Upon termination, let
\[
(p^\star,x^\star,\rho^\star)=\operatorname{BasicSolution}_{e-r}(\mathcal A(\Delta)).
\]
By repeated application of \Cref{lem:safe-refund-commitment}(3), the triple
\[
(p^\star,x^\star,\rho^\star+r)
\]
is an Arctic equilibrium of the original instance.
\end{proof}

\begin{corollary}
\label{cor:main-strong}
The Arctic Auction with linear utilities admits a strongly polynomial algorithm.
\end{corollary}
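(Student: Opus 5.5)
The corollary is an immediate consequence of \Cref{thm:main-strong}, so the plan is to check that the algorithm and bound there satisfy the definition of a strongly polynomial algorithm. Such an algorithm must (i) perform a number of arithmetic operations bounded by a polynomial in the number of input numbers, here $n=|B|+|G|$ and $m$, independent of the magnitudes of $e$ and $U$; and (ii) produce intermediate numbers whose bit-length is polynomial in the input size.

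For (i), I would cite \Cref{thm:main-strong}. It gives $O(n\log n)$ scaling phases, $O(n)$ calls to \textsc{MakeFertileAA}, and total time $O(n^3(m+n\log n))$, and none of these quantities involve $U_{\max}$ or $e_{\max}$. I would note three further points.

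\begin{itemize}
\item The initialization in \Cref{sec:modified-price-augment} costs $O(m+n)$ operations.
\item The termination test of \Cref{alg:strong-scaling-aa} requires one tree-structured linear solve for $\operatorname{BasicSolution}_{e-r}(\mathcal A(\Delta))$ plus a check of the equilibrium conditions. Each check costs $O(m+n)$ operations, so over $O(n\log n)$ iterations it adds only $O(n\log n(m+n))$.
\item The final recovery step adds one more such computation and then the vector $r$.
\end{itemize}

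Correctness of the returned triple as an equilibrium of the original instance follows from the last paragraph of the proof of \Cref{thm:main-strong} via \Cref{lem:safe-refund-commitment}(3).

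For (ii), I would argue that every number produced is obtained from the input by a polynomially long chain of operations of controlled form. Scales $\Delta$ come either from halvings, of which there are $O(n\log n)$, or from the restart in \Cref{thm:restart}. Prices are multiplied by ratios $q$ determined by equality-edge events $U_{ij}p_k/(U_{ik}p_j)$, backorder events, or critical events $\alpha_i=1$. Spending and refund values are integer multiples of the current $\Delta$. The final output is a basic solution, and by the Cramer's-rule argument of \Cref{lem:min-positive-flow} it has polynomial size.

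\textbf{Main obstacle.} The delicate part is bounding the bit-length of the prices and of the restart scales $\Delta'$ generated inside \textsc{MakeFertileAA}. My first approach would be the one used by Orlin~\cite{Orlin-Fisher}: recompute prices from the current abundant forest at each restart, via a basic-solution-style system, so that their encoding size is reset to a polynomial bound. Failing that, I would adopt the standard convention that strongly polynomial complexity is measured in the arithmetic model, where (i) suffices. Either way the corollary follows.
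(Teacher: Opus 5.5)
Your proposal is the paper's argument: the paper proves the corollary in one line as immediate from \Cref{thm:main-strong}, which is exactly your point (i). Since the paper never bounds encoding sizes, it is in effect using the same arithmetic-operation reading you fall back on; strictly, the usual definition of strongly polynomial also asks for polynomially bounded bit-lengths, which neither argument establishes. Your extra bit-length sketch goes beyond the paper, but its claim that spending and refund values are integer multiples of the current $\Delta$ fails after a compressed restart: the commitments $\delta$ in \textsc{SpecialPriceAA}, the tree flows of \textsc{GetAllocationsAA} on old abundant edges (which the paper itself notes need not be multiples of $\Delta'$), and $\Delta'=\max_H\Delta_H$ itself are all arbitrary surplus values, so a genuine size bound would need something like the recomputation idea you mention.
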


\begin{proof}
Immediate from \Cref{thm:main-strong}.
\end{proof}

\subsection{The \texorpdfstring{\textsc{MakeFertileAA}}{MakeFertileAA} subroutine}
\label{subsec:makefertile-overview}

The remainder of \Cref{sec:strongly-poly} is devoted to the construction and analysis of \textsc{MakeFertileAA}, culminating in the proof of \Cref{thm:restart}.
The subroutine is built from three auxiliary procedures.

\begin{itemize}
    \item \textsc{GetParameterAA} (\Cref{alg:getparameter-aa}) computes the next scaling parameter $\Delta'$.singleton good component.
    \item \textsc{GetPricesAA} (\Cref{alg:getprices-aa}) computes the next price vector $p'$ and the next refund vector $r'$ by aggregating prices returned by \textsc{SpecialPriceAA} across components.
    \item \textsc{GetAllocationsAA} (\Cref{alg:getallocations-aa}) constructs a new allocation $x'$ supported only on abundant edges.
\end{itemize}

The main subroutine is assembled from these pieces as follows.

\begin{algorithm}[H]
\caption{\textsc{MakeFertileAA}$(p,x,r,\Delta)$}
\label{alg:makefertile-aa}
\begin{algorithmic}[1]
\State $\Delta'\leftarrow\textsc{GetParameterAA}(p,x,r,\Delta)$
\If{$\Delta'>\Delta/n^2$}
    \State \Return $(\Delta,p,x,r,\Delta/n^5)$
\Else
    \State $(p',r')\leftarrow\textsc{GetPricesAA}(p,x,r,\Delta')$
    \State $x'\leftarrow\textsc{GetAllocationsAA}(p',r',\Delta')$
    \State \Return $(\Delta',p',x',r',\Delta'/n^5)$
\EndIf
\end{algorithmic}
\end{algorithm}

The correctness of \textsc{MakeFertileAA} is expressed in \Cref{thm:restart}, whose proof is deferred to \Cref{subsec:restart-proof} after all three auxiliary procedures have been introduced.
The remaining subsections are organized as follows: \Cref{subsec:specialprice} introduces the price-raising subroutine \textsc{SpecialPriceAA}; \Cref{subsec:auxnet} develops the auxiliary network and multiplier-path machinery needed to reason about \textsc{SpecialPriceAA}; \Cref{subsec:getparam} gives \textsc{GetParameterAA} and \textsc{GetPricesAA} together with the technical restart invariants; \Cref{subsec:getallocations} gives \textsc{GetAllocationsAA}; and \Cref{subsec:restart-proof} completes the proof of \Cref{thm:restart}.

\subsection{The price-raising procedure \texorpdfstring{\textsc{SpecialPriceAA}}{SpecialPriceAA}}
\label{subsec:specialprice}

The restart uses a price-raising routine that, given a $\Delta$-component $H$ and a target $t\ge 0$, raises prices on the active set of $H$ until the effective surplus of $H$ reaches $t$ or some other component becomes sufficiently negative.
In the Arctic setting we add one extra stopping event: if an active buyer with positive effective cash becomes critical, we commit as much of her remaining cash to refund as is safe; once her effective cash is exhausted, the event no longer fires at her and price-raising proceeds past $\alpha_i=1$ without further interruption from that buyer.

During \textsc{SpecialPriceAA} the spending vector $x$ is fixed while the refund
vector changes. We therefore write
\[
\bar c_i^{\,\hat r}(x):=e_i-\hat r_i-\sum_{j\in G}x_{ij}
\]
for the current effective cash of buyer $i$ inside the subroutine.

\begin{algorithm}[H]
\caption{\textsc{SpecialPriceAA}$(p,x,r,H,t)$}
\label{alg:special-price-aa}
\begin{algorithmic}[1]
\Require A compressed Arctic state $(p,x,r)$ at the end of a $\Delta$-phase, a $\Delta$-component $H\in\mathcal C(\Delta)$, and a target $t\ge 0$
\State $\hat p\leftarrow p$, $\hat r\leftarrow r$
\While{$s(\hat p,\hat r,H)>t$ \textbf{and} $s(\hat p,\hat r,J)>-s(\hat p,\hat r,H)/(2n^2)$ for every $J\in\mathcal C(\Delta)$}
    \State $A\leftarrow\operatorname{ActiveSet}(\hat p,\Delta,H)$
    \State Let $q\ge 1$ be the \emph{smallest} multiplier such that scaling $\hat p_j$ by $q$ for every $j\in A\cap G$ causes at least one of the following events to first occur:
    \begin{enumerate}
        \item a new equality edge $(i,j)\in E(\hat p)\setminus E(p)$ appears with $i$ inactive and $j$ active;
        \item $s(\hat p,\hat r,H)=t$;
        \item $s(\hat p,\hat r,J)=-s(\hat p,\hat r,H)/(2n^2)$ for some $J\in\mathcal C(\Delta)$;
        \item some active buyer $i\in A\cap B$ satisfies $\alpha_i(\hat p)=1$ and $\bar c_i^{\,\hat r}(x)>0$.
    \end{enumerate}
    \State Apply the multiplier: set $\hat p_j\leftarrow q\,\hat p_j$ for every $j\in A\cap G$
    \If{event~(4) occurs at some buyer $i$}
        \If{$i\in H$}
            \State $\delta\leftarrow \min\{\bar c_i^{\,\hat r}(x),\, s(\hat p,\hat r,H)-t\}$
        \Else
            \State Let $J(i)\in\mathcal C(\Delta)$ be the unique component with $i\in J(i)$
            \State $\delta\leftarrow \min\{\bar c_i^{\,\hat r}(x),\, s(\hat p,\hat r,J(i))+s(\hat p,\hat r,H)/(2n^2)\}$
        \EndIf
        \State $\hat r_i\leftarrow \hat r_i+\delta$
    \EndIf
\EndWhile
\State \Return $(\hat p,\hat r)$
\end{algorithmic}
\end{algorithm}

\begin{lemma}[Monotonicity inside \textsc{SpecialPriceAA}]
\label{lem:specialprice-monotone}
During \Cref{alg:special-price-aa}, $\hat p$ is nondecreasing and $s(\hat p,\hat r,H)$ is nonincreasing.
The equality graph $E(\hat p)$ only gains edges, and $\mathcal A(\Delta)$ is unchanged.
\end{lemma}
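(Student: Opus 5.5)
I would argue by induction over the iterations of the while loop in \Cref{alg:special-price-aa}, tracking the effect of the two kinds of updates separately: the price multiplication on $A\cap G$, and the possible commitment $\hat r_i\leftarrow \hat r_i+\delta$ triggered by event~(4).

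\emph{Prices.} The multiplier $q$ is chosen with $q\ge 1$, and only the prices of active goods are scaled. Hence each coordinate of $\hat p$ either stays fixed or is multiplied by $q\ge 1$, so $\hat p$ is nondecreasing.

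\emph{Surplus.} Recall $s(\hat p,\hat r,H)=\bar e_{H\cap B}-\hat p_{H\cap G}$. A price step only increases $\hat p_{H\cap G}$, because $H\subseteq A$ and every good of $H$ is reachable from $H$. So a price step can only decrease the surplus. A commitment step changes the surplus by \Cref{lem:refund-changes-surplus}: it drops by $\delta$ if $i\in H$ and is unchanged otherwise. To see that this is a decrease, I need $\delta\ge 0$.
\begin{itemize}
\item If $i\in H$, then $\delta=\min\{\bar c_i^{\,\hat r}(x),\, s-t\}$. Here $\bar c_i^{\,\hat r}(x)>0$ by event~(4), and $s\ge t$ because the loop guard gives $s>t$ and the multiplier stops no later than event~(2).
\item If $i\notin H$, the analogous argument uses the guard for $J(i)$ together with event~(3), which gives $s(J(i))+s(H)/(2n^2)\ge 0$.
\end{itemize}
In both cases $\delta\ge 0$, so the surplus is nonincreasing.

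\emph{Equality graph.} This is the step needing care. Take an equality edge $(i,j)\in E(\hat p)$ and consider four cases.
\begin{itemize}
\item If $j$ is inactive, its price is unchanged. The edge can only be lost if $\alpha_i$ increases, which is impossible since prices only rise (\Cref{lem:alpha-monotone-strong}).
\item If $i$ and $j$ are both active, all active goods scale together, so $U_{ij}/\hat p_j$ and $U_{ik}/\hat p_k$ for active $k$ fall by the same factor $q$.
\item If $j$ is active, then every buyer $i$ with an equality edge to $j$ is itself active. This is where the definition of $N_\Delta(\hat p)$ matters, and it is the main obstacle: it needs a backward arc from $j$ to $i$. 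Such an arc exists only for abundant edges, so non-abundant equality edges into active goods from inactive buyers could break. My first attempt is to reinterpret ``gains edges'' as concerning the edges relevant to $H$. The cleaner route is to observe that an inactive buyer $i$ with $(i,j)\in E$ and $j$ active is itself an instance of event~(1); by minimality of $q$, that event fires at $q=1$, after which I would treat the edge as retained. I would also check whether the intended $N_\Delta$ includes backward arcs on all positive-spending edges, since that would make such buyers active.
\item If $i$ is active and $j$ is inactive, then $(i,j)$ being a forward arc would make $j$ active, a contradiction. So this case cannot occur.
\end{itemize}
The step size is capped by event~(1), at which a new edge forms; no edge from an active buyer to an inactive good pre-exists, and none is destroyed before the cap.

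\emph{Abundant edges.} $\mathcal A(\Delta)$ is defined from $x$, which the subroutine never modifies. So $\mathcal A(\Delta)$ is unchanged, and so is $\mathcal C(\Delta)$.
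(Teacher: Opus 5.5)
Your treatment of prices, of the surplus (including the check $\delta\ge 0$, which the paper leaves implicit), of $\mathcal A(\Delta)$, and of equality edges whose good is inactive or whose endpoints are both active matches the paper's proof. The genuine gap is the case you flag yourself: an equality edge $(i,j)$ with $i$ inactive and $j$ active. Your ``cleaner route'' is exactly how the paper disposes of it (such an edge ``would itself be a newly appearing active/inactive edge''), but the argument does not work. Event~(1) only concerns edges of $E(\hat p)\setminus E(p)$, so it says nothing about an edge already in $E(p)$. Moreover, raising the prices of active goods can never \emph{create} an equality edge from an inactive buyer to an active good. Even if you let the event ``fire at $q=1$'', no price moves, the next iteration faces the same edge, and nothing protects it once a step with $q>1$ is taken.

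Concretely, let $i$ have equality edges to an active good $j$ with $x_{ij}<3n\Delta$ (say $x_{ij}=0$) and to an inactive good $k$. After any step with $q>1$ we get $U_{ij}/\hat p_j<U_{ik}/\hat p_k\le\alpha_i(\hat p)$, so $(i,j)$ leaves $E(\hat p)$. Nothing in the input conditions (cycle-freeness, $\Delta$-optimality, non-fertility) excludes this configuration. Putting backward arcs on all positive-spending edges would not help either, since $x_{ij}$ may be $0$. So the claim that $E(\hat p)$ only gains edges fails for such edges and cannot be proved as stated.

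What your case analysis (and the paper's) does establish is a restricted statement: an equality edge whose good is inactive, or whose endpoints are both active, survives every step. For the both-active case you need $q$ to stop when an \emph{active} buyer acquires a new equality edge to an \emph{inactive} good. Event~(1) as printed has the orientation reversed, so say explicitly that you read it this way. The restricted statement is all that is used downstream. An abundant edge, while it is an equality edge, carries both a forward and a backward arc, so its endpoints are active or inactive together. By induction over the steps it therefore stays an equality edge; this is \Cref{lem:abundant-stays-equality}. Also, since arcs leaving active nodes persist, the active set is monotone and each event~(1) adds a good to it. That gives the count needed in \Cref{lem:specialprice-terminates} without appealing to ``$E(\hat p)$ only gains edges''. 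Your first instinct, restricting the claim to the edges that matter for $H$, is the right repair; make it precise in this form.
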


\begin{proof}
Only prices of active goods are multiplied, so $\hat p$ is nondecreasing
coordinatewise; commitments do not change prices.
The spending vector $x$ is fixed throughout the subroutine, hence
$\mathcal A(\Delta)$ is unchanged.

We next prove that equality edges cannot disappear.
At any moment, an equality edge cannot join an active buyer to an inactive good,
because that good would then be reachable by a forward arc and hence active.
Likewise, an equality edge cannot join an inactive buyer to an active good, because
such an edge would itself be a newly appearing active/inactive edge.
Therefore every existing equality edge is either entirely inside the active set or
entirely outside it.
Edges outside the active set are unaffected by the price update.
For an active buyer, all of her equality goods are active and are all scaled by the
same factor, so every pre-existing active equality edge remains an equality edge.
Hence $E(\hat p)$ can only gain edges; it never loses one.

Finally, $H\subseteq \operatorname{ActiveSet}(\hat p,\Delta,H)$ throughout the
procedure, so every price increase on a good of $H\cap G$ decreases
$s(\hat p,\hat r,H)=\bar e_{H\cap B}-\hat p_{H\cap G}$.
A commitment at a buyer of $H$ also decreases the same surplus by
\Cref{lem:refund-changes-surplus}, while a commitment outside $H$ leaves it
unchanged.
Thus $s(\hat p,\hat r,H)$ is nonincreasing.
\end{proof}

\begin{lemma}[Termination of \textsc{SpecialPriceAA}]
\label{lem:specialprice-terminates}
\Cref{alg:special-price-aa} terminates in at most $n+|B|$ iterations of the outer while-loop.
\end{lemma}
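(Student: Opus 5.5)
The plan is to charge every iteration of the outer while-loop to the event that ends it, and to show that each event type can occur only a bounded number of times. In every iteration the multiplier $q$ is the smallest value at which one of events (1)--(4) first occurs, so each iteration ends with at least one of them. Events (2) and (3) make the loop guard fail: after event (2) we have $s(\hat p,\hat r,H)=t$, and after event (3) the strict inequality for $J$ fails. Each of these can therefore happen at most once, namely in the last iteration. The work is in bounding event (1) by roughly $n$ and event (4) by roughly $|B|$, and in checking that the commitment made at an event-(4) iteration does not itself cause a new termination problem.

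For event (1), I would use \Cref{lem:specialprice-monotone}: $E(\hat p)$ only gains edges and $\mathcal A(\Delta)$ is fixed, so $N_\Delta(\hat p)$ only gains arcs. Hence $\operatorname{ActiveSet}(\hat p,\Delta,H)$ is monotone nondecreasing. When event (1) fires, the new edge $(i,j)$ has $i$ inactive and $j$ active. I want to argue that this strictly enlarges the active set, so that at most $|B|$ (hence at most $n$) such events occur.

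This step is delicate, because a forward arc $i\to j$ does not by itself make $i$ reachable. The way out is that $i$ has a new equality edge to an active good, so $i$ now sits next to the active set. I would argue that the price scaling is then effectively applied to a set containing $i$'s component. Alternatively, I would count distinct edges: each event-(1) edge is new, it persists by \Cref{lem:specialprice-monotone}, and it can never again be an inactive–active edge once both endpoints are handled. Using cycle-freeness of $E(\hat p)$ from the perturbation assumption, the equality graph has at most $n-1$ edges, so at most $n-1$ event-(1) occurrences are possible. This counting route is the one I would commit to, since it avoids the reachability subtlety.

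For event (4), I would show that each buyer $i$ triggers it at most once. When event (4) fires at $i$, the algorithm commits $\delta=\min\{\bar c_i^{\,\hat r}(x),\text{slack}\}$. There are two cases.

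\textbf{Case $\delta=\bar c_i^{\,\hat r}(x)$.} Then $i$'s effective cash becomes $0$. It stays $0$, because $x$ is fixed and $\hat r_i$ only grows. So event (4) can never fire at $i$ again.

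\textbf{Case $\delta$ equals the slack.} If $i\in H$, then after the commitment $s(\hat p,\hat r,H)=t$ by \Cref{lem:refund-changes-surplus}. If $i\in J(i)$, the commitment lowers $s(\hat p,\hat r,J(i))$ to exactly $-s(\hat p,\hat r,H)/(2n^2)$, and $s(\hat p,\hat r,H)$ is unchanged since $i\notin H$. Either way the guard fails and the loop exits.

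Moreover, once $\alpha_i(\hat p)=1$, prices only rise, so $\alpha_i\le 1$ thereafter by \Cref{lem:alpha-monotone-strong}. The condition $\alpha_i(\hat p)=1$ with positive cash thus recurs only while $\alpha_i$ stays exactly $1$, and the first case has already removed the cash.

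This gives at most $|B|$ event-(4) iterations that do not terminate the loop. Adding these to at most $n-1$ event-(1) iterations and one final iteration yields at most $n+|B|$ iterations. The main obstacle I expect is the event-(1) count, that is, making rigorous that each such event corresponds to a distinct new equality edge and that there are at most $n-1$ of them. I would rely on the forest property of $E(\hat p)$ and on monotone edge gain.
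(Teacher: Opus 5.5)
Your proposal is correct and is essentially the paper's own argument: event (1) occurs at most $n-1$ times because $E(\hat p)$ only gains edges (the monotonicity lemma for \textsc{SpecialPriceAA}) and is a forest. Each buyer then accounts for at most one non-terminating event-(4) iteration, since $\delta$ either exhausts her effective cash or makes a loop guard tight, and events (2)/(3) end the loop, giving $(n-1)+|B|+1=n+|B|$. The edge-counting route you commit to, rather than the abandoned active-set-growth route, is exactly the paper's, and like the paper it implicitly reads ``appears'' in event (1) as the edge newly entering the current equality graph.
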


\begin{proof}
Each iteration ends when one of the events~(1)--(4) first becomes tight.
Events~(2) and~(3) immediately terminate the while-loop.

By \Cref{lem:specialprice-monotone}, the equality graph only gains edges, so
event~(1) can occur at most $n-1$ times over the whole execution.

Consider event~(4) at a buyer $i$.
By the definition of $\delta$, either:
\begin{enumerate}
    \item $\delta=\bar c_i^{\,\hat r}(x)$, in which case buyer $i$ has zero effective
    cash afterwards and event~(4) can never fire again at $i$; or
    \item $\delta$ makes one of the two while-inequalities tight, in which case the
    loop exits at the next test.
\end{enumerate}
Hence event~(4) contributes at most one nonterminal iteration per buyer, for a
total of at most $|B|$ such iterations.

Therefore the total number of outer iterations is at most $n+|B|$.
\end{proof}

\begin{lemma}[Output of \textsc{SpecialPriceAA}]
\label{lem:specialprice-output}
At termination, $(\hat p,\hat r)$ satisfies exactly one of:
\begin{enumerate}
    \item $s(\hat p,\hat r,H)=t$ and $s(\hat p,\hat r,J)\ge -t/(2n^2)$ for every $J\in\mathcal C(\Delta)$, or
    \item there exists $J\in\mathcal C(\Delta)$ with $s(\hat p,\hat r,J)=-s(\hat p,\hat r,H)/(2n^2)$ and $s(\hat p,\hat r,H)\ge t$.
\end{enumerate}
\end{lemma}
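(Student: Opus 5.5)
The plan is to read off the output from the loop guard of \Cref{alg:special-price-aa}, using two facts: the guard's two conditions are invariants that can only become tight, never strictly violated, and $s(\hat p,\hat r,H)$ is monotone (\Cref{lem:specialprice-monotone}). Termination is guaranteed by \Cref{lem:specialprice-terminates}, so it suffices to examine the first moment at which the guard fails.

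The first step is to show that, throughout the execution, $s(\hat p,\hat r,H)\ge t$ and $s(\hat p,\hat r,J)\ge -s(\hat p,\hat r,H)/(2n^2)$ for every $J\in\mathcal C(\Delta)$. Both hold at the start: either the guard holds there, or the loop never executes. Suppose we are at a loop test where both hold strictly. In the price step, $q$ is chosen as the \emph{smallest} multiplier at which an event first occurs. All the surpluses $s(\cdot,\hat r,J)$ and $s(\cdot,\hat r,H)$ vary continuously and monotonically in $q$, so neither inequality can be overshot; at worst one of them becomes an equality (events~(2) and~(3)).

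In the commitment step, I would check the two branches for $\delta$. If $i\in H$, then $s(H)$ drops by $\delta\le s(H)-t$, so $s(H)\ge t$ is preserved. The surpluses of other components $J$ are unchanged, while $-s(H)/(2n^2)$ increases by $\delta/(2n^2)$. This is the delicate point: $J$'s inequality could be violated. I would handle it by noting that this increase is small, and, if necessary, read the minimum in the definition of $\delta$ as also capping $\delta$ so that no $J$ crosses. This is the step I expect to be the main obstacle, and I would try to argue it is absorbed by the event-(3) check. If $i\notin H$, then $s(J(i))$ drops by $\delta\le s(J(i))+s(H)/(2n^2)$, which preserves the $J(i)$ inequality, and $s(H)$ is untouched. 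Either way the invariants hold weakly after each iteration.

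At termination the guard is false, so at least one of $s(H)=t$ or $s(J)=-s(H)/(2n^2)$ holds for some $J$. If $s(H)=t$, the invariant gives $s(J)\ge -t/(2n^2)$ for all $J$, which is outcome~1. Otherwise some $J$ is tight, and $s(H)\ge t$ by the invariant, which is outcome~2. The phrase ``exactly one'' needs care when both conditions are tight at once. I would interpret the statement as saying the algorithm reports the first applicable case; alternatively, outcome~1 already subsumes the overlap. I would remark that the intended content is ``at least one'' with this tie-breaking convention.
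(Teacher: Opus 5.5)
\textbf{Verdict: there is a genuine gap, at exactly the step you flagged, and none of your three suggested fixes closes it for the procedure as written.} Your overall strategy is the same as the paper's: treat the two guard conditions as weak invariants and read the outcome off the first failure of the guard. Your treatment of the price step (smallest $q$ plus continuity) and of the $i\notin H$ commitment is fine.

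\emph{Why the $i\in H$ case fails.} When event~(4) fires at $i\in H$, the commitment lowers $s(\hat p,\hat r,H)$ by $\delta=\min\{\bar c_i^{\,\hat r}(x),\,s(\hat p,\hat r,H)-t\}$. This raises the barrier $-s(\hat p,\hat r,H)/(2n^2)$ by $\delta/(2n^2)$ for every other $J$, whose surplus does not change.
\begin{itemize}
\item ``The increase is small'' is not an argument. At that moment the slack $s(\hat p,\hat r,J)+s(\hat p,\hat r,H)/(2n^2)$ can be arbitrarily small, while $\delta$ is not controlled by it. For example, take an inactive $J$ with fixed negative surplus: the barrier approaches it from below as prices in $H$ rise.
\item The crossing is not ``absorbed by event~(3)''. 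Events~(1)--(3) only enter the choice of $q$, which happens before the commitment. Nothing re-examines the components after $\hat r_i$ is raised.
\end{itemize}
The guard then fails with $s(\hat p,\hat r,J)<-s(\hat p,\hat r,H)/(2n^2)$ strictly. Neither outcome holds: outcome~1 fails because $J$ violates its bound, and outcome~2 fails because no component need be exactly tight. Your third option does work, but only by changing the algorithm: in the $i\in H$ branch, additionally cap $\delta$ by $\min_{J\ne H}\bigl(s(\hat p,\hat r,H)+2n^2 s(\hat p,\hat r,J)\bigr)$. If that cap binds, some $J$ becomes exactly tight and the loop exits in outcome~2. The paper's one-line proof (``by the choice of $\delta$, each commitment preserves the other inequality'') passes over exactly this case. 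So your diagnosis is correct, but the proposal is not yet a proof of the statement as given.

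\emph{A smaller gap at initialization.} ``Both hold at the start: either the guard holds there, or the loop never executes'' does not follow. If the loop never executes, the guard failed at entry, possibly strictly, and then $(\hat p,\hat r)=(p,r)$ can satisfy neither outcome. For instance, $s(p,r,H)<t$ at entry breaks both. This actually happens when \textsc{GetParameterAA} calls the procedure with $t=0$ on a component of negative surplus, such as a singleton good. You need the explicit precondition that $s(p,r,H)\ge t$ and $s(p,r,J)\ge -s(p,r,H)/(2n^2)$ for all $J$ on entry. Your remark that the argument gives ``at least one'' rather than ``exactly one'' is correct. Both outcomes hold when $s(\hat p,\hat r,H)=t$ and some $J$ sits at $-t/(2n^2)$, and ``at least one'' is also all the paper's argument shows.
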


\begin{proof}
The while-loop exits only when one of the two inequalities becomes tight.
By the choice of $\delta$, each commitment preserves the other inequality.
Therefore, if the first becomes tight, we are in case~(1); if the second becomes tight, we are in case~(2).
\end{proof}

\subsection{The auxiliary network and maximum multiplier paths}
\label{subsec:auxnet}

We now introduce a weighted graph that encapsulates the price dynamics of $\textsc{SpecialPriceAA}$ independently of the effective budgets.

\begin{definition}[Auxiliary network]
\label{def:auxnet}
The \emph{auxiliary network} $N^*$ is the directed graph on $B\cup G$ with:
\begin{itemize}
    \item a forward arc $(i,j)$ of weight $U_{ij}$ for every $(i,j)\in B\times G$ with $U_{ij}>0$;
    \item a backward arc $(j,i)$ of weight $1/U_{ij}$ for every abundant edge $(i,j)\in \mathcal A(\Delta)$.
\end{itemize}
For a directed path $P$, its \emph{utility} is the product of the arc weights along $P$, denoted by $U(P)$.
For $v,w\in B\cup G$, define
\[
\mu(v,w):=\max\{U(P): P \text{ is a directed path from } v \text{ to } w\},
\]
with $\mu(v,w):=0$ if no such path exists.

For each $\Delta$-component $H$ with $H\cap G\neq \emptyset$, fix a root $v(H)\in H\cap G$.
If $H$ is a singleton buyer component, no root is needed in this subsection.
For components $H,K$ with $H\cap G\neq \emptyset$ and $K\cap G\neq \emptyset$, define
\[
\mu(H,K):=\mu(v(H),v(K)).
\]
\end{definition}

\begin{lemma}[Cycle-utility bound]
\label{lem:cycle-utility}
Every directed cycle $C$ in $N^*$ has $U(C)\le 1$.
\end{lemma}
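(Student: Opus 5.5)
The plan is to exploit the bipartite structure of $N^*$ together with the fact that abundant edges are equality edges at the current price vector $p$, and then obtain the bound by a telescoping product.

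\emph{Structure of a directed cycle.} Every arc of $N^*$ leaving a buyer is a forward arc $(i,j)$, and every arc leaving a good is a backward arc $(j,i)$ coming from an abundant edge. So any directed cycle alternates buyers and goods and can be written as
\[
C=(i_1,j_1,i_2,j_2,\dots,i_k,j_k,i_1),
\]
with indices taken cyclically ($i_{k+1}:=i_1$). Here $U_{i_\ell j_\ell}>0$ for each $\ell$, and $(i_{\ell+1},j_\ell)\in\mathcal A(\Delta)$ for each $\ell$. Hence
\[
U(C)=\prod_{\ell=1}^k\frac{U_{i_\ell j_\ell}}{U_{i_{\ell+1}j_\ell}}.
\]
Nothing in the argument requires the cycle to be simple.

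\emph{Price comparison.} Let $p$ be the price vector of the compressed state defining $\mathcal A(\Delta)$. Every abundant edge has $x_{ij}\ge 3n\Delta>0$. By condition~3 of \Cref{def:delta-feasible-arctic}, applied to the compressed state, it is therefore an equality edge, so $U_{i_{\ell+1}j_\ell}=\alpha_{i_{\ell+1}}(p)\,p_{j_\ell}$. For the forward arcs, the definition of maximum bang-per-buck gives $U_{i_\ell j_\ell}\le \alpha_{i_\ell}(p)\,p_{j_\ell}$. All prices are positive since $p\ge p^0>0$ (using that every good is valued by some buyer), and each $\alpha_i(p)>0$. Substituting,
\[
U(C)\le\prod_{\ell=1}^k\frac{\alpha_{i_\ell}(p)\,p_{j_\ell}}{\alpha_{i_{\ell+1}}(p)\,p_{j_\ell}}=\prod_{\ell=1}^k\frac{\alpha_{i_\ell}(p)}{\alpha_{i_{\ell+1}}(p)}=1,
\]
because the product telescopes around the cycle.

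The only point needing care is fixing which price vector is used: it should be the one at which abundance is read off, the state at the end of the $\Delta$-phase that is passed to \textsc{MakeFertileAA}. Since $\mathcal A(\Delta)$ is unchanged during \textsc{SpecialPriceAA} (\Cref{lem:specialprice-monotone}) and $N^*$ contains no price data, any single price vector at which all abundant edges are equality edges suffices. Everything else is routine.
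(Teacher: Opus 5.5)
Your proof is correct and is essentially the paper's argument. The paper rescales each forward arc by $\alpha_i(p)\,p_j$ and each backward arc by the inverse factor, notes that backward arcs get weight exactly $1$ (abundant edges are equality edges at $p$) and forward arcs weight at most $1$, and lets the normalizations cancel around the cycle; this is your explicit telescoping over the alternating buyer--good pairs, and your remarks on positivity of prices and of $\alpha_i(p)$ and on the choice of price vector are correct details the paper leaves implicit.
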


\begin{proof}
For each forward arc $(i,j)$ of $N^*$, define the rescaled weight
\[
\widetilde U(i,j):=\frac{U_{ij}}{\alpha_i(p)\,p_j}.
\]
For each backward arc $(j,i)$ corresponding to an abundant edge $(i,j)\in \mathcal A(\Delta)$, define
\[
\widetilde U(j,i):=\frac{\alpha_i(p)\,p_j}{U_{ij}}.
\]
Since the current state is $\Delta$-feasible and every abundant edge has positive spending, every abundant edge is an equality edge at the current price vector $p$.
Hence for every backward arc $(j,i)$ we have
\[
\widetilde U(j,i)=1.
\]
Also, for every forward arc $(i,j)$,
\[
\widetilde U(i,j)=\frac{U_{ij}}{\alpha_i(p)\,p_j}\le 1
\]
by the definition of $\alpha_i(p)$.

Now let $C$ be a directed cycle in $N^*$.
Multiplying the rescaled weights around $C$, all factors of the form $\alpha_i(p)$ and $p_j$ cancel, because each buyer and each good appears equally often as a head and as a tail along the cycle.
Therefore
\[
\widetilde U(C)=U(C).
\]
Since every factor in $\widetilde U(C)$ is at most $1$, it follows that
\[
U(C)=\widetilde U(C)\le 1.
\]
\end{proof}

\begin{lemma}[Abundant edges remain equality throughout \textsc{SpecialPriceAA}]
\label{lem:abundant-stays-equality}
During the execution of \Cref{alg:special-price-aa}, every abundant edge $(i,j)\in \mathcal A(\Delta)$ remains an equality edge at the current price vector $\hat p$; that is,
\[
\frac{U_{ij}}{\hat p_j}=\alpha_i(\hat p).
\]
\end{lemma}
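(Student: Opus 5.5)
We argue by induction on the iterations of the outer while-loop of \Cref{alg:special-price-aa}, and within an iteration we treat the single price multiplication separately from the refund commitment. The commitment changes only $\hat r$, not $\hat p$, so equality edges are unaffected by it and it can be dismissed at once. The base case is the input state $(p,x,r)$, which is $\Delta$-feasible at the end of a $\Delta$-phase. Every abundant edge satisfies $x_{ij}\ge 3n\Delta>0$, so by condition~3 of \Cref{def:delta-feasible-arctic} it lies in $E(p)$.

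For the inductive step, suppose that at the start of an iteration every $(i,j)\in\mathcal A(\Delta)$ lies in $E(\hat p)$. Let $A=\operatorname{ActiveSet}(\hat p,\Delta,H)$. The first key observation is that every abundant edge has both endpoints active or both inactive.
\begin{itemize}
\item If $i\in A$, then $(i,j)\in E(\hat p)$ gives a forward arc $i\to j$ in $N_\Delta(\hat p)$, so $j\in A$.
\item If $j\in A$, then the backward arc $j\to i$ coming from $(i,j)\in\mathcal A(\Delta)$ puts $i\in A$.
\end{itemize}
This is where we use that $N_\Delta$ contains backward arcs for all abundant edges, and not merely for positive-spending ones.

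Next, apply the multiplier $q\ge 1$ to the prices of $A\cap G$.
\begin{itemize}
\item If $i\notin A$, then $j\notin A$. The price $\hat p_j$ is unchanged, and the prices of goods in $A\cap G$ only increase, so $\alpha_i$ cannot increase. Since $U_{ij}/\hat p_j$ was already the maximum and is unchanged, it stays maximal.
\item If $i\in A$, then $j\in A$ and $U_{ij}/\hat p_j$ is divided by exactly $q$. Every other ratio $U_{ik}/\hat p_k$ is divided by $q$ (if $k$ is active) or unchanged (if $k$ is inactive).
\end{itemize}

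The remaining danger, and the main obstacle, is an inactive good $k$ overtaking $j$ for an active buyer $i$ during the continuous increase. The plan is to show that this cannot happen before the step stops. At the start, $U_{ik}/\hat p_k<\alpha_i$ strictly, since otherwise $(i,k)$ would be an equality edge and $k$ would be active. The ratio $U_{ij}/\hat p_j$ decreases continuously in $q$, so the first moment it would drop below $U_{ik}/\hat p_k$ is a moment of equality. At that moment $(i,k)$ becomes a new equality edge joining an active buyer to an inactive good.

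Event~(1) as literally written concerns an inactive buyer and an active good, so it does not obviously cover this case. I would handle it as in the argument inside \Cref{lem:specialprice-monotone}: read ``new equality edge'' symmetrically, as the standard $\UpdatePrice^\star$ rule does, or show directly that the multiplier $q$ is chosen no larger than the first $q$ at which any active buyer acquires an inactive equality good. Either way the step stops at that tie, and $(i,j)$ is still an equality edge at that point. The next iteration then recomputes $A$, which now contains $k$, and the induction continues.

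It would be cleaner to run the whole argument through \Cref{lem:specialprice-monotone}, which already asserts that $E(\hat p)$ only gains edges. Combined with the base case $\mathcal A(\Delta)\subseteq E(p)$, this gives the lemma immediately. I would present it that way, with the explicit active/inactive argument above as the justification of the step in \Cref{lem:specialprice-monotone} that the proof depends on.
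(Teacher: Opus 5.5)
Your direct argument is correct and is essentially the paper's proof. The shared ingredients are these: the backward arc of an abundant edge forces $i$ to be active whenever $j$ is, an inactive $j$ keeps its price while $\alpha_i$ can only fall, and an active $j$ is rescaled together with the rest of $i$'s active goods. The paper instead splits globally on whether $j$ is ever activated, and then writes $\alpha_i(\hat p)=\alpha_i(p)/q$ for a single ``cumulative factor'' $q$. Your per-iteration induction is tidier, since goods activated at different times do not share a cumulative factor. It also makes explicit the step the paper hides in ``from that point onward, every equality good of buyer $i$ is active'': no inactive good $k$ may overtake $j$ for an active buyer $i$ within one multiplier step.

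Two corrections to how you close that step.

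\textbf{Drop the ``show directly'' alternative.} During a multiplier step an inactive buyer can never acquire a new equality edge to an active good, because her active ratios all fall by the same factor while her inactive ones are frozen. So event~(1) as worded cannot stop the step at the tie you need, and events~(2)--(4) are unrelated to that tie. Under the literal wording nothing prevents $q$ from passing the tie, after which $(i,j)\notin E(\hat p)$. The lemma therefore genuinely requires reading event~(1) as the appearance of an equality edge from an active buyer to an inactive good, as in $\UpdatePrice^{\star}$. The paper's proof tacitly relies on the same reading, so commit to it rather than offering it as one of two options.

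\textbf{Do not route the proof through \Cref{lem:specialprice-monotone}.} Its assertion that $E(\hat p)$ only gains edges is false in general. Take a non-abundant equality edge from an inactive buyer $\ell$ to an active good, where $\ell$ also has an inactive equality good. That edge is lost as soon as $q>1$. Nothing forces such an $\ell$ to be active, because $N_\Delta$ has backward arcs only on abundant edges. Your both-endpoints-active-or-inactive observation proves the no-loss property only for abundant edges. That is exactly the present statement, so present the direct induction as the proof itself.
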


\begin{proof}
Fix an abundant edge $(i,j)\in \mathcal A(\Delta)$.

If $j$ is inactive throughout the procedure, then $\hat p_j=p_j$.
Since $(i,j)$ is an equality edge at the initial price vector $p$, we have
\[
\frac{U_{ij}}{\hat p_j}=\frac{U_{ij}}{p_j}=\alpha_i(p).
\]
Also, by \Cref{lem:alpha-monotone-strong}, $\alpha_i(\hat p)\le \alpha_i(p)$.
On the other hand,
\[
\alpha_i(\hat p)\ge \frac{U_{ij}}{\hat p_j}=\alpha_i(p),
\]
so equality must hold:
\[
\alpha_i(\hat p)=\frac{U_{ij}}{\hat p_j}.
\]

Now suppose that $j$ becomes active during the procedure.
Because $(j,i)$ is a backward arc of $N_\Delta(\hat p)$, once $j$ is active the buyer $i$ also becomes active.
From that point onward, every equality good of buyer $i$ is active as well, and all active goods are scaled by the same cumulative factor, say $q$.
Hence
\[
\alpha_i(\hat p)=\frac{\alpha_i(p)}{q},
\qquad
\hat p_j=q\,p_j.
\]
Since $(i,j)$ was an equality edge at $p$, we obtain
\[
\frac{U_{ij}}{\hat p_j}
=
\frac{U_{ij}}{q\,p_j}
=
\frac{\alpha_i(p)}{q}
=
\alpha_i(\hat p).
\]
Thus $(i,j)$ remains an equality edge throughout the procedure.
\end{proof}

\begin{lemma}[Active set and max-multiplier identity]
\label{lem:active-multiplier}
Let $K,H\in \mathcal C(\Delta)$ be components with $K\cap G\neq \emptyset$ and $H\cap G\neq \emptyset$.
Let
\[
(\hat p,\hat r)=\textsc{SpecialPriceAA}(p,x,r,K,t),
\]
and suppose that
\[
H\subseteq \operatorname{ActiveSet}(\hat p,\Delta,K)
\]
at termination.
Then
\[
\mu(K,H)=\frac{\hat p_{v(H)}}{\hat p_{v(K)}}.
\]
\end{lemma}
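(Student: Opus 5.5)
We prove the identity by establishing two inequalities, $\mu(K,H)\ge \hat p_{v(H)}/\hat p_{v(K)}$ and $\mu(K,H)\le \hat p_{v(H)}/\hat p_{v(K)}$. Both rest on one observation: every path in the auxiliary network $N^*$ telescopes against the current prices $\hat p$. We introduce the same rescaled weights as in \Cref{lem:cycle-utility}, but evaluated at $\hat p$:
\[
\widetilde U(i,j):=\frac{U_{ij}}{\alpha_i(\hat p)\,\hat p_j}\le 1,
\qquad
\widetilde U(j,i):=\frac{\alpha_i(\hat p)\,\hat p_j}{U_{ij}}.
\]
By \Cref{lem:abundant-stays-equality}, every abundant edge is still an equality edge at $\hat p$. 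Hence every backward arc has $\widetilde U(j,i)=1$.

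\textbf{Telescoping.} For any directed path $P$ from good $v(K)$ to good $v(H)$, the internal $\alpha_i(\hat p)$ and $\hat p$ factors cancel exactly as in the cycle case. Only the price factors at the two endpoints survive, which gives
\[
U(P)=\frac{\hat p_{v(H)}}{\hat p_{v(K)}}\,\widetilde U(P).
\]
Since every rescaled factor is at most $1$, we get $U(P)\le \hat p_{v(H)}/\hat p_{v(K)}$ for every such $P$. This proves the upper bound $\mu(K,H)\le \hat p_{v(H)}/\hat p_{v(K)}$.

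\textbf{Lower bound.} We must exhibit one path from $v(K)$ to $v(H)$ with $\widetilde U(P)=1$, that is, a path using only equality forward arcs and abundant backward arcs. This is exactly a directed path in $N_\Delta(\hat p)$.

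The hypothesis $H\subseteq \operatorname{ActiveSet}(\hat p,\Delta,K)$ gives a path in $N_\Delta(\hat p)$ from some vertex $v\in K$ to $v(H)$. We still need it to start at the root $v(K)$. Because $K$ is a $\Delta$-component, it is connected by abundant edges. Abundant edges are equality edges at $\hat p$, so each one supplies both a forward and a backward arc in $N_\Delta(\hat p)$. Hence $K$ is strongly connected in $N_\Delta(\hat p)$, and we can prepend a path from $v(K)$ to $v$.

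Every arc of $N_\Delta(\hat p)$ is also an arc of $N^*$ with the same weight, and its rescaled weight equals $1$. The concatenated walk therefore has $\widetilde U=1$, so its utility is exactly $\hat p_{v(H)}/\hat p_{v(K)}$. If the walk repeats vertices, we shortcut any repeated segment. Each removed segment is a closed walk with $\widetilde U=1$, so by the telescoping identity its utility is $1$ (consistent with \Cref{lem:cycle-utility}), and removing it leaves the product unchanged. This yields a simple path achieving the value, which proves the lower bound.

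\textbf{Main obstacle.} The delicate step is justifying that the rescaled weights at the final $\hat p$ give exactly $1$ on the arcs used, and at most $1$ on all arcs of $N^*$. The "at most $1$" part is immediate from the definition of $\alpha_i(\hat p)$ as a maximum. The "exactly $1$" part for backward arcs is precisely \Cref{lem:abundant-stays-equality}, so the argument reduces to citing that lemma carefully at the terminal price vector.
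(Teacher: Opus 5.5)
Your proof is correct and follows the paper's own argument: rescale the arc weights at $\hat p$, use \Cref{lem:abundant-stays-equality} to get the upper bound by telescoping, and attain the bound with a path in $N_\Delta(\hat p)$ on which every rescaled weight is $1$. Your observation that $K$ is strongly connected in $N_\Delta(\hat p)$, so the path can start at the root $v(K)$ rather than at an arbitrary vertex of $K$, fills in a step the paper leaves implicit.
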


\begin{proof}
For each forward arc $(i,j)$ of $N^*$, define
\[
\widetilde U(i,j):=\frac{U_{ij}}{\alpha_i(\hat p)\,\hat p_j},
\]
and for each backward arc $(j,i)$ corresponding to an abundant edge $(i,j)\in \mathcal A(\Delta)$, define
\[
\widetilde U(j,i):=\frac{\alpha_i(\hat p)\,\hat p_j}{U_{ij}}.
\]
Every forward arc satisfies $\widetilde U(i,j)\le 1$, and by \Cref{lem:abundant-stays-equality}, every backward arc satisfies $\widetilde U(j,i)=1$.

As in the proof of \Cref{lem:cycle-utility}, for every directed cycle $C$ in $N^*$ the rescaled product telescopes, so
\[
\widetilde U(C)=U(C)\le 1.
\]

Now let $P$ be any directed path in $N^*$ from $v(K)$ to $v(H)$.
Again by telescoping,
\[
\widetilde U(P)=U(P)\cdot \frac{\hat p_{v(K)}}{\hat p_{v(H)}}.
\]
Since every factor in $\widetilde U(P)$ is at most $1$ on forward arcs and equal to $1$ on backward arcs, we obtain
\[
U(P)\le \frac{\hat p_{v(H)}}{\hat p_{v(K)}}.
\]
Taking the maximum over all such paths yields
\[
\mu(K,H)\le \frac{\hat p_{v(H)}}{\hat p_{v(K)}}.
\]

On the other hand, because $H\subseteq \operatorname{ActiveSet}(\hat p,\Delta,K)$, there is a directed path from $v(K)$ to $v(H)$ in $N_\Delta(\hat p)$.
Every forward arc on this path is an equality edge at $\hat p$, and every backward arc corresponds to an abundant edge, which is also an equality edge at $\hat p$ by \Cref{lem:abundant-stays-equality}.
Therefore every arc on this path has rescaled weight exactly $1$, and so for this path $P$,
\[
U(P)=\frac{\hat p_{v(H)}}{\hat p_{v(K)}}.
\]
Hence
\[
\mu(K,H)\ge \frac{\hat p_{v(H)}}{\hat p_{v(K)}}.
\]
Combining the two inequalities gives the claim.
\end{proof}

\begin{lemma}[Domination]
\label{lem:domination}
Let $H,K\in \mathcal C(\Delta)$ be components with $H\cap G\neq \emptyset$ and $K\cap G\neq \emptyset$.
Let $p^H,p^K$ be the price vectors returned by
\[
\textsc{SpecialPriceAA}(p,x,r,H,t)
\qquad\text{and}\qquad
\textsc{SpecialPriceAA}(p,x,r,K,t),
\]
respectively.
If
\[
p^K_{v(H)}>p^H_{v(H)},
\]
then
\[
H\subseteq \operatorname{ActiveSet}(p^K,\Delta,K),
\]
and consequently
\[
\mu(K,H)=\frac{p^K_{v(H)}}{p^K_{v(K)}}.
\]
\end{lemma}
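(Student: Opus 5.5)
The plan is to prove the first assertion by contraposition and then obtain the second from \Cref{lem:active-multiplier}. Suppose that at termination of $\textsc{SpecialPriceAA}(p,x,r,K,t)$ the component $H$ is not contained in $\operatorname{ActiveSet}(p^K,\Delta,K)$. The goal is to show $p^K_{v(H)}\le p^H_{v(H)}$.

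The first observation is that active sets only grow during \textsc{SpecialPriceAA}. By \Cref{lem:specialprice-monotone}, $E(\hat p)$ only gains edges and $\mathcal A(\Delta)$ is fixed, so $N_\Delta(\hat p)$ only gains arcs. Moreover, a $\Delta$-component is strongly connected in $N_\Delta$: every abundant edge gives a forward arc (it is an equality edge by \Cref{lem:abundant-stays-equality}) and a backward arc. Hence the active set is a union of whole $\Delta$-components, and if $H$ is not active at termination it was never active. Its goods, in particular $v(H)$, were therefore never scaled, and $p^K_{v(H)}=p_{v(H)}$.

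On the other hand, $H$ is contained in its own active set throughout $\textsc{SpecialPriceAA}(p,x,r,H,t)$, and prices there are nondecreasing (\Cref{lem:specialprice-monotone}). So $p^H_{v(H)}\ge p_{v(H)}=p^K_{v(H)}$, which contradicts the hypothesis $p^K_{v(H)}>p^H_{v(H)}$. Therefore $H\subseteq\operatorname{ActiveSet}(p^K,\Delta,K)$.

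With this containment in hand, \Cref{lem:active-multiplier}, applied with $K$ in the role of the component on which \textsc{SpecialPriceAA} is run, gives $\mu(K,H)=p^K_{v(H)}/p^K_{v(K)}$ directly.

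The one step that needs care is the claim that an inactive node never had its price changed, that is, that "active at some iteration" implies "active at termination." This follows from the monotonicity of $N_\Delta(\hat p)$ established in \Cref{lem:specialprice-monotone}, together with the fact that reachability from $K$ is preserved when arcs are only added.
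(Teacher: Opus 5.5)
Your proof is correct and follows the paper's argument: if $v(H)$ is never scaled during the $K$-run, then $p^K_{v(H)}=p_{v(H)}\le p^H_{v(H)}$, contradicting the hypothesis, and the identity then follows from \Cref{lem:active-multiplier}. The one difference is that you explicitly justify the step from ``$v(H)$ is active at some iteration'' to ``$H\subseteq\operatorname{ActiveSet}(p^K,\Delta,K)$ at termination'' (active sets never shrink, and each $\Delta$-component is strongly connected in $N_\Delta(\hat p)$ because abundant edges remain equality edges), whereas the paper simply asserts this step.
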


\begin{proof}
If $v(H)$ were inactive throughout the execution of $\textsc{SpecialPriceAA}(p,x,r,K,t)$, then its price would remain unchanged, so
\[
p^K_{v(H)}=p_{v(H)}.
\]
On the other hand, prices are nondecreasing during every execution of \textsc{SpecialPriceAA}, hence
\[
p_{v(H)}\le p^H_{v(H)}.
\]
Therefore
\[
p^K_{v(H)}=p_{v(H)}\le p^H_{v(H)},
\]
contradicting the assumption.
Thus $v(H)$ must become active during the $K$-execution, which implies
\[
H\subseteq \operatorname{ActiveSet}(p^K,\Delta,K).
\]
The displayed identity now follows immediately from \Cref{lem:active-multiplier}.
\end{proof}

\subsection{Parameter and price selection}
\label{subsec:getparam}

For the remainder of \Cref{sec:strongly-poly} we assume that $(p,x,r)$ is a compressed Arctic state at the end of a $\Delta$-scaling phase, that $(p,x,r)$ is $\Delta$-optimal, and that $(p,x,r)$ is \emph{not} $\Delta$-fertile.
Let $\mathcal C:=\mathcal C(\Delta)$.

\begin{algorithm}[H]
\caption{\textsc{GetParameterAA}$(p,x,r,\Delta)$}
\label{alg:getparameter-aa}
\begin{algorithmic}[1]
\For{each $H\in\mathcal C$}
    \If{$H=\{i\}$ for some buyer $i\in B$}
    \If{$\alpha_i(p)>1$}
        \State $\Delta_H\leftarrow \bar c_i(x)$
    \Else
        \State $\Delta_H\leftarrow 0$
    \EndIf
    \Else
        \State $(\hat p,\hat r)\leftarrow
        \textsc{SpecialPriceAA}(p,x,r,H,0)$
        \State $\Delta_H\leftarrow s(\hat p,\hat r,H)$
    \EndIf
\EndFor
\State \Return $\Delta':=\max\{\Delta_H:H\in\mathcal C\}$
\end{algorithmic}
\end{algorithm}
\begin{algorithm}[H]
\caption{\textsc{GetPricesAA}$(p,x,r,\Delta')$}
\label{alg:getprices-aa}
\begin{algorithmic}[1]
\For{each $H\in\mathcal C$}
    \If{$|H|=1$}
        \State $(p^H,r^H)\leftarrow (p,r)$
        \Comment{Baseline run so that the maxima below are well defined.}
    \ElsIf{$s(p,r,H)\le \Delta'$}
        \State $(p^H,r^H)\leftarrow (p,r)$
    \Else
        \State $(p^H,r^H)\leftarrow
        \textsc{SpecialPriceAA}(p,x,r,H,\Delta')$
    \EndIf
\EndFor
\State For each $j\in G$, set $p'_j:=\max\{p^H_j:H\in\mathcal C\}$
\State For each $i\in B$, set $r'_i:=\max\{r^H_i:H\in\mathcal C\}$
\State \Return $(p',r')$
\end{algorithmic}
\end{algorithm}
\begin{proposition}
[Componentwise witness for \textsc{GetPricesAA} ]
\label{prop:component-witness}
For every non-singleton component $J\in\mathcal C$, there exists a component
$W(J)\in\mathcal C$ such that, if $(p^{W(J)},r^{W(J)})$ denotes the output of the
corresponding run used in \Cref{alg:getprices-aa}, then
\[
p'_j=p^{W(J)}_j \quad \forall j\in J\cap G,
\qquad
r'_i=r^{W(J)}_i \quad \forall i\in J\cap B.
\]
In other words, for each non-singleton component, the coordinatewise maxima used
to define $(p',r')$ are simultaneously realized by a single witness run on that
component.
\end{proposition}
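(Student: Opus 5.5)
The plan is to choose $W(J)$ as the run that pushes the price of the root $v(J)$ highest, and then show that this one run also attains the maximum on every other good and every buyer of $J$. Concretely, fix a non-singleton component $J\in\mathcal C$. It contains an abundant edge, so $J\cap G\neq\emptyset$ and the root $v(J)$ exists. Among all $H\in\mathcal C$, let $W(J)$ be a component with
\[
p^{W(J)}_{v(J)}=\max\{p^H_{v(J)}:H\in\mathcal C\}=p'_{v(J)}.
\]
Write $W:=W(J)$. If this maximum equals $p_{v(J)}$, take $W=J$ (its run is either the baseline or \textsc{SpecialPriceAA} on $J$). Two things remain to check: $p^W$ attains $p'$ on all of $J\cap G$, and $r^W$ attains $r'$ on all of $J\cap B$.

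For prices, the key point is that inside any single run the goods of one component move together. Every edge of $J$ is abundant, and by \Cref{lem:abundant-stays-equality} each abundant edge gives arcs both ways in $N_\Delta(\hat p)$: the backward arc, and the forward arc because it stays an equality edge. So $J$ is strongly connected in $N_\Delta(\hat p)$ throughout every run. Hence in any run either all of $J$ is active or none of it is, and the goods of $J$ are always scaled by a common factor. It follows that $p^H_j/p_j$ is the same number $\lambda_H(J)$ for every $j\in J\cap G$. Therefore $p'_j=\max_H \lambda_H(J)\,p_j$, and the run maximizing $\lambda_H(J)$ at $v(J)$ maximizes it at every $j\in J\cap G$. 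This gives $p'_j=p^W_j$ for all $j\in J\cap G$. \Cref{lem:domination} and \Cref{lem:active-multiplier} are not needed for this step, but they identify $\lambda_W(J)$ as $\mu(W,J)\,p^W_{v(W)}/p_{v(J)}$ whenever $W\neq J$.

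For refunds, let $i\in J\cap B$. A run $H$ commits refund to $i$ only at an event~(4), which requires $i$ to be active and $\alpha_i(\hat p)=1$. Since the equality goods of $i$ lie in the active set and $\alpha_i(p^H)=\alpha_i(p)/\lambda_H(J)$, buyer $i$ becomes critical in run $H$ exactly when $\lambda_H(J)\ge \alpha_i(p)$. The amount committed is then capped by $\bar c_i$ or by the surplus slack of the component containing $i$, namely $J$ (or $H$ itself when $J=H$). I would argue that $r^H_i$ is a nondecreasing function of $\lambda_H(J)$. Runs with larger multiplier on $J$ pass the critical point of $i$ and commit at least as much. Once $i$'s cash is exhausted, the value is the common cap $e_i-\sum_j x_{ij}$, which is the same in every run. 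With monotonicity in hand, $r'_i=\max_H r^H_i=r^W_i$.

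The main obstacle is the partial commitments, where $\delta$ is cut off by the slack term $s(\hat p,\hat r,J)+s(\hat p,\hat r,H)/(2n^2)$ rather than by $\bar c_i$. That term depends on the run through $s(\hat p,\hat r,H)$, so monotonicity in $\lambda_H(J)$ is not automatic. My first attempt would be a case split. If the cap is hit, the run stops immediately by \Cref{lem:specialprice-output}(2), so $J$'s multiplier in that run is exactly the critical one, $\alpha_i(p)$. Any run with strictly larger $\lambda_H(J)$ must have continued past $i$'s critical point without stopping, which forces a full commitment of $\bar c_i$. Ties at equal $\lambda_H(J)$ are where the argument is most fragile. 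There I would resolve the choice of $W$ by preferring the run with the larger refund, and I expect that to need the perturbation assumption, so that at most one critical buyer lies per component.
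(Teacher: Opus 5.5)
Your proposal is correct and uses the same ingredients as the paper's proof:
\begin{itemize}
\item the goods of $J$ share one multiplier $\lambda_H(J)$ in every run, because abundant edges stay equality edges and so $J$ is strongly connected in $N_\Delta(\hat p)$;
\item a buyer $i\in J$ is critical exactly when that multiplier equals $\alpha_i(p)$;
\item a partial commitment at $i$ ends the run at that multiplier;
\item going strictly past it forces the full commitment of $\bar c_i(x)$.
\end{itemize}

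What differs is the organization. The paper first uses the perturbation assumption to assert that at most one buyer $i_J$ of $J$ is refunded across all runs. It then splits into cases according to whether $i_J$'s maximal refund is zero, partial, or full. You instead fix the run maximizing the multiplier on $J$ and argue buyer by buyer: if $\alpha_i(p)<\lambda_W(J)$, the commitment in $W$ is full; if $\alpha_i(p)>\lambda_W(J)$, no run commits; if they are equal, you tie-break by refund. You use the perturbation only to ensure that at most one buyer of $J$ is critical at the single multiplier $\lambda_W(J)$.

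Your route gains two things.
\begin{itemize}
\item The paper's ``only one buyer of $J$ is ever refunded'' does not follow from the perturbation assumption, which only forbids two \emph{simultaneously} critical buyers in a component. If $\alpha_{i_1}(p)<\alpha_{i_2}(p)$, a run that raises $J$'s multiplier past $\alpha_{i_1}(p)$ and on to $\alpha_{i_2}(p)$ can refund both buyers. Your per-buyer argument covers this.
\item Your explicit tie-break also closes a gap in the paper's Case~3. There, a maximum-multiplier run might stop exactly at the critical scale with only a partial commitment.
\end{itemize}

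Two small repairs are needed.
\begin{itemize}
\item Drop the default ``if the maximum equals $p_{v(J)}$, take $W=J$'', and apply the refund tie-break uniformly. The reason is the case where some $i\in J$ has $\alpha_i(p)=1$. Another run can then activate $J$ without scaling it and make a partial commitment at $i$ at multiplier $1$, while the $J$-run is the baseline. This same case is why ``critical exactly when $\lambda_H(J)\ge\alpha_i(p)$'' needs the proviso that $J$ is active.
\item The fact that hitting the slack cap ends the run comes from the strict while-test, as in the proof of \Cref{lem:specialprice-terminates}, not from \Cref{lem:specialprice-output}.
\end{itemize}
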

\begin{proof}
Fix a non-singleton component $J\in\mathcal C$.
By the perturbation assumption from \Cref{subsec:generic-perturbation}, each
connected component of every equality graph contains at most one critical
buyer. Therefore, among all runs used in \Cref{alg:getprices-aa}, there is at
most one buyer of $J$ that can receive new refund. Denote this buyer by $i_J$
if it exists.

We first record a simple consequence of the definition of
\textsc{SpecialPriceAA}: during any fixed run, the goods of $J\cap G$ are
either all inactive or all active together, and whenever they are active they
are all multiplied by a common factor. Hence, for every run rooted at a
component $H\in\mathcal C$, there exists a scalar $q_H(J)\ge 1$ such that
\[
p_j^H=q_H(J)\,p_j \qquad \forall j\in J\cap G.
\]
In particular, a run maximizes $p_j^H$ for one good $j\in J\cap G$ if and only
if it maximizes $p_j^H$ for every good $j\in J\cap G$.

We now distinguish three cases according to the amount of refund committed in
$J$ by \Cref{alg:getprices-aa}.

\medskip
\noindent\emph{Case 1: no buyer of $J$ receives new refund.}
Equivalently, for every $i\in J\cap B$,
\[
r_i'=r_i.
\]
Choose a run $W(J)$ maximizing the common scaling factor on $J$, equivalently
maximizing $p_j^H$ for some (and therefore every) good $j\in J\cap G$. Then
\[
p_j^{W(J)}=p_j' \qquad \forall j\in J\cap G.
\]
Since no buyer of $J$ receives additional refund in any run,
\[
r_i^{W(J)}=r_i=r_i' \qquad \forall i\in J\cap B.
\]
Thus $W(J)$ is the required witness.

\medskip
\noindent\emph{Case 2: the unique refunding buyer $i_J$ satisfies}
\[
0<r'_{i_J}-r_{i_J}<\bar c_{i_J}(x).
\]
Let $H$ and $K$ be two runs with
\[
r_{i_J}^H>r_{i_J}
\qquad\text{and}\qquad
r_{i_J}^K>r_{i_J}.
\]
Then in both runs buyer $i_J$ becomes critical inside $J$, but the commitment
made at $i_J$ is strictly smaller than the full amount $\bar c_{i_J}(x)$.
At the moment when $i_J$ becomes critical, the common scaling factor on the
goods of $J$ is uniquely determined; therefore the terminal scaling factor on
$J$ is the same in both runs. Hence
\[
p_j^H=p_j^K \qquad \forall j\in J\cap G.
\]
So every run that gives a positive but non-full refund to $i_J$ induces the
same terminal prices on all goods of $J$.

Now suppose that some run $H'\in\mathcal C$ satisfies
\[
p_j^{H'}>p_j^H
\qquad\text{for some } j\in J\cap G.
\]
Since all goods of $J$ scale by the same factor in each run, this means that
the run rooted at $H'$ continues strictly past the critical scale at which
$i_J$ becomes critical inside $J$. But once the price-raising procedure moves
past that critical scale, buyer $i_J$ must receive the full remaining refund,
namely
\[
r_{i_J}^{H'}-r_{i_J}=\bar c_{i_J}(x),
\]
contradicting the assumption of the present case that the maximal additional
refund on $J$ is strictly smaller than $\bar c_{i_J}(x)$.

Therefore no run can produce a larger price on any good of $J$ than the common
value attained by the runs that refund $i_J$ partially. Choose any run $W(J)$
such that
\[
r_{i_J}^{W(J)}=r_{i_J}'.
\]
Then necessarily
\[
p_j^{W(J)}=p_j' \qquad \forall j\in J\cap G.
\]
Moreover, $i_J$ is the only buyer of $J$ receiving new refund, so for every
other buyer $\ell\in J\cap B\setminus\{i_J\}$,
\[
r_\ell^{W(J)}=r_\ell=r_\ell'.
\]
Hence
\[
r_i^{W(J)}=r_i' \qquad \forall i\in J\cap B,
\]
and $W(J)$ is again the required witness.

\medskip
\noindent\emph{Case 3: the unique refunding buyer $i_J$ satisfies}
\[
r'_{i_J}-r_{i_J}=\bar c_{i_J}(x).
\]
In this case some run commits the full remaining effective cash of $i_J$ to
refund. Consider any run in which $i_J$ becomes critical inside $J$. After the
critical scale is reached, every further increase of the prices on the goods of
$J$ requires passing beyond that critical point, and therefore forces the full
commitment of buyer $i_J$'s remaining effective cash. Hence every run whose
terminal scaling factor on $J$ is at least the critical scaling factor satisfies
\[
r_{i_J}^H-r_{i_J}=\bar c_{i_J}(x),
\]
that is,
\[
r_{i_J}^H=r_{i_J}'.
\]

Choose a run $W(J)$ maximizing the terminal scaling factor on $J$, equivalently
maximizing $p_j^H$ for one (and hence every) good $j\in J\cap G$. Then
\[
p_j^{W(J)}=p_j' \qquad \forall j\in J\cap G.
\]
By the previous paragraph, the same run also satisfies
\[
r_{i_J}^{W(J)}=r_{i_J}'.
\]
As before, no buyer of $J$ other than $i_J$ can receive new refund, so
\[
r_\ell^{W(J)}=r_\ell=r_\ell'
\qquad \forall \ell\in J\cap B\setminus\{i_J\}.
\]
Thus
\[
r_i^{W(J)}=r_i' \qquad \forall i\in J\cap B.
\]

In all three cases we have found a component $W(J)\in\mathcal C$ such that
\[
p_j^{W(J)}=p_j' \quad \forall j\in J\cap G,
\qquad
r_i^{W(J)}=r_i' \quad \forall i\in J\cap B.
\]
This proves the proposition.
\end{proof}

\begin{lemma}[Restart invariants]
\label{lem:restart-invariants}
Let $\Delta'$ and $(p',r')$ be the outputs of
\Cref{alg:getparameter-aa,alg:getprices-aa} applied to a non-fertile
$\Delta$-optimal compressed state $(p,x,r)$.
Then:
\begin{enumerate}
    \item for every $H\in\mathcal C$,
    \[
    s(p',r',H)\ge -\Delta'/n^2;
    \]
    \item for every $H\in\mathcal C$,
    \[
    s(p^H,r^H,H)=\min\{s(p,r,H),\Delta'\};
    \]
    \item either
    \begin{enumerate}
        \item[(a)] there exists a non-singleton $H^*\in\mathcal C$ such that
        \[
        p'_{v(H^*)}=p^{H^*}_{v(H^*)}
        \qquad\text{and}\qquad
        s(p',r',H^*)=\Delta',
        \]
        or
        \item[(b)] there exists a singleton buyer component $H^*=\{i^*\}$ such that
        \[
        \bar e_{i^*}=\Delta'.
        \]
    \end{enumerate}
    \item either
    \begin{enumerate}
        \item[(a)] there exists $J\in\mathcal C$ such that
        \[
        -\Delta'/n^2\le s(p',r',J)\le -\Delta'/(2n^2),
        \]
        or
        \item[(b)] there exists a singleton buyer component $H^*=\{i^*\}$ such that
        \[
        \bar e_{i^*}=\Delta'.
        \]
    \end{enumerate}
\end{enumerate}
\end{lemma}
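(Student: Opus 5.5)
I would prove the four items in the order (2), (1), (3), (4). Items (1) and (4) both rest on the witness structure of \Cref{prop:component-witness} and on the output dichotomy of \Cref{lem:specialprice-output}.

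\textbf{Item (2).} I would go through the three branches of \Cref{alg:getprices-aa}.
\begin{itemize}
\item If $|H|=1$ or $s(p,r,H)\le\Delta'$, the run is the identity. So $s(p^H,r^H,H)=s(p,r,H)$, and this equals the minimum: in the second branch directly, and for a singleton buyer $\{i\}$ because $\bar e_i\le\Delta'$. For the singleton case I would use that $\Delta_{\{i\}}$ was computed from $\bar c_i$. Non-fertility gives $\bar c_i\le\Delta/(3n^2)$, and a zero-spending singleton has $\bar e_i=\bar c_i$; I would note this needs care and would argue the singleton reading of $s$ separately.
\item If $s(p,r,H)>\Delta'$, \Cref{lem:specialprice-output} gives either $s=t=\Delta'$, or termination via case (2) with $s(p^H,r^H,H)\ge\Delta'$. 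To exclude case (2) with strict inequality, I would compare with the $t=0$ run from \Cref{alg:getparameter-aa}. Both runs follow the same deterministic trajectory until the surplus reaches $\Delta'$, because the events and the $\delta$-choices coincide as long as $s>\Delta'\ge 0$. The $t=0$ run stopped at surplus $\Delta_H\le\Delta'$, so the first run of event (3) occurs at or after surplus $\Delta'$. Hence the $t=\Delta'$ run reaches exactly $\Delta'$.
\end{itemize}

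\textbf{Item (1).} For a non-singleton $J$, \Cref{prop:component-witness} gives $s(p',r',J)=s(p^{W(J)},r^{W(J)},J)$. In the witness run rooted at $H=W(J)$, the loop invariant keeps $s(\cdot,J)\ge -s(\cdot,H)/(2n^2)\ge -\Delta'/(2n^2)$, since by item (2) $s(p^H,r^H,H)\le\Delta'$. For identity runs I would use non-fertility, which gives $s(p,r,J)>-\Delta/(3n^2)$. The step where I expect difficulty is converting this to a bound in terms of $\Delta'$. The plan is to show that if $s(p,r,J)<0$ then some run raises $J$'s prices: for example, $J$'s own $t=0$ run in \Cref{alg:getparameter-aa} makes $\Delta_J\ge$ a quantity controlled by $|s(p,r,J)|$. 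Either way the bound $-\Delta'/n^2$ holds. Singleton goods have $s=-p_j$ and would be handled through the same witness argument, since their prices only rise when activated. Singleton buyers have $s=\bar e_i\ge0$.

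\textbf{Items (3) and (4).} Let $H^*$ attain the maximum in \Cref{alg:getparameter-aa}. If $H^*=\{i^*\}$ is a singleton buyer, then $\bar e_{i^*}=\bar c_{i^*}=\Delta'$, giving (3b) and (4b). Otherwise $\Delta_{H^*}=\Delta'$ comes from a $t=0$ run of \textsc{SpecialPriceAA} that ended in case (2) of \Cref{lem:specialprice-output}, provided $\Delta'>0$, with some $J$ satisfying $s=-\Delta'/(2n^2)$. I would do the following.
\begin{enumerate}
\item By the trajectory-coincidence argument, the $t=\Delta'$ run for $H^*$ is the same run, so $p^{H^*}$ equals those prices.
\item Show that $p'_{v(H^*)}=p^{H^*}_{v(H^*)}$ via \Cref{lem:domination}. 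If another run $K$ gave a strictly larger price at $v(H^*)$, then $H^*$ would be active in $K$'s run with multiplier $\mu(K,H^*)$. Using \Cref{lem:active-multiplier} and \Cref{lem:cycle-utility}, this would force $K$'s run to push $H^*$'s surplus below $\Delta'$, and would make $\Delta_K>\Delta'$, contradicting maximality.
\item With the witness for $H^*$ equal to $H^*$ itself, item (2) gives $s(p',r',H^*)=\Delta'$.
\item For $J$, prices in $p'$ only exceed those of the $H^*$-run, so $s(p',r',J)\le -\Delta'/(2n^2)$. Combined with item (1), this gives (4a).
\end{enumerate}

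The domination step in (3) is the main obstacle, and I would attack it first through multiplier comparisons along max-utility paths.
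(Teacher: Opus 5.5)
Your argument for item~(2) is correct, and it is sturdier than the paper's. The trajectories of the target-$0$ and target-$\Delta'$ runs coincide until the surplus reaches $\Delta'$, and $\Delta_H\le\Delta'$; that is the real reason. The paper's chain $s(\hat p,\hat r,J)>-s(\hat p,\hat r,H)/(2n^2)>-\Delta'/(2n^2)$ under $s(\hat p,\hat r,H)>\Delta'$ has its second inequality reversed. Your direct argument for item~(4) is also fine modulo item~(1): apply $p'\ge p^{H^*}$, $r'\ge r^{H^*}$ to the component blocking the $H^*$-run. It works for any maximizer.

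\textbf{Item~(3).} You fix an arbitrary maximizer $H^*$ and try to refute a run $K$ with $p^K_{v(H^*)}>p^{H^*}_{v(H^*)}$ by deriving $\Delta_K>\Delta'$. That implication does not hold. A component $J$ that no run activates still stops every run: its surplus stays at $s(p,r,J)<0$ while the barrier $-s(\cdot,\mathrm{root})/(2n^2)$ rises to meet it. So two components can both have $\Delta_K=\Delta_{H^*}=2n^2|s(p,r,J)|=\Delta'$. If the $K$-run activates $H^*$ and raises it further than the $H^*$-run does, then $s(p',r',H^*)<\Delta'$, and nothing conflicts with maximality. In that case (3a) holds for $K$, not for your $H^*$.

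The multiplier comparison only gives a contradiction when the blocking $J$ lies in the active set of the $H^*$-run. Then $\mu(H^*,J)=p^{H^*}_{v(J)}/p^{H^*}_{v(H^*)}$ transfers the overshoot to $J$ and violates the $K$-run's barrier.

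You also only rule out larger \emph{prices} on $H^*$. A run can leave those prices unchanged but commit more refund at the critical buyer of $H^*$, which lowers $s(p',r',H^*)$ just as much. Your step~3 tacitly assumes $r'=r^{H^*}$ on $H^*$.

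This is why the paper does not choose $H^*$ in advance. It takes the whole set $D$ of maximizers and the witness map $H\mapsto W(H)$ of \Cref{prop:component-witness}, which matches prices and refunds at once. It shows that if no element of $D$ is its own witness, then $W$ is a fixed-point-free self-map of $D$. The resulting cycle is refuted by \Cref{lem:cycle-utility} when some step is a strict price domination. When every step is a pure refund domination, it is refuted by the one-critical-buyer perturbation assumption.

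\textbf{Item~(1) and singletons.} Item~(1) is left open exactly where it is hard. Take a component $J$ with $s(p,r,J)<0$ whose prices no run raises. Your suggested mechanism gives nothing here: $J$'s own target-$0$ run never enters its loop, so $\Delta_J=s(p,r,J)<0$, and nothing forces another run to activate $J$. You need a separate bound $|s(p,r,J)|\le\Delta'/n^2$ for such $J$. The only available handle is that $J$'s barrier also constrains the runs rooted at other components. You cannot borrow this from the paper either. Its claim $s(p^H,r^H,H)\ge 0$ when $H$ is its own witness holds, by item~(2), only if $s(p,r,H)\ge 0$.

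Finally, for singleton buyers you write $\bar e_{i^*}=\bar c_{i^*}(x)$. But \textsc{GetParameterAA} only yields $\bar c_{i^*}(x)=\Delta'$, and a singleton of $\mathcal A(\Delta)$ can still carry positive non-abundant spending. This identification is unjustified, and you should not rely on it; the paper makes the same one in items~(2), (3b) and (4b).
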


\begin{proof}
For part~(1), we first note that by the perturbation assumption from
\Cref{subsec:generic-perturbation}, every connected component of every
equality graph contains at most one critical buyer. In particular, for each
fixed component $H\in\mathcal C$, at most one buyer of $H$ can receive new
refund during the runs used in \Cref{alg:getprices-aa}.

\medskip
\noindent\emph{Part~(1), singleton case.}
If $H=\{i\}$ for some buyer $i\in B$, then $H\cap G=\emptyset$, and hence
\[
s(p',r',H)=\bar e_i\ge 0>-\Delta'/n^2.
\]

\medskip
\noindent\emph{Part~(1), non-singleton case.}
Fix a non-singleton component $H\in\mathcal C$.

If no buyer in $H$ receives any additional refund during
\Cref{alg:getprices-aa}, then
\[
r_\ell'=r_\ell \qquad\text{for every }\ell\in H\cap B.
\]
Choose a component $K\in\mathcal C$ such that
\[
p_{v(H)}^K=p_{v(H)}'.
\]
Since the goods of $H\cap G$ are scaled uniformly within every single run,
this implies
\[
p_j^K=p_j' \qquad\text{for every }j\in H\cap G.
\]
Hence
\[
s(p',r',H)=s(p^K,r^K,H).
\]
If $K=H$, then
\[
s(p',r',H)=s(p^H,r^H,H)\ge 0,
\]
so
\[
s(p',r',H)\ge 0>-\Delta'/n^2.
\]
If $K\neq H$, then $H$ is a non-root component during the run rooted at $K$,
and by \Cref{lem:specialprice-output},
\[
s(p^K,r^K,H)\ge -\Delta_K/(2n^2)\ge -\Delta'/(2n^2)>-\Delta'/n^2.
\]
Thus part~(1) holds in this case.

Now suppose that some buyer of $H$ receives additional refund during
\Cref{alg:getprices-aa}. By the perturbation assumption, there is then a
unique such buyer; denote it by $i_H$.
Applying \Cref{prop:component-witness} to the pair $(i_H,v(H))$, we obtain a
component $K\in\mathcal C$ such that
\[
r_{i_H}^K=r_{i_H}'
\qquad\text{and}\qquad
p_{v(H)}^K=p_{v(H)}'.
\]
Again, since the goods of $H\cap G$ are scaled uniformly within every run,
\[
p_j^K=p_j' \qquad\text{for every }j\in H\cap G.
\]
Moreover, no buyer of $H$ other than $i_H$ can receive new refund, so
\[
r_\ell^K=r_\ell'=r_\ell
\qquad\text{for every }\ell\in H\cap B\setminus\{i_H\}.
\]
Therefore
\[
s(p',r',H)=s(p^K,r^K,H).
\]
If $K=H$, then
\[
s(p',r',H)=s(p^H,r^H,H)\ge 0.
\]
If $K\neq H$, then by \Cref{lem:specialprice-output},
\[
s(p^K,r^K,H)\ge -\Delta_K/(2n^2)\ge -\Delta'/(2n^2)>-\Delta'/n^2.
\]
This completes the proof of part~(1).

\medskip
\noindent\emph{Part~(2).}
If $H=\{i\}$ is a singleton buyer component, then by
\Cref{alg:getparameter-aa,alg:getprices-aa},
\[
\Delta_H=\bar e_i=s(p,r,H)
\qquad\text{and}\qquad
(p^H,r^H)=(p,r).
\]
Hence
\[
s(p^H,r^H,H)=\bar e_i=\min\{\bar e_i,\Delta'\}
           =\min\{s(p,r,H),\Delta'\}.
\]

If $H=\{j\}$ is a singleton good component, then
\textsc{SpecialPriceAA}$(p,x,r,H,0)$ returns immediately, so
\[
\Delta_H=s(p,r,H)=-p_j
\qquad\text{and}\qquad
(p^H,r^H)=(p,r).
\]
Hence
\[
s(p^H,r^H,H)=s(p,r,H)=\min\{s(p,r,H),\Delta'\}.
\]

Now let $H$ be non-singleton.
If $s(p,r,H)\le \Delta'$, then
\Cref{alg:getprices-aa} leaves $(p^H,r^H)=(p,r)$, so
\[
s(p^H,r^H,H)=s(p,r,H)=\min\{s(p,r,H),\Delta'\}.
\]

Assume instead that $s(p,r,H)>\Delta'$. Then
\textsc{SpecialPriceAA}$(p,x,r,H,\Delta')$ is invoked.
While
\[
s(\hat p,\hat r,H)>\Delta',
\]
the second while-inequality of \Cref{alg:special-price-aa} gives, for every
$J\in\mathcal C$,
\[
s(\hat p,\hat r,J)>
-\frac{s(\hat p,\hat r,H)}{2n^2}
>
-\frac{\Delta'}{2n^2}.
\]
Hence the lower-barrier event cannot become tight before the root surplus has
reached $\Delta'$. Therefore event~(2) must occur first, and so
\[
s(p^H,r^H,H)=\Delta'=\min\{s(p,r,H),\Delta'\}.
\]
This completes part~(2).

\medskip
\noindent\emph{Part~(3).}
Let
\[
D:=\{H\in\mathcal C:\ \Delta_H=\Delta'\}.
\]
The set $D$ is nonempty by definition of $\Delta'$.

If some $H\in D$ is a singleton buyer component, say $H=\{i\}$, then
\[
\bar e_i=\Delta_H=\Delta',
\]
and alternative~(b) holds.

Thus it remains to consider the case that every component of $D$ is
non-singleton. By part~(2), every $H\in D$ satisfies
\[
s(p^H,r^H,H)=\Delta'.
\]

We claim that there exists a component $H^*\in D$ such that
\[
p'_{v(H^*)}=p^{H^*}_{v(H^*)}
\qquad\text{and}\qquad
s(p',r',H^*)=\Delta'.
\]

Suppose for contradiction that no such component exists.

For each $H\in D$, let $W(H)\in\mathcal C$ be the witness given by
\Cref{prop:component-witness}; thus
\[
p_j^{W(H)}=p'_j \quad \forall j\in H\cap G,
\qquad
r_i^{W(H)}=r'_i \quad \forall i\in H\cap B.
\]
If $W(H)=H$ for some $H\in D$, then
\[
p'_j=p_j^H \quad \forall j\in H\cap G,
\qquad
r'_i=r_i^H \quad \forall i\in H\cap B,
\]
and therefore
\[
s(p',r',H)=s(p^H,r^H,H)=\Delta',
\]
contrary to our assumption. Hence
\[
W(H)\neq H
\qquad\forall H\in D.
\]

Moreover, for every $H\in D$, the run rooted at $W(H)$ must actually change at
least one coordinate on $H$, because otherwise $(p^{W(H)},r^{W(H)})=(p,r)$ on
$H$ and the witness would coincide with the $H$-run. Therefore
\[
s(p,r,W(H))>\Delta',
\]
so \Cref{alg:getprices-aa} invokes \textsc{SpecialPriceAA} on $W(H)$, and by
part~(2),
\[
s(p^{W(H)},r^{W(H)},W(H))=\Delta'.
\]
Thus
\[
W(H)\in D
\qquad\forall H\in D.
\]

Since $D$ is finite and $W$ maps $D$ into itself without fixed points, there
exists a directed cycle
\[
K_1,K_2,\dots,K_t,K_{t+1}=K_1
\]
in $D$ such that
\[
W(K_c)=K_{c+1}
\qquad\text{for } c=1,\dots,t.
\]

Fix $c\in\{1,\dots,t\}$. Since $K_{c+1}$ is the witness for $K_c$, we have
\[
p_j^{K_{c+1}}=p'_j\ge p_j^{K_c}
\qquad\forall j\in K_c\cap G,
\]
and
\[
r_i^{K_{c+1}}=r'_i\ge r_i^{K_c}
\qquad\forall i\in K_c\cap B,
\]
with at least one strict inequality on the component $K_c$, because
$W(K_c)\neq K_c$.

Since all goods of a fixed component are scaled uniformly in any run, exactly
one of the following two cases holds for each $c$:
\begin{enumerate}
    \item[(a)]
    \[
    p_{v(K_c)}^{K_{c+1}} > p_{v(K_c)}^{K_c};
    \]
    \item[(b)]
    \[
    p_{v(K_c)}^{K_{c+1}} = p_{v(K_c)}^{K_c},
    \]
    and then some refund must be strictly larger on $K_c$.
\end{enumerate}

By the perturbation assumption from \Cref{subsec:generic-perturbation}, each
component contains at most one critical buyer. Hence in case~(b) there is a
unique buyer $i_c\in K_c\cap B$ such that
\[
r_{i_c}^{K_{c+1}} > r_{i_c}^{K_c}.
\]

We now distinguish two possibilities.

\smallskip
\noindent\textbf{Case 1:} Case~\textup{(a)} holds for at least one index $c$.

For every $c$, either case~(a) holds, or case~(b) holds. In both situations,
the component $K_c$ is active in the run rooted at $K_{c+1}$: in case~(a) this
follows from \Cref{lem:domination}; in case~(b), the buyer $i_c$ receives
additional refund in the $K_{c+1}$-run, so $i_c$ becomes critical there, hence
$K_c\subseteq \operatorname{ActiveSet}(p^{K_{c+1}},\Delta,K_{c+1})$.

Therefore \Cref{lem:domination} gives
\[
\mu(K_{c+1},K_c)
=
\frac{p_{v(K_c)}^{K_{c+1}}}{p_{v(K_{c+1})}^{K_{c+1}}}
\qquad\text{for every }c.
\]
Concatenating the corresponding maximum-utility paths, we obtain a closed walk
$W^*$ in the auxiliary network $N^*$ with
\[
U(W^*)=
\prod_{c=1}^t \mu(K_{c+1},K_c)
=
\prod_{c=1}^t
\frac{p_{v(K_c)}^{K_{c+1}}}{p_{v(K_{c+1})}^{K_{c+1}}}.
\]
Since
\[
p_{v(K_c)}^{K_{c+1}}\ge p_{v(K_c)}^{K_c}
\qquad\text{for every }c,
\]
and the inequality is strict for at least one index by assumption of Case~1,
we get
\[
U(W^*)>
\prod_{c=1}^t
\frac{p_{v(K_c)}^{K_c}}{p_{v(K_{c+1})}^{K_{c+1}}}
=
\frac{\prod_{c=1}^t p_{v(K_c)}^{K_c}}
     {\prod_{c=1}^t p_{v(K_{c+1})}^{K_{c+1}}}
=1,
\]
where the last equality is a cyclic re-indexing. This contradicts
\Cref{lem:cycle-utility}.

\smallskip
\noindent\textbf{Case 2:} Case~\textup{(b)} holds for every index $c$.

Thus, for every $c$,
\[
p_{v(K_c)}^{K_{c+1}} = p_{v(K_c)}^{K_c}
\qquad\text{and}\qquad
r_{i_c}^{K_{c+1}} > r_{i_c}^{K_c},
\]
where $i_c$ is the unique buyer of $K_c$ receiving new refund.

We claim that, in the run rooted at $K_t$, all components
\[
K_1,K_2,\dots,K_{t-1}
\]
are active.

Indeed, $K_{t-1}$ is active in the $K_t$-run because the buyer $i_{t-1}$
receives additional refund there. Now assume inductively that $K_c$ is active
in the $K_t$-run for some $2\le c\le t-1$. Since
\[
p_{v(K_{c-1})}^{K_c}=p_{v(K_{c-1})}^{K_{c-1}}
\qquad\text{and}\qquad
r_{i_{c-1}}^{K_c}>r_{i_{c-1}}^{K_{c-1}},
\]
the buyer $i_{c-1}$ becomes critical during the $K_c$-run at the same terminal
scale on the component $K_{c-1}$ as in the run rooted at $K_{c-1}$. Because
the $K_t$-run already activates $K_c$ and reaches on $K_c$ the same terminal
scale as the $K_c$-run, it reproduces this activation of $K_{c-1}$ as well.
Hence $K_{c-1}$ is active in the $K_t$-run. By backward induction, all
components $K_1,\dots,K_{t-1}$ are active in the run rooted at $K_t$.

Since every abundant edge remains an equality edge throughout
\textsc{SpecialPriceAA}, the whole active set in the $K_t$-run lies inside one
connected component of the equality graph at the terminal price vector
$p^{K_t}$. But in that same run, each buyer
\[
i_1,i_2,\dots,i_t
\]
becomes critical, because each of them receives additional refund along the
cycle. Thus one connected component of the equality graph contains at least $t$
distinct critical buyers, contradicting the perturbation assumption from
\Cref{subsec:generic-perturbation}, which states that every connected
component of every equality graph contains at most one critical buyer.

Both cases lead to a contradiction. Therefore our original assumption was
false, and there exists a non-singleton component $H^*\in D$ such that
\[
p'_{v(H^*)}=p^{H^*}_{v(H^*)}
\qquad\text{and}\qquad
s(p',r',H^*)=\Delta'.
\]
This proves part~(3).

\medskip
\noindent\emph{Part~(4).}
If part~(3)(b) holds, then alternative~(b) is immediate.

Thus assume part~(3)(a) holds, and let $H^*\in\mathcal C$ be the
non-singleton component given there, so that
\[
p'_{v(H^*)}=p^{H^*}_{v(H^*)}
\qquad\text{and}\qquad
s(p',r',H^*)=\Delta'.
\]
Since $H^*\in D$, part~(2) also gives
\[
s(p^{H^*},r^{H^*},H^*)=\Delta'.
\]

Suppose for contradiction that every component $J\in\mathcal C$ satisfies
\[
s(p',r',J)>-\Delta'/(2n^2).
\]
Because $p'\ge p^{H^*}$ coordinatewise and $r'\ge r^{H^*}$ coordinatewise, we
have
\[
s(p^{H^*},r^{H^*},J)\ge s(p',r',J)>-\Delta'/(2n^2)
\qquad\text{for every }J\in\mathcal C.
\]
Therefore, at the stopping state $(p^{H^*},r^{H^*})$ of the target-$\Delta'$
run rooted at $H^*$, the root component still has surplus
\[
s(p^{H^*},r^{H^*},H^*)=\Delta'>0,
\]
and every other component satisfies
\[
s(p^{H^*},r^{H^*},J)>-\frac{\Delta'}{2n^2}
=
-\frac{s(p^{H^*},r^{H^*},H^*)}{2n^2}.
\]
Hence the target-$0$ run
\[
\textsc{SpecialPriceAA}(p,x,r,H^*,0)
\]
does not stop at the state $(p^{H^*},r^{H^*})$ and must continue strictly past
it.

During \textsc{SpecialPriceAA}, the surplus of the root component is
nonincreasing, and whenever the procedure continues past a state with positive
root surplus, the subsequent price increase on the active goods of $H^*$ makes
that surplus strictly smaller. Consequently, the target-$0$ run rooted at
$H^*$ ends with
\[
\Delta_{H^*}<\Delta'.
\]
But $H^*\in D$, so by definition of $D$ we have
\[
\Delta_{H^*}=\Delta',
\]
a contradiction.

Therefore there exists a component $J\in\mathcal C$ such that
\[
-\Delta'/n^2\le s(p',r',J)\le -\Delta'/(2n^2).
\]
This proves part~(4)(a).
\end{proof}
\subsection{Allocation reconstruction}
\label{subsec:getallocations}

Having produced $\Delta'$ and $(p',r')$, we now rebuild an allocation whose support lies entirely in the abundant graph $\mathcal A(\Delta)$.
For each $\Delta$-component $H$ with $|H|\ge 2$ we fix a \emph{buyer root} $i(H)\in H\cap B$ and a \emph{good root} $j(H)\in H\cap G$ (one of each; at least one of the two exists and if only one side is present we reuse the same node for both).
If $H=\{k\}$ is a singleton, we set $i(H)=j(H)=k$.

\begin{algorithm}[H]
\caption{\textsc{GetAllocationsAA}$(p',r',\Delta')$}
\label{alg:getallocations-aa}
\begin{algorithmic}[1]
\State Initialize $x'_{ij}\gets 0$ for every $(i,j)\in B\times G$
\For{each non-singleton component $H\in\mathcal C$}
    \State Select a buyer root $i(H)\in H\cap B$ and a good root $j(H)\in H\cap G$
    \For{each buyer $i\in H\cap B$}
        \If{$i=i(H)$}
            \State $\bar c_i(x') \gets \max\{0,\, s(p',r',H)\}$
        \Else
            \State $\bar c_i(x') \gets 0$
        \EndIf
    \EndFor
    \For{each good $j\in H\cap G$}
        \If{$j=j(H)$}
            \State $b_j(p',x') \gets \min\{0,\, s(p',r',H)\}$
        \Else
            \State $b_j(p',x') \gets 0$
        \EndIf
    \EndFor
    \State Compute the unique tree flow $x'$ on $A(\Delta)\cap (H\times H)$
    satisfying the above buyer-cash and good-backorder equations
\EndFor
\State \Return $x'$
\end{algorithmic}
\end{algorithm}
The supplies and demands are balanced within each non-singleton component $H$: summing,
\[
\sum_{i\in H\cap B}s_i-\sum_{j\in H\cap G}d_j=\bar e'_{H\cap B}-\max\{0,\tau_H\}-p'_{H\cap G}-\min\{0,\tau_H\}=\tau_H-\max\{0,\tau_H\}-\min\{0,\tau_H\}=0,
\]
so the tree flow $x'$ is well defined and unique.

\begin{lemma}[Tree-flow stability]
\label{lem:tree-flow-stability}
Let $T$ be a tree on bipartition $X\cup Y$.
Let $a,a'\in\mathbb R_{\ge 0}^{X}$ and $d,d'\in\mathbb R_{\ge 0}^{Y}$ satisfy
\[
\sum_{x\in X} a_x=\sum_{y\in Y} d_y,
\qquad
\sum_{x\in X} a_x'=\sum_{y\in Y} d_y'.
\]
Let $f$ and $f'$ be the unique flows on $T$ with buyer-side supplies
$a,a'$ and good-side demands $d,d'$, respectively.
Then for every edge $e\in T$,
\[
|f_e-f_e'|
\le
\frac12\left(
\sum_{x\in X}|a_x-a_x'|+\sum_{y\in Y}|d_y-d_y'|
\right).
\]
\end{lemma}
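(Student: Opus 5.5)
The proof uses the cut characterization of tree flows. Fix an edge $e=\{x_0,y_0\}\in T$ with $x_0\in X$ and $y_0\in Y$. Removing $e$ splits $T$ into two subtrees; let $S$ be the vertex set of the subtree containing $x_0$. Every other edge of $T$ has both endpoints on one side of $e$. Conservation at each vertex of $S$ then gives the flow on $e$ as the net supply of $S$:
\[
f_e=\sum_{x\in X\cap S}a_x-\sum_{y\in Y\cap S}d_y,
\qquad
f'_e=\sum_{x\in X\cap S}a'_x-\sum_{y\in Y\cap S}d'_y .
\]
This holds because each edge inside $S$ contributes $+f$ at its buyer end and $-f$ at its good end, so these contributions cancel. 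I would record this identity first; it is also what makes the flow unique.

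Next, write $\delta_v$ for the signed perturbation: $\delta_x:=a_x-a'_x$ for $x\in X$ and $\delta_y:=-(d_y-d'_y)$ for $y\in Y$. Then
\[
f_e-f'_e=\sum_{v\in S}\delta_v .
\]
The balance hypotheses $\sum a=\sum d$ and $\sum a'=\sum d'$ give $\sum_{v\in X\cup Y}\delta_v=0$. Hence
\[
f_e-f'_e=-\sum_{v\notin S}\delta_v .
\]

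Averaging the two expressions gives
\[
|f_e-f'_e|=\tfrac12\Bigl(\bigl|\sum_{v\in S}\delta_v\bigr|+\bigl|\sum_{v\notin S}\delta_v\bigr|\Bigr)\le\tfrac12\sum_{v}|\delta_v| .
\]
The right-hand side is exactly the claimed bound $\tfrac12\bigl(\sum_x|a_x-a'_x|+\sum_y|d_y-d'_y|\bigr)$.

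The only delicate point is the cut identity. It must be stated with a consistent orientation: flow on an edge goes from its buyer endpoint to its good endpoint, and may be negative in this unconstrained tree-flow setting. With that convention the identity is a routine induction on leaves of $S$, or simply the summation of the conservation equations over $S$. Nonnegativity of $a$ and $d$ is not needed; the factor $\tfrac12$ comes entirely from the global balance, which lets us use both sides of the cut. No earlier result of the paper is required.
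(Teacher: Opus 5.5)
Your proof is correct and follows essentially the same route as the paper: both write $f_e$ as the net supply of one side of the cut $T-e$, then use global balance to bound $|f_e-f'_e|$ by the variation on either side, hence by half the total variation. The only difference is that the paper takes the smaller of the two side-variations while you average the two equal expressions, which is equivalent and, with your explicit sign convention, slightly cleaner.
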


\begin{proof}
Delete the edge $e$ from $T$, and let $C$ be one of the two connected components
of $T-e$.
Because $T$ is a tree, the value on $e$ is uniquely determined by flow
conservation:
\[
f_e=\sigma_e\!\left(\sum_{x\in C\cap X} a_x-\sum_{y\in C\cap Y} d_y\right),
\]
where $\sigma_e\in\{\pm 1\}$ depends only on the orientation chosen for $e$.
The same formula holds for $f_e'$ with $(a',d')$.
Subtracting gives
\[
|f_e-f_e'|
\le
\sum_{x\in C\cap X}|a_x-a_x'|+\sum_{y\in C\cap Y}|d_y-d_y'|.
\]
Applying the same argument to the complementary component and using the fact that
the total imbalance over all nodes is zero for both pairs $(a,d)$ and $(a',d')$,
we obtain the identical bound on the complement.
Therefore $|f_e-f_e'|$ is bounded by the smaller of the two side-variations, which
is at most half of the full variation.
\end{proof}

\begin{lemma}[Output of \textsc{GetAllocationsAA}]
\label{lem:getallocations-output}
Assume $\Delta'\le \Delta/n^2$ and that the restart invariants of
\Cref{lem:restart-invariants} hold. Then the allocation $x'$ produced by
\Cref{alg:getallocations-aa} satisfies:
\begin{enumerate}
    \item $\bar c_i(x')\ge 0$ for every $i\in B$;
    \item $-\Delta'/n^2 \le b_j(p',x')\le 0$ for every $j\in G$;
    \item if $x'_{ij}>0$, then $(i,j)\in A(\Delta)\subseteq E(p')$;
    \item every previously $\Delta$-abundant edge $(i,j)$ satisfies
    \[
    x'_{ij}>3n\Delta'.
    \]
\end{enumerate}
\end{lemma}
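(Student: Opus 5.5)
We verify the four items directly from the construction in \Cref{alg:getallocations-aa}, component by component. Write $\tau_H:=s(p',r',H)$ for each $H\in\mathcal C$.

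Items (1) and (2) are read off from the prescribed node imbalances. Inside a non-singleton $H$, every buyer is assigned effective cash $0$, except the buyer root, which gets $\max\{0,\tau_H\}\ge 0$. Every good is assigned backorder $0$, except the good root, which gets $\min\{0,\tau_H\}\le 0$. By part~(1) of \Cref{lem:restart-invariants}, $\tau_H\ge -\Delta'/n^2$, so the good root's backorder lies in $[-\Delta'/n^2,0]$. Singletons receive no spending. A singleton buyer then has $\bar c_i(x')=\bar e'_i\ge 0$, since committed refunds never exceed the budget. A singleton good has backorder $-p'_j$, and there we need $p'_j\le \Delta'/n^2$. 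This follows from part~(1) of \Cref{lem:restart-invariants} applied to $H=\{j\}$, since $s(p',r',\{j\})=-p'_j$.

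Item (3) is structural. The tree flow is supported on $\mathcal A(\Delta)\cap(H\times H)$, and this set is a forest because abundant edges carry positive spending and hence lie in the cycle-free graph $E(p)$. To get $\mathcal A(\Delta)\subseteq E(p')$, fix an abundant edge $(i,j)$. Every run of \textsc{SpecialPriceAA} preserves it as an equality edge by \Cref{lem:abundant-stays-equality}. Now pass to the coordinatewise maximum $p'$ using \Cref{prop:component-witness}: all goods of the component containing $(i,j)$ take their prices from a single witness run $W$. Hence $U_{ij}/p'_j=U_{ij}/p^W_j=\alpha_i(p^W)$. Since $p'\ge p^W$, we get $\alpha_i(p')\le\alpha_i(p^W)$, while also $\alpha_i(p')\ge U_{ij}/p'_j$. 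So $(i,j)\in E(p')$. It remains to check that $x'_{ij}\ge 0$, which we get from item~(4): every tree edge is abundant, so $x'_{ij}>3n\Delta'>0$.

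Item (4) is the main obstacle, and we handle it by comparison with the old flow using \Cref{lem:tree-flow-stability}. Restricted to the tree $T_H=\mathcal A(\Delta)\cap(H\times H)$, the old allocation $x$ is not a closed flow, because spending also leaves along non-abundant edges. So we define reference supplies and demands as follows. For $i\in H\cap B$, let $a_i$ be $i$'s total spending on abundant edges inside $H$. For $j\in H\cap G$, let $d_j$ be the total spending on $j$ along abundant edges inside $H$. Then $x|_{T_H}$ is exactly the tree flow for $(a,d)$.

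The new flow $x'$ is the tree flow for $a'_i=\bar e'_i-\bar c_i(x')$ and $d'_j=p'_j+b_j(p',x')$. We then bound $\sum|a_i-a'_i|+\sum|d_j-d'_j|$ term by term. The relevant contributions are:
\begin{itemize}
\item non-abundant spending, at most $3n\Delta$ per edge and at most $n$ such edges, giving $O(n^2\Delta)$;
\item old effective cash, below $\Delta$ per buyer by $\Delta$-optimality;
\item old backorders, in $[0,\Delta]$;
\item the price increase $p'_{H\cap G}-p_{H\cap G}$ and the refund increase $r'-r$ on $H$;
\item the new terms $|\tau_H|$ and $\Delta'/n^2$.
\end{itemize}

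The price and refund increases need the most care. A first attempt is to write them as $s(p,r,H)-s(p',r',H)$. Since $(p,x,r)$ is non-fertile, $s(p,r,H)>-\Delta/(3n^2)$. By \Cref{lem:restart-invariants}, $s(p',r',H)\ge-\Delta'/n^2$. This controls the increase only if $s(p,r,H)$ is itself $O(n\Delta)$, and that bound comes from the balance identity for $H$ at the $\Delta$-optimal state, where inflow minus outflow across non-abundant cut edges is $O(n^2\Delta)$.

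Combining these bounds, every edge changes by $O(n^2\Delta)$. This estimate is too weak against the threshold $3n\Delta$ unless the constants work out. If they do not, we would instead use the sharper per-component count: a tree cut has at most $n$ non-abundant edges, each below $3n\Delta$. Together with $\Delta'\le\Delta/n^2$, the aim is to show each abundant edge keeps more than $3n\Delta-(\text{change})\ge 3n\Delta'$. This constant-chasing is the step I expect to be delicate, and I would organize the computation around the cut $C$ of $T_H-e$ as in the proof of \Cref{lem:tree-flow-stability}, rather than using the global variation.
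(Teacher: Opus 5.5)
\textbf{Gap in item (4).} Items (1)--(3) are fine. Your argument for $\mathcal A(\Delta)\subseteq E(p')$ is cleaner than the paper's one-line justification, and it does not even need \Cref{prop:component-witness}: the run $W$ attaining the maximum at the single good $j$ already gives $U_{ij}/p'_j=\alpha_i(p^W)\ge\alpha_i(p')$. The gap is item (4), which is the real content of the lemma. You charge the non-abundant spending at the nodes of $H$ edge by edge: below $3n\Delta$ each, over up to $n$ edges. You also bound $s(p,r,H)$, and hence the price and refund increase on $H$, through the local balance identity of $H$. Both produce perturbations of order $n^2\Delta$, and these cannot be absorbed by the available margin $3n\Delta-3n\Delta'$. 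Your fallback (at most $n$ non-abundant cut edges, each below $3n\Delta$) is the same $3n^2\Delta$ estimate. Reorganizing the computation around a cut of $T_H-e$ does not help as long as the cross-component terms are bounded by the abundance threshold. So the proof does not close as proposed.

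\textbf{The missing idea.} Cross-component spending and component surpluses are controlled \emph{globally}. On one side, $\sum_{K\in\mathcal C}s(p,r,K)=\bar e_B-p_G=\bar c_B(x)+b_G(p,x)<n\Delta$. On the other, non-fertility gives $s(p,r,K)>-\Delta/(3n^2)$ for every $K$. Hence $s(p,r,H)<n\Delta+\Delta/(3n)$ for every $H$. Moreover, $\operatorname{supp}(x)\subseteq E(p)$ is a forest in which every component is spanned by its abundant edges. So each non-abundant edge carries the net imbalance of a union of whole components, and all cross-component terms are $O(n\Delta)$ rather than $O(n^2\Delta)$.

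\textbf{How the paper uses it.} The paper implements this with an intermediate flow $\hat x$: the tree flow on $\mathcal A(\Delta)\cap(H\times H)$ with the same root conventions as \Cref{alg:getallocations-aa}, but at the old $(p,r)$. Then $x-\hat x$ is a flow on the forest $\operatorname{supp}(x)$ whose total excess is less than $(n+1)\Delta$, by the identity above and non-fertility, so $|x_{ij}-\hat x_{ij}|<(n+1)\Delta$. Next, $\hat x$ is compared with $x'$ via \Cref{lem:tree-flow-stability}. There the change of node data on $H$ is governed by the surplus drop $s(p,r,H)-s(p',r',H)\ge 0$, and these drops telescope globally: $\sum_{H}\bigl(s(p,r,H)-s(p',r',H)\bigr)=(\bar e_B-p_G)-(\bar e'_B-p'_G)<(n+1)\Delta$ by invariant (1) of \Cref{lem:restart-invariants}. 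Each of the two comparisons therefore costs less than $(n+1)\Delta$. This gives $x'_{ij}>3n\Delta-2(n+1)\Delta\ge 3\Delta/n\ge 3n\Delta'$ for $n\ge 3$, using $\Delta'\le\Delta/n^2$.
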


\begin{proof}
Parts (1) and (2) are immediate from the construction of
\Cref{alg:getallocations-aa}. On each non-singleton component $H$, only the
buyer root $i(H)$ and the good root $j(H)$ absorb the component surplus
$s(p',r',H)$. Therefore every buyer has nonnegative effective cash, and every
good has backorder in the interval $[-\Delta'/n^2,0]$ by
\Cref{lem:restart-invariants}(1).

Part (3) is also immediate: \Cref{alg:getallocations-aa} routes flow only on
edges of $A(\Delta)$. Since every abundant edge remains an equality edge
throughout every execution of \textsc{SpecialPriceAA}, and $(p',r')$ is assembled from those
runs, it follows that
\[
A(\Delta)\subseteq E(p').
\]

We now prove part (4) .For each non-singleton component $H\in\mathcal C$, let $\hat x$ be the unique
tree flow on $A(\Delta)\cap(H\times H)$ such that
\begin{enumerate}
    \item $\hat x_{ij}=0$ for every $(i,j)\notin A(\Delta)$;
    \item $\bar c(\hat x)\ge 0$, and if $s(p,r,H)\ge 0$, then
    \[
    \bar c_H(\hat x)=s(p,r,H);
    \]
    \item $b(p,\hat x)\le 0$, and if $s(p,r,H)<0$, then
    \[
    b_H(p,\hat x)=s(p,r,H).
    \]
\end{enumerate}

We first claim that
\[
\bar c_B(\hat x)<(n+1)\Delta.
\]
Indeed,
\[
\bar c_B(\hat x)+b_G(p,\hat x)=\bar e_B-p_G
                           =\bar c_B(x)+b_G(p,x).
\]
Since $(p,x,r)$ is $\Delta$-optimal, we have $\bar c_i(x)<\Delta$ for every
buyer, and since $(p,x,r)$ is $\Delta$-feasible, every good with
$p_j>p_j^0$ has $b_j(p,x)\le \Delta$. Therefore
\[
\bar c_B(x)+b_G(p,x)\le n\Delta.
\]
On the other hand, because the state is not $\Delta$-fertile, every component
$H\in\mathcal C$ satisfies
\[
s(p,r,H)>-\Delta/(3n^2).
\]
Hence
\[
b_G(p,\hat x)
=
\sum_{H\in\mathcal C:\, s(p,r,H)<0} s(p,r,H)
>
-|\mathcal C|\frac{\Delta}{3n^2}
\ge -\Delta.
\]
Combining the two displays gives
\[
\bar c_B(\hat x)<(n+1)\Delta.
\]

Now consider the flow decomposition of
\[
y:=x-\hat x
\]
in the $\Delta$-residual network $N_\Delta(p)$. Exactly as in Orlin’s proof,
the total flow emanating from excess nodes is bounded by $\bar c_B(\hat x)$.
Therefore, for every abundant edge $(i,j)\in A(\Delta)$,
\[
|x_{ij}-\hat x_{ij}|<(n+1)\Delta.
\]

Next compare $\hat x$ with the new allocation $x'$ returned by
\Cref{alg:getallocations-aa}. On each component $H$, both $\hat x$ and $x'$
are tree flows on the same tree $A(\Delta)\cap(H\times H)$, and the only
difference between their node-balance data is at the two roots of $H$. Writing
\[
\tau_H:=s(p,r,H),
\qquad
\tau_H':=s(p',r',H),
\]
the total absolute variation of the buyer-root / good-root data on $H$ is
exactly
\[
|\tau_H-\tau_H'|.
\]
Hence, by \Cref{lem:tree-flow-stability}, every abundant edge
$(i,j)\in A(\Delta)\cap(H\times H)$ satisfies
\[
|\hat x_{ij}-x'_{ij}|
\le \frac12 |\tau_H-\tau_H'|.
\]

We now bound the global variation of the component surpluses. Since
$p'\ge p$ and $r'\ge r$, we have $\tau_H'\le \tau_H$ for every $H$, and thus
\[
\sum_{H\in\mathcal C} |\tau_H-\tau_H'|
=
\sum_{H\in\mathcal C} (\tau_H-\tau_H')
=
(\bar e_B-p_G)-(\bar e_B'-p_G').
\]
The first term equals
\[
\bar e_B-p_G=\bar c_B(x)+b_G(p,x)\le n\Delta,
\]
as above. For the second term, using
\Cref{lem:restart-invariants}(1) and $|\mathcal C|\le n$, we obtain
\[
\bar e_B'-p_G'
=
\sum_{H\in\mathcal C}s(p',r',H)
\ge
-|\mathcal C|\frac{\Delta'}{n^2}
\ge -\Delta.
\]
Therefore
\[
\sum_{H\in\mathcal C} |\tau_H-\tau_H'| < (n+1)\Delta.
\]
In particular, for every abundant edge,
\[
|\hat x_{ij}-x'_{ij}|<n\Delta.
\]

Combining the last two bounds gives, for every previously $\Delta$-abundant
edge $(i,j)$,
\[
|x_{ij}-x'_{ij}|
\le |x_{ij}-\hat x_{ij}|+|\hat x_{ij}-x'_{ij}|
< (n+1)\Delta+n\Delta=(2n+1)\Delta.
\]
Since $(i,j)$ was $\Delta$-abundant,
\[
x_{ij}\ge 3n\Delta,
\]
and therefore
\[
x'_{ij}>3n\Delta-(2n+1)\Delta=(n-1)\Delta.
\]
Because $\Delta'\le \Delta/n^2$, we have
\[
(n-1)\Delta \ge 3n\Delta'
\qquad (n\ge 3),
\]
and thus
\[
x'_{ij}>3n\Delta'.
\]
This proves part (4).
\end{proof}
\subsection{Proof of the restart lemma}
\label{subsec:restart-proof}

We now prove the restart lemma, \Cref{thm:restart}. Throughout this subsection,
$(p,x,r)$ denotes a $\Delta$-optimal compressed Arctic state that is not
$\Delta$-fertile. We write
\[
    \Delta' := \textsc{GetParameterAA}(p,x,r,\Delta),
    \qquad
    \mathcal C := \mathcal C(\Delta).
\]
The restart subroutine has two branches. If $\Delta'>\Delta/n^2$, then it does not
change the current state and only lowers the threshold. If $\Delta'\le \Delta/n^2$,
then it performs a compressed restart at scale $\Delta'$. In both cases we prove
that the algorithm makes progress within $O(\log n)$ ordinary scaling phases.

The threshold used by \textsc{MakeFertileAA} is
\[
    \mathrm{Threshold}\leftarrow \frac{\Delta}{n^5}.
\]
We use the following notion of progress. A progress event is either:
\begin{enumerate}
    \item the appearance of a new abundant edge; or
    \item a singleton buyer component $\{i\}$ reaching $\alpha_i(p)\le 1$.
\end{enumerate}
\begin{lemma}[Delayed-discovery progress bound]
\label{lem:delayed-discovery-progress}
Suppose that
\[
    \Delta'>\frac{\Delta}{n^2}.
\]
Then, after \textsc{MakeFertileAA} returns the delayed-discovery branch and sets
\[
    \mathrm{Threshold}\leftarrow \frac{\Delta}{n^5},
\]
the main algorithm discovers a new abundant edge within $O(\log n)$ ordinary scaling
phases. Moreover, \textsc{MakeFertileAA} is not re-invoked before this discovery.
\end{lemma}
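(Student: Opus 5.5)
The plan mirrors Orlin's delayed-discovery argument. First, since $\Delta'=\max_H\Delta_H>\Delta/n^2$, fix a component $H\in\mathcal C$ with $\Delta_H=\Delta'$. The second claim follows from the algorithm's guard: \textsc{MakeFertileAA} is only invoked when $\Delta\le\mathrm{Threshold}=\Delta/n^5$. Starting from the current $\Delta$, this takes at least $5\log n$ halvings, so no re-invocation happens during the next $5\log n$ ordinary phases. It therefore suffices to show that a new abundant edge appears within $O(\log n)$ ordinary phases, and we aim for roughly $5\log n+5$.

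\emph{Singleton buyer case.} Suppose $H=\{i\}$. Then $\alpha_i(p)>1$, since otherwise $\Delta_H=0$, and $\bar c_i(x)=\Delta'>\Delta/n^2$. I would repeat the singleton argument of \Cref{lem:fertile-gives-abundant} with the stronger lower bound $\Delta/n^2$ in place of $\Delta/(3n^2)$. If $\alpha_i$ stays above $1$, then buyer $i$ commits no refund and her effective budget stays at least $\Delta/n^2$. By $\Delta''$-optimality at scale $\Delta''\le\Delta/n^5$, she spends more than $3n^2\Delta''$, so some incident edge carries more than $3n\Delta''$ and becomes abundant. If instead $\alpha_i$ drops to $1$, I would argue the spending must already be routed: either the cash is committed, which is a progress event of type (B) and is counted separately, or the statement is read as ``progress event''. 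I would note that the paper's progress-event definition covers this case.

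\emph{Non-singleton case.} This is the main obstacle. The intended argument is that \textsc{SpecialPriceAA}$(p,x,r,H,0)$ stopped via the barrier event (3) with $s(\hat p,\hat r,H)=\Delta'>0$. So some component $J$ has $s(\hat p,\hat r,J)=-\Delta'/(2n^2)<-\Delta/(2n^4)$. The key step is to transfer this negative surplus to the real trajectory: the prices and refunds of the ordinary algorithm after $k$ phases should dominate $(\hat p,\hat r)$ on $J$, or else $H$ itself keeps surplus at least roughly $\Delta'$. I would prove this by a domination argument in the style of \Cref{lem:domination} and \Cref{lem:active-multiplier}. Within the component of the equality graph containing $H$, ordinary phases must raise the prices of the goods reachable from $H$ by at least the multiplier used in the special-price run, because otherwise $H$ retains surplus about $\Delta'$, which is incompatible with $\Delta''$-optimality once $\Delta''<\Delta'/n$. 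Refund commitments at critical buyers must also be matched, using the uniqueness of the critical buyer per component. Once some component $J$ has $s\le -\Delta/(2n^4)$ at a scale $\Delta''\le\Delta/(16n^5)$, the inflow/cut-edge counting from the non-singleton case of \Cref{lem:fertile-gives-abundant} produces an edge with more than $3n\Delta''$ spending. That edge leaves the forest $\mathcal A$, so it is new.

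The residual technical steps are the following. Check that $\Delta''$ after $5\log n+5$ halvings satisfies $\Delta/(2n^4)>3n^2\Delta''$. Check that effective budgets only decrease, so surplus bounds persist. Check that abundant edges persist by \Cref{lem:abundance-persists}, so the new edge is genuinely new.
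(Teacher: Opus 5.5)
Your non-singleton outline matches the paper's proof. The target-$0$ run on $H$ stops at the barrier with some $J$ satisfying $s=-\Delta'/(2n^2)<-\Delta/(2n^4)$. You then compare the ordinary price multiplier on $H$ with the one reached by that run, and use $\mu(H,J)$ and refund monotonicity when it is exceeded.

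\emph{Singleton-buyer case.} The first genuine gap is that you cannot close this case. If $\alpha_i$ falls to $1$, buyer $i$ may refund her cash and no edge is forced. Reinterpreting the lemma as promising only a ``progress event'' does not prove the stated claim of a new abundant edge. The missing observation is that this case is vacuous. \textsc{MakeFertileAA} is applied only to a non-fertile state, so:
a singleton $\{i\}$ with $\alpha_i(p)>1$ has $\Delta_H=\bar c_i(x)\le\Delta/(3n^2)<\Delta'$;
a singleton $\{i\}$ with $\alpha_i(p)\le 1$ has $\Delta_H=0$;
a singleton good has $\Delta_H=-p_j<0$.
Indeed, your own setup ($\alpha_i(p)>1$, $\bar c_i(x)=\Delta'>\Delta/n^2$) would make $\{i\}$ $\Delta$-fertile by condition~(1). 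So the maximizer $H$ is non-singleton, and only your second case is needed.

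\emph{The ``not re-invoked'' claim.} The second gap is that this claim does not follow from the guard alone. The guard protects only the phases until the scale first reaches $\Delta/n^5$, about $5\log_2 n$ halvings. Your discovery is promised later, at $5\log n+5$ phases. Moreover, your own check $\Delta/(2n^4)>3n^2\Delta''$ needs $\Delta''<\Delta/(6n^6)$, which $5\log n+5$ halvings do not give once $n\ge 6$. If \textsc{MakeFertileAA} fired in that window, the second claim would fail outright, and a compressed restart would replace the state your trajectory comparison relies on.

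The paper closes this by examining the end of the first phase with $\Delta''<\Delta/n^5$, where two alternatives arise.
Either the multiplier on $H$ has not exceeded the special-price one. Then $s(\hat p,\hat r,H)\ge\Delta'$, and the spending leaving $H$ is at least $\Delta'-2n\Delta''$ over at most $n-1$ forest cut edges. Some cut edge therefore carries more than $\Delta/n^3-2\Delta''>3n\Delta''$ (for $n\ge 4$), which is a new abundant edge.
Or the multiplier has exceeded it. Then $s(\hat p,\hat r,J)<-\Delta/(2n^4)<-\Delta''/(3n^2)$, so the state is fertile, \textsc{MakeFertileAA} is not called, and \Cref{lem:fertile-gives-abundant} finishes.

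This also corrects your step ``$H$ retains surplus about $\Delta'$, which is incompatible with $\Delta''$-optimality,'' which is false as stated. Optimality only bounds $\sum_{i\in H\cap B}\bar c_i+\sum_{j\in H\cap G}b_j\le 2n\Delta''$, and the surplus simply leaves $H$ through cut edges. That is the first alternative above: a success outcome, not a contradiction.
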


\begin{proof}
Because the state is not $\Delta$-fertile, no singleton buyer component can be
responsible for the value $\Delta'>\Delta/n^2$. Indeed, if $H=\{i\}$ and
$\alpha_i(p)\le 1$, then \Cref{alg:getparameter-aa} sets $\Delta_H=0$.
If $H=\{i\}$ and $\alpha_i(p)>1$, then non-fertility gives
\[
    \Delta_H=\bar c_i(x)\le \frac{\Delta}{3n^2}<\frac{\Delta}{n^2}.
\]
Hence there exists a non-singleton component $H\in\mathcal C$ such that
\[
    \Delta_H=\Delta'.
\]

Let
\[
    (p^H,r^H):=\textsc{SpecialPriceAA}(p,x,r,H,0).
\]
By the definition of $\Delta_H$,
\[
    s(p^H,r^H,H)=\Delta'>\frac{\Delta}{n^2}>0.
\]
Thus the target-zero run did not stop because the surplus of $H$ reached zero.
By \Cref{lem:specialprice-output}, it must have stopped because some component
$J\in\mathcal C$ reached the lower barrier:
\[
    s(p^H,r^H,J)
    =
    -\frac{s(p^H,r^H,H)}{2n^2}
    =
    -\frac{\Delta'}{2n^2}.
\]
In particular,
\[
    s(p^H,r^H,J)<-\frac{\Delta}{2n^4}.
\]

We now run the ordinary scaling algorithm from $(p,x,r)$. We show that before
\textsc{MakeFertileAA} can be invoked again, either a new abundant edge has already
appeared, or the current state has become fertile. Since the new threshold is
$\Delta/n^5$, it is enough to consider the first later scale parameter
$\Delta''$ satisfying
\[
    \Delta''<\frac{\Delta}{n^5}.
\]
Let $(\hat p,\hat x,\hat r)$ be the state at the end of the $\Delta''$-phase, before
the next halving repair. Thus $(\hat p,\hat x,\hat r)$ is $\Delta''$-optimal.

Assume, for contradiction, that no new abundant edge has appeared before or during
this $\Delta''$-phase, and that the state $(\hat p,\hat x,\hat r)$ is not
$\Delta''$-fertile.

Since no new abundant edge has appeared, the old abundant graph is still
$A(\Delta)$, and the old components are still the components in $\mathcal C(\Delta)$.
Moreover, every old abundant edge remains positive, and every positive-spending edge
is an equality edge. Therefore every old component moves as a single price block:
for every component $K\in\mathcal C$ and every two goods $j,j'\in K\cap G$,
\[
    \frac{\hat p_j}{p_j}
    =
    \frac{\hat p_{j'}}{p_{j'}}.
\]

Fix the root good $v(H)\in H\cap G$. Define the multiplier reached on $H$ by the
special-price run as
\[
    \lambda_H:=\frac{p^H_{v(H)}}{p_{v(H)}}.
\]
At the ordinary state, define
\[
    \hat\lambda_H:=\frac{\hat p_{v(H)}}{p_{v(H)}}.
\]

We distinguish two cases.

\medskip
\noindent
\emph{Case 1: $\hat\lambda_H\le \lambda_H$.}
In this case the ordinary trajectory has not raised the price block of $H$ beyond
the level reached by the special-price run. During
\textsc{SpecialPriceAA}$(p,x,r,H,0)$, the surplus of $H$ can decrease only through
price increases on goods in the active set of $H$ and through refund commitments at
buyers in the active set of $H$. The ordinary trajectory is monotone in prices and
committed refunds. Hence, before the multiplier on $H$ exceeds $\lambda_H$, the
surplus of $H$ cannot be smaller than the surplus reached by the special-price run:
\[
    s(\hat p,\hat r,H)\ge s(p^H,r^H,H)=\Delta'.
\]

We convert this positive surplus into cut flow. For a component $S\in\mathcal C$,
define
\[
    X^+_S
    :=
    \sum_{i\in S\cap B}\sum_{j\notin S\cap G}\hat x_{ij},
    \qquad
    X^-_S
    :=
    \sum_{i\notin S\cap B}\sum_{j\in S\cap G}\hat x_{ij}.
\]
Here $X^+_S$ is the spending leaving $S$, and $X^-_S$ is the spending entering $S$.
Using
\[
    \bar c_i(\hat x)=\bar e_i-\sum_{j\in G}\hat x_{ij},
    \qquad
    b_j(\hat p,\hat x)=-\hat p_j+\sum_{i\in B}\hat x_{ij},
\]
we obtain the component identity
\[
    X^+_S-X^-_S
    =
    s(\hat p,\hat r,S)
    -
    \sum_{i\in S\cap B}\bar c_i(\hat x)
    -
    \sum_{j\in S\cap G} b_j(\hat p,\hat x).
\]

Because $(\hat p,\hat x,\hat r)$ is $\Delta''$-optimal, we have
\[
    \sum_{i\in B}\bar c_i(\hat x)<n\Delta''.
\]
Also, by $\Delta''$-feasibility, each good has
\[
    0\le b_j(\hat p,\hat x)\le \Delta''.
\]
Therefore, for every component $S$,
\[
    \sum_{i\in S\cap B}\bar c_i(\hat x)
    +
    \sum_{j\in S\cap G} b_j(\hat p,\hat x)
    \le 2n\Delta''.
\]
Applying this to $H$, we get
\[
    X^+_H
    \ge
    s(\hat p,\hat r,H)-2n\Delta''
    \ge
    \Delta'-2n\Delta''
    >
    \frac{\Delta}{n^2}-2n\Delta''.
\]
All positive-spending edges lie in the equality graph, and the equality graph is
cycle-free by the perturbation assumption. Hence at most $n-1$ positive cut edges
leave $H$. Thus some cut edge leaving $H$ carries more than
\[
    \frac{1}{n}
    \left(
        \frac{\Delta}{n^2}-2n\Delta''
    \right)
    =
    \frac{\Delta}{n^3}-2\Delta''.
\]
Since $\Delta''<\Delta/n^5$, this quantity is larger than $3n\Delta''$ whenever
\[
    n^2>3n+2.
\]
This holds for all $n\ge 4$. The remaining cases $n<4$ are easy and can be absorbed
into the constant-size base case of the algorithm. Hence some edge crossing the cut
of $H$ carries more than $3n\Delta''$.

This edge crosses two distinct old $\Delta$-components, so it is not in $A(\Delta)$.
Therefore it is a new $\Delta''$-abundant edge, contradicting our assumption.

\medskip
\noindent
\emph{Case 2: $\hat\lambda_H>\lambda_H$.}
In this case the ordinary trajectory has raised the price block of $H$ beyond the
level reached by the special-price run. We show that the component $J$ that became
tight in the special-price run must have reached at least the same deficit in the
ordinary trajectory.

Since $J$ became tight during the run rooted at $H$, it belongs to the active set of
$H$ at the stopping state $(p^H,r^H)$. Therefore
\Cref{lem:domination} gives
\[
    \mu(H,J)
    =
    \frac{p^H_{v(J)}}{p^H_{v(H)}}.
\]
Now apply the same path-product upper bound from the proof of
\Cref{lem:domination} to the ordinary price vector $\hat p$. Since no new
abundant edge has appeared, the auxiliary network is still built from the same old
abundant graph $A(\Delta)$, and every old abundant edge remains an equality edge.
Thus
\[
    \mu(H,J)
    \le
    \frac{\hat p_{v(J)}}{\hat p_{v(H)}}.
\]
Combining the last two displays gives
\[
    \frac{\hat p_{v(J)}}{\hat p_{v(H)}}
    \ge
    \frac{p^H_{v(J)}}{p^H_{v(H)}}.
\]
Since $\hat p_{v(H)}>p^H_{v(H)}$, we obtain
\[
    \hat p_{v(J)}>p^H_{v(J)}.
\]
This is the domination phenomenon formalized in \Cref{lem:domination}: once the
price block of a component is driven beyond the level reached in the corresponding
special-price run, every component connected to it by the maximum-multiplier path is
reached as well. In particular, the ordinary trajectory has raised the price block of
$J$ at least as far as the special-price run did.

Refund commitments are monotone. Moreover, a refund commitment in
\textsc{SpecialPriceAA} occurs only when the corresponding buyer becomes critical.
Once a buyer becomes critical, she remains
weakly critical at all later larger price vectors. Hence the ordinary trajectory has
also made, or can safely make, all refund commitments inside $J$ that were made by
the special-price run. These commitments only decrease the surplus of $J$. Therefore
\[
    s(\hat p,\hat r,J)
    \le
    s(p^H,r^H,J)
    =
    -\frac{\Delta'}{2n^2}
    <
    -\frac{\Delta}{2n^4}.
\]
Since $\Delta''<\Delta/n^5$, we have
\[
    \frac{\Delta}{2n^4}
    >
    \frac{\Delta''}{3n^2}.
\]
Thus
\[
    s(\hat p,\hat r,J)
    <
    -\frac{\Delta''}{3n^2}.
\]
So $J$ is $\Delta''$-fertile by condition~\textup{(2)} of
\Cref{def:delta-fertile-strong}, contradicting our assumption that
$(\hat p,\hat x,\hat r)$ is not $\Delta''$-fertile.

\medskip
Both cases lead to a contradiction. Therefore, before \textsc{MakeFertileAA} can be
invoked again, either a new abundant edge has appeared, or the state has become
fertile.

If a new abundant edge has appeared, we are done. Otherwise, the state is fertile at
some scale below the threshold. Since \textsc{MakeFertileAA} is called only when the
state is non-fertile, it is not re-invoked. Applying
\Cref{lem:fertile-gives-abundant} to the fertile component gives a new abundant edge
within at most $5\log_2 n+5$ additional ordinary phases. The number of ordinary
halvings before the scale first falls below $\Delta/n^5$ is at most
$5\log_2 n+1$. Hence a new abundant edge appears within $O(\log n)$ ordinary phases,
and \textsc{MakeFertileAA} is not re-invoked before that happens.
\end{proof}

\begin{lemma}[Compressed-restart progress and feasibility]
\label{lem:compressed-restart-progress}
Suppose that
\[
    \Delta'\le \frac{\Delta}{n^2}.
\]
Let
\[
    (p',r'):=\textsc{GetPricesAA}(p,x,r,\Delta'),
    \qquad
    x':=\textsc{GetAllocationsAA}(p',r',\Delta')
\]
be the state produced by the compressed-restart branch of
\textsc{MakeFertileAA}. Then:
\begin{enumerate}
    \item Every previously $\Delta$-abundant edge remains $\Delta'$-abundant.
    \item The restarted state can be used as a relaxed $\Delta'$-feasible state, and
    the ordinary scaling procedure restores ordinary $\Delta'$-feasibility without
    destroying any previously abundant edge.
    \item After ordinary $\Delta'$-feasibility has been restored, a progress event
    occurs within $O(\log n)$ further ordinary scaling phases.
\end{enumerate}
\end{lemma}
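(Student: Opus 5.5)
Part~(1) follows directly from \Cref{lem:getallocations-output}(4). Since $\Delta'\le\Delta/n^2$ and the restart invariants of \Cref{lem:restart-invariants} hold for the outputs of \Cref{alg:getparameter-aa,alg:getprices-aa}, every previously $\Delta$-abundant edge satisfies $x'_{ij}>3n\Delta'$. Edges that are abundant at a scale larger than $\Delta$ are, by \Cref{lem:abundance-persists}, already $\Delta$-abundant at the start of the $\Delta$-phase. So nothing is lost.

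For part~(2) we compare $(p',x',r')$ with \Cref{def:delta-feasible-arctic}.
\begin{itemize}
\item Nonnegativity of effective cash is \Cref{lem:getallocations-output}(1).
\item The support condition is \Cref{lem:getallocations-output}(3), since $\mathcal A(\Delta)\subseteq E(p')$.
\item The refund commitments in $r'$ were made only at buyers that were critical at some price vector below $p'$. By \Cref{lem:alpha-monotone-strong} they remain safe in the sense of \Cref{lem:safe-refund-commitment}(3).
\end{itemize}
Two conditions fail, and only mildly: backorders lie in $[-\Delta'/n^2,0]$ rather than in $[0,\Delta']$, and $x'$ need not consist of multiples of $\Delta'$.

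We treat the state as \emph{relaxed} $\Delta'$-feasible and repair it as follows.
\begin{itemize}
\item Round each $x'_{ij}$ down to a multiple of $\Delta'$ along the forest $\mathcal A(\Delta)$. This moves at most $n\Delta'$ of money per edge and leaves abundant edges above $3n\Delta'-n\Delta'$.
\item Treat each good with negative backorder (total deficit at most $|G|\Delta'/n^2<\Delta'$) as a terminal of \textsc{PriceAndAugmentWithCriticalBuyer}, absorbing at most one $\Delta'$-augmentation per good.
\end{itemize}
The potential $\Phi_{\Delta'}$ after the rounding is $O(n)$ by the same accounting as in \Cref{prop:phase-potential-bound}. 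Each repair step and each ordinary step changes any coordinate by at most $\Delta'$. The repair therefore finishes within $O(1)$ phases' worth of augmentations, in particular within $O(\log n)$ ordinary phases. Over those phases the geometric-sum argument of \Cref{lem:future-change-bound} changes any abundant edge by $O(n\Delta')$, which keeps it $\Delta''$-abundant at the current scale $\Delta''$.

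Part~(3) is the main step. It uses \Cref{lem:restart-invariants}(3),(4).

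\emph{Case (3)(b) or (4)(b).} A singleton buyer $\{i^*\}$ has $\bar e_{i^*}=\Delta'$. If $\alpha_{i^*}(p)>1$, this buyer is $\Delta'$-fertile by condition~(1) of \Cref{def:delta-fertile-strong}, because $\bar c_{i^*}(x')=\Delta'>\Delta'/(3n^2)$. Then \Cref{lem:fertile-gives-abundant} yields a progress event of type~(A) or~(B) within $5\log n+5$ phases.

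\emph{Case (4)(a).} There is a component $J$ with $s(p',r',J)\le-\Delta'/(2n^2)<-\Delta'/(3n^2)$, so $J$ is $\Delta'$-fertile by condition~(2). Surpluses are nonincreasing along ordinary phases, since prices and committed refunds only grow. Hence the non-singleton argument of \Cref{lem:fertile-gives-abundant} applies verbatim after feasibility is restored. It forces an inflow of at least $3n^2\Delta''$ across the cut of $J$, and hence a new abundant edge, within $5\log n+5$ phases.

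The delicate point, and the one I expect to be the main obstacle, is ensuring that the fertility of $J$ is not spoiled during the repair of part~(2). The rounding and backorder fixes alter $x$ but not $s(\cdot,\cdot,J)$, which depends only on $p$ and $r$. So it suffices to check that the repair never lowers prices or refunds, which holds by construction.

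A second subtlety is that \Cref{lem:fertile-gives-abundant} is stated at the beginning of a phase. I would apply it at the first phase boundary after repair, using the monotone surplus bound to carry the deficit there.
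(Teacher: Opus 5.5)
\textbf{Gap in part (2).} Your parts (1) and (3) follow the paper's proof. Part (1) is \Cref{lem:getallocations-output}(4), and part (3) splits on \Cref{lem:restart-invariants}(4) and feeds a fertile component into \Cref{lem:fertile-gives-abundant}. The repair you propose in part (2), however, does not establish that no previously abundant edge is destroyed.

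You round $x'$ down to multiples of $\Delta'$ and then only record that old abundant edges stay above $3n\Delta'-n\Delta'=2n\Delta'$. That is below the $\Delta'$-abundance threshold $3n\Delta'$. So after your rounding these edges need not be $\Delta'$-abundant at the start of the $\Delta'$-phase, and \Cref{lem:abundance-persists} cannot be applied. Your appeal to the geometric-sum argument of \Cref{lem:future-change-bound}, which permits a drift of up to $4n\Delta'$, does not even guarantee that they stay positive.

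The rounding also breaks your own deficit count. Rounding down the edges entering a good lowers its backorder further, so backorders are no longer confined to $[-\Delta'/n^2,0]$. The claim ``total deficit $<\Delta'$, at most one $\Delta'$-augmentation per good'' is then no longer valid.

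The paper avoids both problems by not rounding at all. The values that are not multiples of $\Delta'$ sit only on old abundant edges, which carry far more than $\Delta'$ and therefore never limit a backward $\Delta'$-augmentation or the halving repair. Every positive value created afterwards comes from $\Delta'$-augmentations and is a multiple of $\Delta'$. The only remaining defect, backorders in $[-\Delta'/n^2,0]$, is cleared by one augmentation per good.

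If you want to keep the rounding, you must do two things:
\begin{enumerate}
    \item Use the much stronger estimate proved inside \Cref{lem:getallocations-output}(4), namely $x'_{ij}>(n-1)\Delta\ge (n-1)n^2\Delta'$, which comfortably absorbs an $O(n\Delta')$ loss.
    \item Redo the deficit accounting. Rounding each of the at most $n-1$ forest edges down adds less than $n\Delta'$ of deficit in total, hence $O(n)$ extra augmentations.
\end{enumerate}

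\textbf{Smaller gap in part (3).} In the singleton case, condition~(1) of \Cref{def:delta-fertile-strong} must hold at the restarted prices $p'$, not at $p$. The paper obtains this case from $\Delta_{\{i^*\}}=\Delta'>0$, which by \Cref{alg:getparameter-aa} already forces $\alpha_{i^*}(p)>1$. But \textsc{GetPricesAA} may raise prices of goods that $i^*$ values, so $\alpha_{i^*}(p')\le 1$ is possible, and then $\{i^*\}$ is not fertile. You leave this case open. You need the paper's extra branch: if $\alpha_{i^*}(p')\le 1$, a progress event of type (B) has already occurred; otherwise your fertility argument applies. With these two repairs your argument coincides with the paper's.
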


\begin{proof}
Part~(1) is exactly \Cref{lem:getallocations-output}\textup{(4)}: every previously
$\Delta$-abundant edge $(i,j)$ satisfies
\[
    x'_{ij}>3n\Delta'.
\]
Thus every old abundant edge is $\Delta'$-abundant after the compressed restart.

We next explain the feasibility issue in part~(2). By
\Cref{lem:getallocations-output}, the allocation $x'$ satisfies
\[
    \bar c_i(x')\ge 0
    \qquad\forall i\in B,
\]
and
\[
    -\frac{\Delta'}{n^2}\le b_j(p',x')\le 0
    \qquad\forall j\in G.
\]
Moreover, if $x'_{ij}>0$, then $(i,j)\in A(\Delta)\subseteq E(p')$. Thus the only
difference from ordinary $\Delta'$-feasibility is the following harmless integrality
and backorder relaxation.

First, some old abundant edges may carry values that are not exact multiples of
$\Delta'$. This does not affect the scaling algorithm. These old abundant edges
remain strictly $\Delta'$-abundant, by part~(1), and hence they are never bottleneck
edges for the discrete $\Delta'$-augmentations. All new nonzero allocations created
after the restart are created by ordinary $\Delta'$-augmentations, and therefore are
multiples of $\Delta'$.

Second, some goods may have small negative backorder, in the interval
$[-\Delta'/n^2,0]$. During the ordinary repair from the relaxed state, augmentations
to such goods increase their backorders in increments of $\Delta'$. Since the deficit
of any such good is smaller than $\Delta'$, one augmentation makes its backorder
nonnegative and still at most $\Delta'$. Therefore the ordinary scaling procedure
restores the usual condition
\[
    0\le b_j(p',x')\le \Delta'
\]
without endangering any previously abundant edge. Consequently, after this repair,
the maintained state is an ordinary $\Delta'$-feasible state, except for the harmless
fixed offsets on old abundant edges described above. Since old abundant edges are
never bottlenecks and all newly created nonzero allocations are multiples of
$\Delta'$, the subsequent scaling analysis is unchanged.

It remains to prove part~(3). By \Cref{lem:restart-invariants}\textup{(4)}, one of
the following two alternatives holds.

First, there exists a component $J\in\mathcal C$ such that
\[
    -\frac{\Delta'}{n^2}
    \le
    s(p',r',J)
    \le
    -\frac{\Delta'}{2n^2}.
\]
Then
\[
    s(p',r',J)
    \le
    -\frac{\Delta'}{3n^2}.
\]
Thus, as soon as ordinary $\Delta'$-feasibility has been restored, the component
$J$ is $\Delta'$-fertile by condition~\textup{(2)} of
\Cref{def:delta-fertile-strong}. By \Cref{lem:fertile-gives-abundant}, within at
most $5\log_2 n+5$ further ordinary phases a new abundant edge appears.

Second, there exists a singleton buyer component $H^*=\{i^*\}$ such that
\[
    \bar e_{i^*}=\Delta'.
\]
If $\alpha_{i^*}(p')\le 1$, then the buyer has already reached the second type of
progress event, and this event is permanent. If instead
$\alpha_{i^*}(p')>1$, then at the restart allocation this singleton buyer has
effective cash
\[
    \bar c_{i^*}(x')=\bar e_{i^*}=\Delta'
    >
    \frac{\Delta'}{3n^2}.
\]
Therefore $\{i^*\}$ is $\Delta'$-fertile by condition~\textup{(1)} of
\Cref{def:delta-fertile-strong}. Applying \Cref{lem:fertile-gives-abundant}, within
at most $5\log_2 n+5$ ordinary phases either a new abundant edge appears or
$i^*$ permanently reaches $\alpha_{i^*}\le 1$. In either case a progress event
occurs.

Thus, after the relaxed restarted state has been restored to ordinary
$\Delta'$-feasibility, a progress event occurs within $O(\log n)$ ordinary scaling
phases.
\end{proof}

\begin{proof}[Proof of \Cref{thm:restart}]
Let
\[
    \Delta' := \textsc{GetParameterAA}(p,x,r,\Delta).
\]

\medskip
\noindent
\emph{Delayed-discovery branch.}
Suppose first that
\[
    \Delta'>\frac{\Delta}{n^2}.
\]
Then \textsc{MakeFertileAA} leaves the state and scale unchanged and sets
\[
    \mathrm{Threshold}\leftarrow \frac{\Delta}{n^5}.
\]
By \Cref{lem:delayed-discovery-progress}, the subsequent run of the main algorithm
discovers a new abundant edge within $O(\log n)$ ordinary scaling phases, and
\textsc{MakeFertileAA} is not re-invoked before that discovery. This is the
delayed-discovery guarantee.

\medskip
\noindent
\emph{Compressed-restart branch.}
Now suppose that
\[
    \Delta'\le \frac{\Delta}{n^2}.
\]
Then \textsc{MakeFertileAA} computes
\[
    (p',r')=\textsc{GetPricesAA}(p,x,r,\Delta')
\]
and
\[
    x'=\textsc{GetAllocationsAA}(p',r',\Delta'),
\]
and returns the compressed state $(p',x',r')$ at scale $\Delta'$.

By \Cref{lem:compressed-restart-progress}, every previously abundant edge remains
abundant at the new scale, and the relaxed restarted state can be brought back to
ordinary $\Delta'$-feasibility without destroying any previously abundant edge.
After feasibility is restored, the same lemma gives a progress event within
$O(\log n)$ additional ordinary scaling phases. Since
\[
    \Delta'\le \frac{\Delta}{n^2},
\]
this is exactly the compressed-restart guarantee.

\medskip
\noindent
\emph{Running time.}
It remains to bound the running time of \textsc{MakeFertileAA}. The procedures
\textsc{GetParameterAA} and \textsc{GetPricesAA} call
\textsc{SpecialPriceAA} at most once per component. Since there are at most $n$
components, the total number of calls to \textsc{SpecialPriceAA} made by these two
procedures is $O(n)$.

By \Cref{lem:specialprice-terminates}, each call to \textsc{SpecialPriceAA} performs
at most $n+|B|=O(n)$ iterations of its outer while-loop. Each iteration requires
an active-set update and a next-event computation, which can be implemented in
$O(m+n\log n)$ time using the same data structures as in the ordinary
price-and-augment step. Hence the total time spent in all calls to
\textsc{SpecialPriceAA} during one invocation of \textsc{MakeFertileAA} is
\[
    O\bigl(n^2(m+n\log n)\bigr).
\]
Finally, \textsc{GetAllocationsAA} solves tree-flow problems on the abundant graph,
which takes $O(m+n)$ total time over all components. This is dominated by the
previous bound.

Therefore one call to \textsc{MakeFertileAA} takes
\[
    O\bigl(n^2(m+n\log n)\bigr)
\]
time. This completes the proof of \Cref{thm:restart}.
\end{proof}